\documentclass[12pt,a4paper]{article}
\usepackage{float,amsmath,amssymb,amsthm}
\usepackage{graphicx}
\setlength{\topmargin}{-1.0cm}
\setlength{\oddsidemargin}{0cm}
\setlength{\evensidemargin}{0cm}
\setlength{\textheight}{23.0cm}
\setlength{\textwidth}{16cm}
\newtheorem{thm}{Theorem}
\newtheorem{df}[thm]{Definition}
\newtheorem{lem}[thm]{Lemma}
\newtheorem{cor}[thm]{Corollary}
\newtheorem{conj}[thm]{Conjecture}
\newtheorem{prop}[thm]{Proposition}

\newtheorem{cl}{Claim}

\sloppy

\begin{document}

\title{Feedback game on Eulerian graphs}
\author{Naoki Matsumoto\thanks{
Research Institute for Digital Media and Content, Keio University, Kanagawa, Japan, 
E-mail: {\tt naoki.matsumo10@gmail.com}}
and
Atsuki Nagao\thanks{
Ochanomizu University, Tokyo, Japan. E-mail: {\tt a-nagao@is.ocha.ac.jp}}
}
\date{}
\maketitle

\begin{abstract}
In this paper,
we introduce a two-player impartial game on graphs, called a {\em feedback game},
which is a variant of the generalized geography.
The feedback game can be regarded as 
the undirected edge geography with an additional rule
that the first player who goes back to the starting vertex wins the game.
We consider the feedback game on Eulerian graphs 
since the game ends only by going back to the starting vertex.
We first show that 
deciding the winner of the feedback game on Eulerian graphs is PSPACE-complete in general
even if its maximum degree is at most~4.
In the latter half of the paper,
we discuss the feedback game on two subclasses of Eulerian graphs,
triangular grid graphs and toroidal grid graphs.
\end{abstract}

\noindent
{\bf Keywords:} 
Feedback game; Edge geography; Eulerian graph; Triangular grid graph; Toroidal grid graph.

\medskip
\noindent
{\bf AMS 2010 Mathematics Subject Classification:} 05C57, 05C45, 91A43, 91A46.

\section{Introduction}

All graphs considered in this paper are finite, loopless and undirected
unless otherwise mentioned.
A graph $G$ is {\em Eulerian} if each vertex of $G$ has even degree.
For other basic terminology in graph theory, we refer to~\cite{D}.

In combinatorial game theory, an impartial game has been well studied for a long time.
So far, many interesting impartial graphs have been found;
e.g., Nim~\cite{nim}, Kayles~\cite{kayles} and Poset game~\cite{poset}.
The most famous result in this area
is the Sprague-Grundy theorem~\cite{grundy,sprague} stating that
every impartial game (under the normal play convention) is equivalent to 
the Grundy value (or Nimber) which plays an important role 
to determine whether the player can win the game from a given position.
There are also many interesting games played on graphs;
Vertex Nim~\cite{GNim}, Ramsey game~\cite{ramsey}, Voronoi game~\cite{TDU2011} and so on.
For more details and other topics, 
we refer the reader to survey several 
books and articles~\cite{gamegraph1,combgamegraph,BCGww,Conway-book}.

One of most popular impartial games on graphs
is the generalized geography.
The {\em generalized geography} is a two-player game played on a directed graph $D$
whose vertices are words and 
$xy \in A(D)$ if and only if the end character of a word $x$ is the first one of $y$,
where $A(D)$ is the set of arcs of $D$.
For example, if $x$ is ``Japan" and $y$ is ``Netherlands", 
then $xy \in A(D)$ but $yx \notin A(D)$.
In this setting,
the game begins from some starting word 
and both players alternately extend a directed path using unused words.
The first player unable to extend the directed path loses.
It is known that deciding the winner of
the generalized geography is PSPACE-complete~\cite{GG1980}.
Moreover, several variants of the generalized geography have been considered,
e.g., the planar generalized geography~\cite{GG1980}, 
edge geography~\cite{Sch1978} and undirected geography~\cite{edgegeo}.
It is also known that for each above variant,
the decision problem which player wins the game is PSPACE-complete in general,
except the undirected vertex geography.

In this paper, 
we consider a new impartial game on a graph, 
called a {\em feedback game}, which is a variant of the undirected edge geography.
(We sometimes call it a {\em game} for the simplicity of the paper.)

\begin{df}[Feedback game]\label{def_game}
There are two players; Alice and Bob, starting with Alice.
For a given connected graph $G$ with a starting vertex $s$,
a token is put on $s$.
They alternately move the token on a vertex $u$ to a neighbor $v$ of $u$
and then delete an edge $uv$.
The first player who able to move the token back to $s$ 
or to an isolated vertex (after removing an edge used by the last move) wins the game.
\end{df}

In this paper, we investigate the feedback game on Eulerian graphs.
Note that if a given connected graph $G$ is Eulerian,
then the game does not end until the token goes back to the starting vertex $s$,
and further observe
that Bob always wins the feedback game on any connected bipartite Eulerian graph
(cf.~\cite{edgegeo}):
Let $G$ be a connected bipartite Eulerian graph,
and so, all vertices of $G$ are properly colored by two colors, black and white.
Without loss of generality, we may suppose that the starting vertex is colored by black.
Throughout the game on $G$, 
a token is always moved to a white (resp., black) vertex by Alice (resp., Bob).
Thus Bob necessarily wins the game.

On the other hand, for a given connected Eulerian graph $G$,
the decision problem which player wins the feedback game on $G$ is PSPACE-complete
even if the maximum degree of $G$ is at most~4 (Theorem~\ref{pspace}).
Therefore, a main study on the feedback game
is to determine the winner of the game on a connected Eulerian graph 
with more additional restrictions.

The remaining of the paper is organized as follows.
In the next section, we prove the PSPACE-completeness of the feedback game.
In Section~\ref{sec_victory},
we introduce an {\em even kernel} (resp., an {\em even kernel graph}),
first introduced in~\cite{edgegeo},
which is an useful subset (resp., subgraph)
guaranteeing the existence of a winning strategy of the second player.
In Sections~\ref{sec_trigrid} and~\ref{sec_toro_grid},
focusing on {\em triangular grid graphs} and {\em toroidal grid graphs},
we determine the winner of the feedback game on several subclasses of them.

\section{Complexity of the feedback game}\label{sec_complexity}

Because the feedback game can be seen as a variant of the undirected edge geography, 
it is a simple idea to construct a reduction from the undirected edge geography to the feedback game.

\begin{df}[Undirectred/Directed edge geography]\label{def_edge_geo}
There are two players; Alice and Bob, starting with Alice.
For a given connected undirected/directed graph $G$ with a starting vertex $s$,
a token is put on $s$.
They alternately move the token on a vertex $u$ to a neighbor/out-neighbor 
$v$ of $u$
and then delete an edge/arc $uv$.
The first player who able to move the token to an isolated vertex 
(after removing an edge/arc used by the last move) wins the game.
\end{df}

The directed edge geography is known as PSPACE-complete~\cite{Sch1978} via a reduction from TQBF, 
and the undirected edge geography is also known as 
PSPACE-complete~\cite{edgegeo} via a reduction from the directed edge geography.
Here TQBF (true quantified Boolean formula) is, 
given a quantified formula, the determination of whether there exists an assignment to the input variables such that the formula shows $1$.

The feedback game is different from these edge geographies on the winning rule.
Since a player wins when a token reaches the starting vertex, 
it is difficult to reduce from an instance of undirected edge geographies to 
that of the feedback game.
To avoid this difficulty we use the same idea about reduction from TQBF 
to the directed edge geography, and add a gadget before making the graph undirected.

\begin{thm}\label{pspace}
It is {\rm PSPACE}-complete to determine whether there exists a strategy 
that the first player wins a feedback game,
even if the given graph is Eulerian.
\end{thm}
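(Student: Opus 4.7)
The plan is to follow the reduction skeleton hinted by the authors: start from TQBF, pass through Schaefer's directed edge geography reduction, modify it to suit the feedback winning rule, convert the resulting instance to an undirected graph via a direction-enforcing gadget, and finally patch the construction so the final graph is connected, Eulerian and of maximum degree at most $4$. Membership in PSPACE is routine: every move deletes an edge, so the game lasts at most $|E(G)|$ turns, and the standard alternating depth-first game-tree traversal runs in polynomial space.

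For hardness, I would begin with an instance $\varphi$ of TQBF and apply Schaefer's reduction to obtain a directed graph $D$ with a starting vertex $s$ such that the first player wins the directed edge geography on $(D,s)$ if and only if $\varphi$ is true. To translate ``unable to move'' into ``opponent is forced to bring the token back to $s$'', I would attach, at each would-be sink $t$ of $D$, a short directed return path $t \to \cdots \to s$ whose length has parity chosen so that the player who is stuck at $t$ in Schaefer's game is exactly the player who loses in the feedback game. This preserves the game-theoretic value on the Schaefer subgraph while redirecting the losing move into a forced delivery of the token to $s$.

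The next step is to convert the directed instance into an undirected one by replacing each arc $uv$ with a small ``diode'' gadget analogous to the one used in the PSPACE-completeness proof for undirected edge geography~\cite{edgegeo}, so that under optimal play the token must traverse each gadget in the intended direction. To satisfy the Eulerian condition and the max-degree-$4$ bound, I would make every internal vertex of the gadget have degree $2$ or $4$ and every external port have even prescribed degree; any residual odd-degree vertex can be repaired by gluing on an even structure such as a short cycle. Finally I would verify polynomial-time computability, connectivity, Eulerianness, the degree bound, and that the feedback game on the constructed graph faithfully simulates the directed edge geography on $(D,s)$, hence decides $\varphi$.

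The hard part will be designing a single diode gadget that simultaneously (i) enforces direction in the feedback setting, (ii) has every vertex of even degree at most~$4$, (iii) introduces no shortcut cycle through which either player could return the token to $s$ prematurely, and (iv) contributes a controlled, known number of moves no matter in which order its edges are used, since the feedback winning condition is sensitive to the parity of the total move count before the token reaches $s$.
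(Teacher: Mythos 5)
Your overall skeleton (TQBF $\to$ Schaefer's directed edge geography $\to$ undirected via a diode gadget $\to$ Eulerian patching, plus the routine PSPACE upper bound) is the same as the paper's, but two of your steps have concrete problems. First, the return-path modification at ``would-be sinks'' is both unnecessary and hazardous. The feedback game is won either by returning the token to $s$ \emph{or} by moving it to an isolated vertex; since the starting vertex of Schaefer's instance has in-degree $0$, returning to $s$ is impossible there, so the feedback game on that directed graph already coincides verbatim with directed edge geography --- this is exactly the observation the paper makes, and it removes any need to ``translate'' the winning condition. Your paths $t \to \cdots \to s$ are also attached at the wrong places: edge geography can end at any vertex whose out-arcs are exhausted, not only at out-degree-$0$ sinks, so your parity argument would have to be carried out at every vertex, and the new arcs into $s$ create fresh ways to end the game (and, after the diode replacement, each arc costs several moves, which scrambles the very parity you are trying to control).

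Second, your Eulerian repair fails as stated: gluing a cycle onto an odd-degree vertex adds $2$ to its degree and leaves it odd. Since the number of odd-degree vertices is even, one must instead pair them up and attach structures contributing an \emph{odd} number of new edges at each such vertex, and then prove the attachments are game-irrelevant. The paper does this by hanging a path $x_iy_iz_i$ off each odd-degree vertex $x_i$, wiring the $z_i$'s together and to a $4$-cycle through $s$ so that every new vertex gets even degree, and observing that the move $x_i \to y_i$ is suicidal, so optimal play never enters the attachment. Finally, you defer the design of the direction-enforcing gadget entirely; the paper simply reuses the pseudo-arc gadget of Fraenkel--Scheinerman--Ullman and absorbs its odd-degree vertices into the same repair step, so no new gadget is needed. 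With these three repairs your argument would essentially become the paper's proof.
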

\begin{proof}
We can see that this determination is in PSPACE, 
since we can check the winner using a DFS-like algorithm that recurs $O(|E|)$ times and uses $O(|E|)$ spaces on each recursion.

Now we reduce any instance of TQBF to an instance of determining the winner on the feedback game. 
The first step is the same as the famous reduction 
from TQBF to the directed edge geography~\cite{Sch1978} 
and we obtain a graph $H$ as an instance.
Note that $\Delta(H)=3$~\cite{GG1980}, 
where $\Delta(H)$ denotes the maximum degree of $H$,
and that the obtained graph $H$ has only one vertex $s$ 
with in-degree~0 and out-degree~2. 

By the definition of the feedback game,
the winner can also win in the view of 
the ``directed version" of the feedback game on $H$.
(Note that any player wins without going back to $s$; 
the player wins only when the other cannot move anymore.)
From now on, as shown in Figure~\ref{pseudoarc},
we use pseudo-arcs to make a reduction to 
the ``undirected version" of the feedback game~\cite{edgegeo}.

\begin{figure}[htb]
\centering
\unitlength 0.1in
\begin{picture}( 37.8500,  9.0000)(  4.6500,-16.5000)
%
\special{pn 8}%
\special{sh 1.000}%
\special{ar 600 1200 50 50  0.0000000 6.2831853}%
%
\special{pn 8}%
\special{sh 1.000}%
\special{ar 1406 1206 50 50  0.0000000 6.2831853}%
%
\special{pn 8}%
\special{pa 610 1200}%
\special{pa 1340 1200}%
\special{fp}%
\special{sh 1}%
\special{pa 1340 1200}%
\special{pa 1274 1180}%
\special{pa 1288 1200}%
\special{pa 1274 1220}%
\special{pa 1340 1200}%
\special{fp}%
\put(6.0000,-10.6000){\makebox(0,0){$a$}}%
\put(14.0500,-10.6500){\makebox(0,0){$b$}}%
%
\special{pn 8}%
\special{pa 1800 1200}%
\special{pa 2200 1200}%
\special{dt 0.045}%
\special{sh 1}%
\special{pa 2200 1200}%
\special{pa 2134 1180}%
\special{pa 2148 1200}%
\special{pa 2134 1220}%
\special{pa 2200 1200}%
\special{fp}%
%
\special{pn 8}%
\special{sh 1.000}%
\special{ar 2600 1200 50 50  0.0000000 6.2831853}%
%
\special{pn 8}%
\special{sh 1.000}%
\special{ar 3800 1200 50 50  0.0000000 6.2831853}%
%
\special{pn 8}%
\special{sh 1.000}%
\special{ar 3400 800 50 50  0.0000000 6.2831853}%
%
\special{pn 8}%
\special{sh 1.000}%
\special{ar 3000 1200 50 50  0.0000000 6.2831853}%
%
\special{pn 8}%
\special{sh 1.000}%
\special{ar 4200 1200 50 50  0.0000000 6.2831853}%
%
\special{pn 8}%
\special{pa 2600 1200}%
\special{pa 3000 1200}%
\special{fp}%
%
\special{pn 8}%
\special{sh 1.000}%
\special{ar 3200 1600 50 50  0.0000000 6.2831853}%
%
\special{pn 8}%
\special{sh 1.000}%
\special{ar 3600 1600 50 50  0.0000000 6.2831853}%
%
\special{pn 8}%
\special{pa 4200 1200}%
\special{pa 3800 1200}%
\special{fp}%
%
\special{pn 8}%
\special{pa 3800 1200}%
\special{pa 3400 800}%
\special{fp}%
%
\special{pn 8}%
\special{pa 3400 800}%
\special{pa 3000 1200}%
\special{fp}%
%
\special{pn 8}%
\special{pa 3000 1200}%
\special{pa 3200 1600}%
\special{fp}%
%
\special{pn 8}%
\special{pa 3200 1600}%
\special{pa 3600 1600}%
\special{fp}%
%
\special{pn 8}%
\special{pa 3800 1200}%
\special{pa 3600 1600}%
\special{fp}%
%
\special{pn 8}%
\special{pa 3600 1600}%
\special{pa 3400 800}%
\special{fp}%
\put(26.0000,-10.6000){\makebox(0,0){$a$}}%
\put(42.1000,-10.6000){\makebox(0,0){$b$}}%
\end{picture}%
\caption{Replacing an arc with a pseudo-arc}
\label{pseudoarc}
\end{figure}
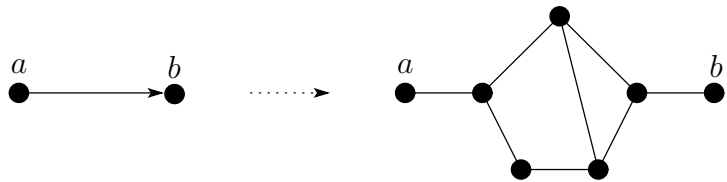

We make the undirected graph $H'$ obtained as above be Eulerian.
Let $D = \{x_1,x_2,\dots,x_{2p}\}$ for $p \geq 1$
be the set of vertices in $V(H')$ of odd degree.
First, we add a path $abc$ and two edges $as$ and $cs$, that is, $sabc$ forms a 4-cycle.
Note that the first player does not use the edge $sa$ nor $sc$ at the start on the game; 
that immediately leads to a suicide. 
Next, for each $x_i$ where $1 \leq i \leq 2p$,
we make a path $P_i=x_iy_iz_i$ of length~2 with adding two vertices $y_i$ and $z_i$.
Finally, we add edges $z_1a$, $z_2a$, $z_{2i-1}y_{2i-3}$ and $z_{2i}y_{2i-3}$, 
where $2 \leq i \leq p $.
Clearly, the resulting graph $G$ is Eulerian.
Furthermore, it is not difficult to see that 
the winner of the feedback game on $G$ is the same as that of $H'$;
note that the player who moves the token from a vertex $x_i$ 
(which is an odd in $H'$) to $y_i$ loses the game.
\end{proof}

Note that, a graph we obtain from these reductions has no vertex degree greater than $3$.
When we discuss Eulerian graphs, a graph can be added vertices and edges 
and can have vertices degree only $2$ or $4$.
Thus, we obtain the following corollary.

\begin{cor}\label{cor-pspace}
It is {\rm PSPACE}-complete to determine whether there exists a strategy 
that the first player wins a feedback game,
even if the given graph is a connected graph with maximum degree at most~$3$ or 
a connected Eulerian graph with maximum degree at most~$4$.
\end{cor}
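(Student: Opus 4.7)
The plan is to track the maximum degree through each step of the reduction used in the proof of Theorem~\ref{pspace}; the two halves of the corollary correspond to stopping the construction at two different points. For the max-degree-$3$ assertion, I would first recall that the directed graph $H$ produced by the TQBF-to-directed-edge-geography reduction satisfies $\Delta(H)\le 3$ (the fact already cited in the proof of Theorem~\ref{pspace}), and that the pseudo-arc gadget of Figure~\ref{pseudoarc} introduces only new vertices of degree at most $3$ while contributing exactly one undirected incidence at each of the two endpoints of the replaced arc. Consequently the intermediate undirected graph $H'$ satisfies $\Delta(H')\le 3$. Since the proof of Theorem~\ref{pspace} already establishes that the winner of the feedback game on $H'$ agrees with the winner of the directed edge geography on $H$, stopping the construction at $H'$ yields a PSPACE-complete instance of the feedback game with maximum degree at most $3$ (dropping the Eulerian requirement).

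For the Eulerian max-degree-$4$ assertion, I would reuse the Eulerian completion $G$ from the proof and carry out a direct degree count on the newly created or newly incident vertices: $\deg_G(a)=4$ with neighbors $\{b,s,z_1,z_2\}$; $\deg_G(b)=\deg_G(c)=2$; $\deg_G(s)=4$ after the two edges $as,cs$ are added to the two incidences $s$ already has in $H'$; each $z_i$ has degree exactly $2$ (one edge along its path $P_i$ together with one attached edge to $a$ or to some $y_j$); and each $y_{2i-3}$ for $2\le i\le p$ collects exactly the four neighbors $\{x_{2i-3},z_{2i-3},z_{2i-1},z_{2i}\}$, while every other $y_i$ has degree $2$. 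The original vertices of $H'$ gain at most one incident edge (from the path $P_i$ when they are of odd degree), so their degrees remain at most $4$. Therefore $\Delta(G)\le 4$, and Theorem~\ref{pspace} closes the argument.

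The only point that requires careful bookkeeping is the odd-degree pairing used in the Eulerian step: one must verify that each $y_j$ collects at most two extra edges from the prescription $z_{2i-1}y_{2i-3}$, $z_{2i}y_{2i-3}$, and that the small cases $p=1,2$ behave as claimed (for $p=1$ the pairing is vacuous and only the $sabc$ cycle contributes to the degree of $a$). This is a routine index-check rather than a genuine combinatorial obstacle, and once it is done the corollary follows immediately from the degree-bounded reductions described above.
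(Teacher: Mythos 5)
Your proposal is correct and follows the same route as the paper: the paper justifies the corollary by observing that the intermediate undirected graph $H'$ from the proof of Theorem~\ref{pspace} has maximum degree~$3$, while the Eulerian completion $G$ only ever creates vertices of degree~$2$ or~$4$. Your degree bookkeeping (one incidence per pseudo-arc endpoint, the $sabc$ cycle, the paths $P_i$, and the pairing $z_{2i-1}y_{2i-3}$, $z_{2i}y_{2i-3}$ giving each affected $y_j$ exactly two extra edges) is accurate and simply makes explicit what the paper asserts without detail.
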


\section{Even kernel graph}\label{sec_victory}

Remember that 
Bob wins the feedback game on every connected bipartite Eulerian graph.
Focusing on this fact, 
Fraenkel et al.~\cite{edgegeo} introduced a good concept, called an {\em even kernel}.

\begin{df}[Even kernel]\label{def_ekernel}
Let $G$ be a connected graph with a starting vertex $s$. 
An {\em even kernel} of $G$ with respect to $s$
is a non-empty subset $S \subseteq V(G)$ such that
\begin{enumerate}
\item $s \in S$,
\item no two elements of $S$ are adjacent, and
\item every vertex not in $S$ is adjacent to an even number (possibly $0$) of vertices in $S$.
\end{enumerate}
\end{df}

It is known in~\cite{Fraen1994} that 
to find an even kernel of a given graph is NP-complete
even if the graph is bipartite or its maximum degree is at most~3.
To make the even kernel be easy to handle,
we introduce a graph version concept of the even kernel, 
called an {\em even kernel graph}.
For a graph $G$ and two disjoint subsets $A,B \subseteq V(G)$,
$E_G(A,B)$ denotes the set of edges between $A$ and $B$
(i.e., one of ends of the edge in the set belongs to $A$ and the other belongs to $B$).

\begin{df}[Even kernel graph]\label{def_vic}
Let $G$ be a connected Eulerian graph with a starting vertex $s$.
An {\em even kernel graph} with respect to $s$
is a bipartite subgraph $H_s$ with the bipartition $V(H_s)=B \cup W$ 
and $E(H_s)=E_G(B,W)$, 
where $B$ is an even kernel of $G$ and 
$W$ is a superset of the set 
$N_G(B) = \{v \in V(G) \setminus B : v $ is adjacent to a vertex $u \in B\}$.
\end{df}

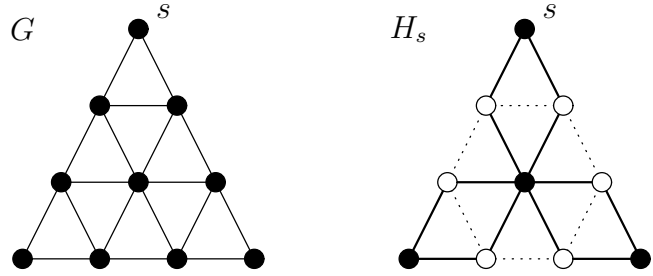
\begin{figure}[htb]
\centering
\unitlength 0.1in
\begin{picture}( 33.8500, 14.2500)( 10.6500,-20.5000)
%
\special{pn 8}%
\special{sh 1.000}%
\special{ar 1800 800 50 50  0.0000000 6.2831853}%
%
\special{pn 8}%
\special{sh 1.000}%
\special{ar 1600 1200 50 50  0.0000000 6.2831853}%
%
\special{pn 8}%
\special{sh 1.000}%
\special{ar 2000 1200 50 50  0.0000000 6.2831853}%
%
\special{pn 8}%
\special{sh 1.000}%
\special{ar 1400 1600 50 50  0.0000000 6.2831853}%
%
\special{pn 8}%
\special{sh 1.000}%
\special{ar 1800 1600 50 50  0.0000000 6.2831853}%
%
\special{pn 8}%
\special{sh 1.000}%
\special{ar 2200 1600 50 50  0.0000000 6.2831853}%
%
\special{pn 8}%
\special{sh 1.000}%
\special{ar 1200 2000 50 50  0.0000000 6.2831853}%
%
\special{pn 8}%
\special{sh 1.000}%
\special{ar 1600 2000 50 50  0.0000000 6.2831853}%
%
\special{pn 8}%
\special{sh 1.000}%
\special{ar 2000 2000 50 50  0.0000000 6.2831853}%
%
\special{pn 8}%
\special{sh 1.000}%
\special{ar 2400 2000 50 50  0.0000000 6.2831853}%
%
\special{pn 8}%
\special{pa 2400 2000}%
\special{pa 1200 2000}%
\special{fp}%
%
\special{pn 8}%
\special{pa 1200 2000}%
\special{pa 1800 800}%
\special{fp}%
%
\special{pn 8}%
\special{pa 1800 800}%
\special{pa 2400 2000}%
\special{fp}%
%
\special{pn 8}%
\special{pa 2200 1600}%
\special{pa 1400 1600}%
\special{fp}%
%
\special{pn 8}%
\special{pa 1600 1200}%
\special{pa 2000 1200}%
\special{fp}%
%
\special{pn 8}%
\special{pa 2000 1200}%
\special{pa 1600 2000}%
\special{fp}%
%
\special{pn 8}%
\special{pa 1400 1600}%
\special{pa 1600 2000}%
\special{fp}%
%
\special{pn 8}%
\special{pa 1600 1200}%
\special{pa 2000 2000}%
\special{fp}%
%
\special{pn 8}%
\special{pa 2000 2000}%
\special{pa 2200 1600}%
\special{fp}%
%
\special{pn 8}%
\special{sh 1.000}%
\special{ar 3800 800 50 50  0.0000000 6.2831853}%
%
\special{pn 8}%
\special{sh 1.000}%
\special{ar 3800 1600 50 50  0.0000000 6.2831853}%
%
\special{pn 8}%
\special{sh 1.000}%
\special{ar 3200 2000 50 50  0.0000000 6.2831853}%
%
\special{pn 8}%
\special{sh 1.000}%
\special{ar 4400 2000 50 50  0.0000000 6.2831853}%
\put(19.3000,-7.1000){\makebox(0,0){$s$}}%
\put(39.3000,-7.1000){\makebox(0,0){$s$}}%
\put(12.0000,-8.0000){\makebox(0,0){$G$}}%
\put(32.0000,-8.0000){\makebox(0,0){$H_s$}}%
%
\special{pn 13}%
\special{pa 3800 800}%
\special{pa 3600 1200}%
\special{fp}%
%
\special{pn 13}%
\special{pa 3600 1200}%
\special{pa 3800 1600}%
\special{fp}%
%
\special{pn 13}%
\special{pa 3800 1600}%
\special{pa 4000 1200}%
\special{fp}%
%
\special{pn 13}%
\special{pa 4000 1200}%
\special{pa 3800 800}%
\special{fp}%
%
\special{pn 13}%
\special{pa 3800 1600}%
\special{pa 3400 1600}%
\special{fp}%
%
\special{pn 13}%
\special{pa 3400 1600}%
\special{pa 3200 2000}%
\special{fp}%
%
\special{pn 13}%
\special{pa 3200 2000}%
\special{pa 3600 2000}%
\special{fp}%
%
\special{pn 13}%
\special{pa 3600 2000}%
\special{pa 3800 1600}%
\special{fp}%
%
\special{pn 13}%
\special{pa 3800 1600}%
\special{pa 4200 1600}%
\special{fp}%
%
\special{pn 13}%
\special{pa 4200 1600}%
\special{pa 4400 2000}%
\special{fp}%
%
\special{pn 13}%
\special{pa 4400 2000}%
\special{pa 4000 2000}%
\special{fp}%
%
\special{pn 13}%
\special{pa 4000 2000}%
\special{pa 3800 1600}%
\special{fp}%
%
\special{pn 8}%
\special{pa 4000 1200}%
\special{pa 3600 1200}%
\special{dt 0.045}%
%
\special{pn 8}%
\special{pa 3600 1200}%
\special{pa 3400 1600}%
\special{dt 0.045}%
%
\special{pn 8}%
\special{pa 3400 1600}%
\special{pa 3600 2000}%
\special{dt 0.045}%
%
\special{pn 8}%
\special{pa 3600 2000}%
\special{pa 4000 2000}%
\special{dt 0.045}%
%
\special{pn 8}%
\special{pa 4000 2000}%
\special{pa 4200 1600}%
\special{dt 0.045}%
%
\special{pn 8}%
\special{pa 4200 1600}%
\special{pa 4000 1200}%
\special{dt 0.045}%
%
\special{pn 8}%
\special{sh 0}%
\special{ar 4000 1200 50 50  0.0000000 6.2831853}%
%
\special{pn 8}%
\special{sh 0}%
\special{ar 3600 1200 50 50  0.0000000 6.2831853}%
%
\special{pn 8}%
\special{sh 0}%
\special{ar 3400 1600 50 50  0.0000000 6.2831853}%
%
\special{pn 8}%
\special{sh 0}%
\special{ar 3600 2000 50 50  0.0000000 6.2831853}%
%
\special{pn 8}%
\special{sh 0}%
\special{ar 4000 2000 50 50  0.0000000 6.2831853}%
%
\special{pn 8}%
\special{sh 0}%
\special{ar 4200 1600 50 50  0.0000000 6.2831853}%
\end{picture}%
\caption{An even kernel graph $H_s$ of a connected Eulerian graph $G$}
\label{exam_vic}
\end{figure}

For example, see Figure~\ref{exam_vic}.
The right of the figure, the graph $H_s$, is an even kernel graph of the graph $G$ 
with a starting vertex $s$.
The bold lines are edges of $H_s$ and dotted lines are ones in $E(G)\setminus E(H_s)$,
and black vertices in $B$ (where $s \in B$) and white ones in $W$.
Observe that for every vertex $v\in B$, 
all edges incident to $v$ in $G$ belong to $E(H_s)$.

\bigskip
\noindent
{\bf Remark.}
If $G$ has an even kernel, 
then $G$ always has an even kernel graph.
In Figure~\ref{exam_vic}, $H_s$ is a spanning subgraph of $G$, 
but an even kernel graph is not necessarily spanning in general.
Furthermore, 
the existence of even kernel graphs depends on the position of a starting vertex $s$.
In fact, it is easy to see that
the graph $G$ shown in Figure~\ref{exam_vic} 
has no even kernel graph if its starting vertex is of degree~4.
\bigskip

By the definition,
we see the existence of an even kernel (graph) of a connected Eulerian graph $G$
guaranteeing that Bob wins the game on $G$.

\begin{lem}[\cite{edgegeo}]\label{lem_vic}
Let $G$ be a connected graph with a starting vertex $s$. 
If $G$ has an even kernel with respect to $s$, then Bob can win the game on $G$.
\end{lem}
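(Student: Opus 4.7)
The plan is to exhibit an explicit winning strategy for Bob based on the even kernel $S$, using the following invariant: after each of Bob's moves (and trivially before Alice's first move), the token lies on a vertex of $S$, and for every $v \in V(G) \setminus S$ the number of edges from $v$ to $S$ remaining in the current graph is even. This holds at the start by the definition of an even kernel, since $s \in S$ and every $v \notin S$ has an even number of neighbors (equivalently, edges) in $S$ by conditions~(1) and~(3) of Definition~\ref{def_ekernel}. Bob's strategy is simply to move the token back into $S$ along any available edge immediately after each of Alice's moves.

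The next step is to verify that the invariant is preserved through one round. After Bob's move the token sits at some $w \in S$; since $S$ is independent by condition~(2), any legal move by Alice must go to a vertex $v \notin S$ along an edge $wv$, which drops the count of $v$'s edges into $S$ from even to odd while leaving all other vertices' counts unchanged. This count is then positive, so Bob can move the token from $v$ to some $u \in S$ along an edge $vu$, restoring evenness at $v$ and leaving every other count untouched. Thus Bob always has a reply and the invariant is re-established.

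It then remains to show that Alice cannot win while Bob eventually does. The invariant forces Alice to move into $V(G) \setminus S$, so she can never land on $s \in S$; and if her move were to isolate her target $v$, then $v$ would have had exactly one incident edge just before the move, namely the edge $wv$, which would imply $v$ had exactly one edge to $S$, contradicting the evenness enforced by the invariant. Hence Alice never fulfills a winning condition. Since the game must terminate (each move deletes an edge), it ends either when Alice has no legal move (she loses by the normal play convention) or when Bob's reply lands on $s$ or isolates Bob's own target, and in every such scenario Bob wins.

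The main delicate point is the bookkeeping: the invariant has to be formulated in terms of the edge count from $v$ into $S$ (not merely the existence of a neighbor in $S$), so that the argument ruling out Alice's isolation move genuinely forces at least two remaining edges from $v$ into $S$ whenever Alice is able to move. Once this is set up, the remainder of the argument is a standard parity/pairing strategy.
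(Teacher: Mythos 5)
Your proof is correct: the invariant on the parity of the number of \emph{remaining} edges from each vertex outside $S$ into $S$ is exactly the right bookkeeping, and it simultaneously guarantees that Bob always has a reply into $S$, that Alice can never reach $s$, and that Alice can never isolate the token. The paper itself gives no proof of this lemma (it only cites Fraenkel--Scheinerman--Ullman), and your argument is the standard parity/pairing strategy underlying that reference, so there is nothing to add.
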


We conclude this section 
with showing that the converse of Lemma~\ref{lem_vic} is not true 
even if $G$ is Eulerian, that is, 
a connected Eulerian graph $G$ does not necessarily have an even kernel graph 
even if Bob can win the game on $G$.

\begin{prop}\label{prop_novic}
There exist infinitely many connected Eulerian graphs 
without an even kernel graph 
on which Bob wins the game (with respect to a prescribed starting vertex).
\end{prop}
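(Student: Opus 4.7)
My plan has two halves: first, exhibit one concrete connected Eulerian graph $G_0$ with a starting vertex $s_0$ enjoying both required properties; and second, inflate $G_0$ into an infinite family $\{G_n\}_{n\ge 1}$ preserving both properties.

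For the base pair $(G_0,s_0)$ I would take the graph drawn in Figure~\ref{exam_vic} and relocate the starting vertex to a degree-$4$ vertex, exactly as flagged in the remark after Definition~\ref{def_vic}; that remark already asserts the absence of an even kernel graph. To verify this rigorously, I would assume $B\subseteq V(G_0)$ satisfies Definition~\ref{def_ekernel} with $s_0\in B$, note that condition~(2) excludes every neighbour of $s_0$ from $B$, and then use condition~(3) --- both on $s_0$ (which requires $W\supseteq N_{G_0}(B)$ to contain all of $N_{G_0}(s_0)$) and on each remaining candidate vertex --- to run a short exhaustive case analysis that eliminates every possibility. For the Bob-wins claim I would exhibit a pairing of $E(G_0)$ (or of the portion accessible during play) so that Bob's reply is always the mate of Alice's last edge, arranged so that after each Bob-move the token is away from $s_0$ and the remaining $s_0$-incident edges form an even set; since $G_0$ is Eulerian the game must end at $s_0$, so Alice cannot return there first and Bob wins.

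For the infinite family, fix a vertex $v\ne s_0$ of $G_0$ and form $G_n$ by attaching $n$ vertex-disjoint copies of $C_4$, each sharing only $v$ with $G_0$. Then $\deg_{G_n}(v)=\deg_{G_0}(v)+2n$ is still even, so $G_n$ is connected and Eulerian, and the $G_n$ have pairwise distinct orders, giving infinitely many graphs. To verify that no even kernel graph exists with respect to $s_0$, I would show that any putative even kernel $B$ of $G_n$ induces, on each pendant, one of a few easily enumerated configurations (depending on whether $v\in B$), and that in every such configuration the set $B\cap V(G_0)$ still satisfies Definition~\ref{def_ekernel} in $G_0$, contradicting the base case. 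To verify that Bob still wins, I would interleave strategies: when Alice moves inside $G_0$, Bob uses his $G_0$-strategy; when Alice enters a pendant through $v$, Bob uses the bipartite ``Bob-wins'' strategy within that $C_4$, as noted in the introduction. Each pendant visit is then an even-length closed walk from $v$ to $v$ controlled by Bob, so the $G_0$-parity is preserved.

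The main obstacle I expect is this interleaving: one must argue that Alice cannot exploit her freedom to choose which pendant to enter in order to desynchronize Bob's $G_0$-strategy, and that Bob's responses inside a pendant really do return the token to $v$ rather than getting stranded with pendant-edges remaining. The cleanest formulation is probably a single global pairing of $E(G_n)$ --- pendant-edges paired as in the bipartite proof, $G_0$-edges paired as in the base case --- which makes both issues fall out of the pairing's structural properties simultaneously. The detailed case check for the even-kernel restriction and the verification of the base-case pairing are the two most laborious sub-steps.
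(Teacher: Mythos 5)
Your plan has a fatal flaw in the base case. You propose to take the graph of Figure~\ref{exam_vic} (the triangular grid $T_3$) with the starting vertex moved to a degree-$4$ vertex, citing the remark after Definition~\ref{def_vic}. But that remark only asserts the \emph{absence of an even kernel graph} for such a starting vertex; it says nothing about who wins, and in fact Alice wins there. Every degree-$4$ vertex of $T_3$ is adjacent to a degree-$2$ vertex that lies in a common triangle with it (e.g.\ $s=v^1_0$ is adjacent to $v^0_0$, and $v^0_0$'s only other neighbour $v^1_1$ is also adjacent to $s$). Alice moves $s \to v^0_0$, Bob is forced along the unique remaining edge $v^0_0 \to v^1_1$, and Alice moves $v^1_1 \to s$ and wins in three moves. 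So your chosen $(G_0,s_0)$ satisfies the ``no even kernel graph'' half of the proposition but fails the ``Bob wins'' half, and the proposed ``pairing strategy for Bob'' cannot exist. This is precisely why the proposition is nontrivial: one must simultaneously certify a Bob win \emph{without} the even-kernel machinery. The paper does this by building a bespoke family $G_k$ (a $C_{2k}$ and a $C_{4k}$ joined by a perfect matching-like set of edges, plus a hub $s$ joined to all of $C_{4k}$), proving Bob wins by an explicit forcing argument that drives the token around $C_{2k}$, and separately propagating the even-kernel conditions from $s$ outward to reach a contradiction.

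Your second half --- inflating one valid example into infinitely many by attaching vertex-disjoint $C_4$'s at a non-starting vertex $v$ --- is a reasonable and essentially workable idea (each pendant contributes an even number of $B$-neighbours to $v$ and is consumed in a forced four-move excursion returning the token to $v$ on Bob's move, so both properties lift). The paper does not need it, since its family is already infinite by construction, but the inflation would be a legitimate alternative \emph{if} you first supplied a correct base example, e.g.\ the paper's $G_2$. As written, the proposal does not prove the proposition.
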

\begin{proof}
We first give a construction of desired connected Eulerian graphs.
Prepare two even cycles $C_{2k} = u_0u_1u_2 \dots u_{2k-1}$
and $C_{4k} = v_0v_1v_2 \dots v_{4k-1}$ for some $k \geq 2$.
Add edges $u_iv_{2i}$ and $u_iv_{2i+1}$ for any $i \in \{0,1,\dots,2k-1\}$.
Finally, we add a starting vertex $s$ 
so that $s$ and $v_{j}$ are adjacent for any $j \in \{0,1,\dots,4k-1\}$.
The resulting graph is denoted by $G_k$; for example, see Figure~\ref{exam_novic}.

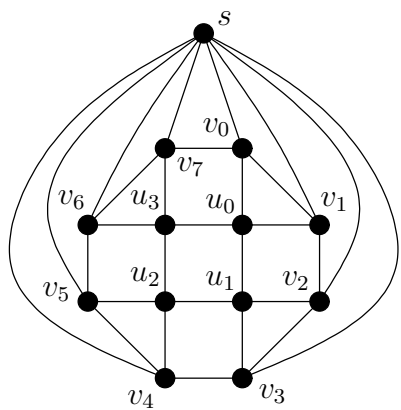
\begin{figure}[htb]
\centering
\unitlength 0.1in
\begin{picture}( 20.1000, 20.1500)( 16.0000,-24.5500)
%
\special{pn 8}%
\special{sh 1.000}%
\special{ar 2606 606 50 50  0.0000000 6.2831853}%
%
\special{pn 8}%
\special{sh 1.000}%
\special{ar 2406 1206 50 50  0.0000000 6.2831853}%
%
\special{pn 8}%
\special{sh 1.000}%
\special{ar 2806 1206 50 50  0.0000000 6.2831853}%
%
\special{pn 8}%
\special{sh 1.000}%
\special{ar 3206 1606 50 50  0.0000000 6.2831853}%
%
\special{pn 8}%
\special{sh 1.000}%
\special{ar 3206 2006 50 50  0.0000000 6.2831853}%
%
\special{pn 8}%
\special{sh 1.000}%
\special{ar 2806 2406 50 50  0.0000000 6.2831853}%
%
\special{pn 8}%
\special{sh 1.000}%
\special{ar 2406 2406 50 50  0.0000000 6.2831853}%
%
\special{pn 8}%
\special{sh 1.000}%
\special{ar 2006 2006 50 50  0.0000000 6.2831853}%
%
\special{pn 8}%
\special{sh 1.000}%
\special{ar 2006 1606 50 50  0.0000000 6.2831853}%
%
\special{pn 8}%
\special{sh 1.000}%
\special{ar 2406 1606 50 50  0.0000000 6.2831853}%
%
\special{pn 8}%
\special{sh 1.000}%
\special{ar 2406 2006 50 50  0.0000000 6.2831853}%
%
\special{pn 8}%
\special{sh 1.000}%
\special{ar 2806 2006 50 50  0.0000000 6.2831853}%
%
\special{pn 8}%
\special{sh 1.000}%
\special{ar 2806 1606 50 50  0.0000000 6.2831853}%
%
\special{pn 8}%
\special{pa 2606 606}%
\special{pa 2806 1206}%
\special{fp}%
%
\special{pn 8}%
\special{pa 2806 1206}%
\special{pa 3206 1606}%
\special{fp}%
%
\special{pn 8}%
\special{pa 3206 1606}%
\special{pa 3206 2006}%
\special{fp}%
%
\special{pn 8}%
\special{pa 3206 2006}%
\special{pa 2806 2406}%
\special{fp}%
%
\special{pn 8}%
\special{pa 2806 2406}%
\special{pa 2406 2406}%
\special{fp}%
%
\special{pn 8}%
\special{pa 2406 2406}%
\special{pa 2006 2006}%
\special{fp}%
%
\special{pn 8}%
\special{pa 2006 2006}%
\special{pa 2006 1606}%
\special{fp}%
%
\special{pn 8}%
\special{pa 2006 1606}%
\special{pa 2406 1206}%
\special{fp}%
%
\special{pn 8}%
\special{pa 2406 1206}%
\special{pa 2806 1206}%
\special{fp}%
%
\special{pn 8}%
\special{pa 2806 1206}%
\special{pa 2806 1606}%
\special{fp}%
%
\special{pn 8}%
\special{pa 2806 1606}%
\special{pa 2806 2006}%
\special{fp}%
%
\special{pn 8}%
\special{pa 2806 2006}%
\special{pa 2406 2006}%
\special{fp}%
%
\special{pn 8}%
\special{pa 2406 2006}%
\special{pa 2406 1606}%
\special{fp}%
%
\special{pn 8}%
\special{pa 2406 1606}%
\special{pa 2806 1606}%
\special{fp}%
%
\special{pn 8}%
\special{pa 2806 1606}%
\special{pa 3206 1606}%
\special{fp}%
%
\special{pn 8}%
\special{pa 3206 2006}%
\special{pa 2806 2006}%
\special{fp}%
%
\special{pn 8}%
\special{pa 2806 2006}%
\special{pa 2806 2406}%
\special{fp}%
%
\special{pn 8}%
\special{pa 2406 2406}%
\special{pa 2406 2006}%
\special{fp}%
%
\special{pn 8}%
\special{pa 2406 2006}%
\special{pa 2006 2006}%
\special{fp}%
%
\special{pn 8}%
\special{pa 2006 1606}%
\special{pa 2406 1606}%
\special{fp}%
%
\special{pn 8}%
\special{pa 2406 1606}%
\special{pa 2406 1206}%
\special{fp}%
%
\special{pn 8}%
\special{pa 2406 1206}%
\special{pa 2606 606}%
\special{fp}%
%
\special{pn 8}%
\special{pa 2006 1606}%
\special{pa 2020 1576}%
\special{pa 2032 1548}%
\special{pa 2046 1518}%
\special{pa 2060 1490}%
\special{pa 2074 1460}%
\special{pa 2088 1432}%
\special{pa 2102 1404}%
\special{pa 2116 1374}%
\special{pa 2130 1346}%
\special{pa 2146 1318}%
\special{pa 2160 1290}%
\special{pa 2176 1260}%
\special{pa 2190 1232}%
\special{pa 2206 1204}%
\special{pa 2222 1178}%
\special{pa 2238 1150}%
\special{pa 2254 1122}%
\special{pa 2270 1094}%
\special{pa 2288 1068}%
\special{pa 2304 1040}%
\special{pa 2322 1014}%
\special{pa 2338 986}%
\special{pa 2356 960}%
\special{pa 2374 934}%
\special{pa 2392 906}%
\special{pa 2410 880}%
\special{pa 2428 854}%
\special{pa 2446 828}%
\special{pa 2466 802}%
\special{pa 2484 776}%
\special{pa 2502 750}%
\special{pa 2520 722}%
\special{pa 2540 696}%
\special{pa 2558 670}%
\special{pa 2578 644}%
\special{pa 2596 618}%
\special{pa 2606 606}%
\special{sp}%
%
\special{pn 8}%
\special{pa 3206 1606}%
\special{pa 3192 1576}%
\special{pa 3178 1548}%
\special{pa 3164 1518}%
\special{pa 3150 1490}%
\special{pa 3136 1460}%
\special{pa 3122 1432}%
\special{pa 3108 1404}%
\special{pa 3094 1374}%
\special{pa 3080 1346}%
\special{pa 3066 1318}%
\special{pa 3050 1290}%
\special{pa 3036 1260}%
\special{pa 3020 1232}%
\special{pa 3006 1204}%
\special{pa 2990 1178}%
\special{pa 2974 1150}%
\special{pa 2956 1122}%
\special{pa 2940 1094}%
\special{pa 2924 1068}%
\special{pa 2906 1040}%
\special{pa 2890 1014}%
\special{pa 2872 986}%
\special{pa 2854 960}%
\special{pa 2836 934}%
\special{pa 2818 906}%
\special{pa 2800 880}%
\special{pa 2782 854}%
\special{pa 2764 828}%
\special{pa 2746 802}%
\special{pa 2728 776}%
\special{pa 2708 750}%
\special{pa 2690 722}%
\special{pa 2672 696}%
\special{pa 2652 670}%
\special{pa 2634 644}%
\special{pa 2616 618}%
\special{pa 2606 606}%
\special{sp}%
%
\special{pn 8}%
\special{pa 2606 606}%
\special{pa 2572 630}%
\special{pa 2540 652}%
\special{pa 2508 676}%
\special{pa 2476 700}%
\special{pa 2442 724}%
\special{pa 2412 748}%
\special{pa 2380 772}%
\special{pa 2348 796}%
\special{pa 2316 818}%
\special{pa 2286 842}%
\special{pa 2256 866}%
\special{pa 2226 890}%
\special{pa 2198 914}%
\special{pa 2170 938}%
\special{pa 2142 962}%
\special{pa 2114 988}%
\special{pa 2088 1012}%
\special{pa 2064 1036}%
\special{pa 2038 1060}%
\special{pa 2014 1086}%
\special{pa 1992 1110}%
\special{pa 1970 1134}%
\special{pa 1950 1160}%
\special{pa 1930 1184}%
\special{pa 1912 1210}%
\special{pa 1894 1234}%
\special{pa 1878 1260}%
\special{pa 1864 1286}%
\special{pa 1850 1312}%
\special{pa 1838 1338}%
\special{pa 1828 1364}%
\special{pa 1820 1390}%
\special{pa 1812 1416}%
\special{pa 1806 1442}%
\special{pa 1802 1468}%
\special{pa 1798 1496}%
\special{pa 1798 1522}%
\special{pa 1798 1550}%
\special{pa 1800 1578}%
\special{pa 1806 1604}%
\special{pa 1812 1632}%
\special{pa 1820 1660}%
\special{pa 1830 1688}%
\special{pa 1840 1718}%
\special{pa 1854 1746}%
\special{pa 1868 1774}%
\special{pa 1882 1802}%
\special{pa 1898 1832}%
\special{pa 1914 1860}%
\special{pa 1932 1890}%
\special{pa 1950 1918}%
\special{pa 1968 1948}%
\special{pa 1986 1976}%
\special{pa 2006 2006}%
\special{sp}%
%
\special{pn 8}%
\special{pa 2606 606}%
\special{pa 2638 630}%
\special{pa 2670 652}%
\special{pa 2704 676}%
\special{pa 2736 700}%
\special{pa 2768 724}%
\special{pa 2800 748}%
\special{pa 2832 772}%
\special{pa 2864 796}%
\special{pa 2894 818}%
\special{pa 2924 842}%
\special{pa 2954 866}%
\special{pa 2984 890}%
\special{pa 3014 914}%
\special{pa 3042 938}%
\special{pa 3070 962}%
\special{pa 3096 988}%
\special{pa 3122 1012}%
\special{pa 3148 1036}%
\special{pa 3172 1060}%
\special{pa 3196 1086}%
\special{pa 3218 1110}%
\special{pa 3240 1134}%
\special{pa 3262 1160}%
\special{pa 3280 1184}%
\special{pa 3300 1210}%
\special{pa 3316 1234}%
\special{pa 3332 1260}%
\special{pa 3346 1286}%
\special{pa 3360 1312}%
\special{pa 3372 1338}%
\special{pa 3382 1364}%
\special{pa 3392 1390}%
\special{pa 3400 1416}%
\special{pa 3406 1442}%
\special{pa 3410 1468}%
\special{pa 3412 1496}%
\special{pa 3414 1522}%
\special{pa 3412 1550}%
\special{pa 3410 1578}%
\special{pa 3406 1604}%
\special{pa 3400 1632}%
\special{pa 3392 1660}%
\special{pa 3382 1688}%
\special{pa 3370 1718}%
\special{pa 3358 1746}%
\special{pa 3344 1774}%
\special{pa 3328 1802}%
\special{pa 3314 1832}%
\special{pa 3296 1860}%
\special{pa 3278 1890}%
\special{pa 3262 1918}%
\special{pa 3242 1948}%
\special{pa 3224 1976}%
\special{pa 3206 2006}%
\special{sp}%
%
\special{pn 8}%
\special{pa 2606 606}%
\special{pa 2634 620}%
\special{pa 2662 636}%
\special{pa 2690 650}%
\special{pa 2718 666}%
\special{pa 2746 680}%
\special{pa 2774 696}%
\special{pa 2802 712}%
\special{pa 2830 728}%
\special{pa 2858 744}%
\special{pa 2886 760}%
\special{pa 2914 776}%
\special{pa 2940 792}%
\special{pa 2968 810}%
\special{pa 2994 826}%
\special{pa 3020 844}%
\special{pa 3048 862}%
\special{pa 3074 882}%
\special{pa 3100 900}%
\special{pa 3124 920}%
\special{pa 3150 940}%
\special{pa 3174 960}%
\special{pa 3198 980}%
\special{pa 3224 1002}%
\special{pa 3246 1024}%
\special{pa 3270 1048}%
\special{pa 3292 1070}%
\special{pa 3316 1094}%
\special{pa 3338 1120}%
\special{pa 3358 1146}%
\special{pa 3380 1172}%
\special{pa 3400 1198}%
\special{pa 3420 1226}%
\special{pa 3440 1256}%
\special{pa 3458 1284}%
\special{pa 3476 1314}%
\special{pa 3494 1344}%
\special{pa 3510 1376}%
\special{pa 3526 1406}%
\special{pa 3540 1438}%
\special{pa 3554 1470}%
\special{pa 3566 1502}%
\special{pa 3576 1534}%
\special{pa 3586 1564}%
\special{pa 3594 1596}%
\special{pa 3600 1628}%
\special{pa 3606 1658}%
\special{pa 3610 1688}%
\special{pa 3612 1718}%
\special{pa 3612 1748}%
\special{pa 3610 1778}%
\special{pa 3606 1806}%
\special{pa 3600 1832}%
\special{pa 3592 1858}%
\special{pa 3582 1884}%
\special{pa 3572 1910}%
\special{pa 3558 1934}%
\special{pa 3544 1958}%
\special{pa 3526 1980}%
\special{pa 3510 2002}%
\special{pa 3490 2024}%
\special{pa 3468 2046}%
\special{pa 3446 2066}%
\special{pa 3424 2086}%
\special{pa 3398 2106}%
\special{pa 3372 2124}%
\special{pa 3346 2144}%
\special{pa 3318 2162}%
\special{pa 3288 2180}%
\special{pa 3258 2196}%
\special{pa 3228 2214}%
\special{pa 3196 2230}%
\special{pa 3164 2248}%
\special{pa 3130 2264}%
\special{pa 3098 2280}%
\special{pa 3064 2296}%
\special{pa 3028 2310}%
\special{pa 2994 2326}%
\special{pa 2958 2342}%
\special{pa 2922 2356}%
\special{pa 2886 2372}%
\special{pa 2852 2386}%
\special{pa 2816 2402}%
\special{pa 2806 2406}%
\special{sp}%
%
\special{pn 8}%
\special{pa 2606 606}%
\special{pa 2578 620}%
\special{pa 2548 636}%
\special{pa 2520 650}%
\special{pa 2492 666}%
\special{pa 2464 680}%
\special{pa 2436 696}%
\special{pa 2408 712}%
\special{pa 2380 728}%
\special{pa 2352 744}%
\special{pa 2326 760}%
\special{pa 2298 776}%
\special{pa 2270 792}%
\special{pa 2244 810}%
\special{pa 2216 826}%
\special{pa 2190 844}%
\special{pa 2164 862}%
\special{pa 2138 882}%
\special{pa 2112 900}%
\special{pa 2086 920}%
\special{pa 2062 940}%
\special{pa 2036 960}%
\special{pa 2012 980}%
\special{pa 1988 1002}%
\special{pa 1964 1024}%
\special{pa 1940 1048}%
\special{pa 1918 1070}%
\special{pa 1896 1094}%
\special{pa 1874 1120}%
\special{pa 1852 1146}%
\special{pa 1832 1172}%
\special{pa 1810 1198}%
\special{pa 1790 1226}%
\special{pa 1772 1256}%
\special{pa 1752 1284}%
\special{pa 1734 1314}%
\special{pa 1718 1344}%
\special{pa 1700 1376}%
\special{pa 1686 1406}%
\special{pa 1670 1438}%
\special{pa 1658 1470}%
\special{pa 1646 1502}%
\special{pa 1634 1534}%
\special{pa 1626 1564}%
\special{pa 1616 1596}%
\special{pa 1610 1628}%
\special{pa 1606 1658}%
\special{pa 1602 1688}%
\special{pa 1600 1718}%
\special{pa 1600 1748}%
\special{pa 1602 1778}%
\special{pa 1606 1806}%
\special{pa 1612 1832}%
\special{pa 1618 1858}%
\special{pa 1628 1884}%
\special{pa 1640 1910}%
\special{pa 1652 1934}%
\special{pa 1668 1958}%
\special{pa 1684 1980}%
\special{pa 1702 2002}%
\special{pa 1722 2024}%
\special{pa 1742 2046}%
\special{pa 1764 2066}%
\special{pa 1788 2086}%
\special{pa 1812 2106}%
\special{pa 1838 2124}%
\special{pa 1866 2144}%
\special{pa 1894 2162}%
\special{pa 1922 2180}%
\special{pa 1952 2196}%
\special{pa 1984 2214}%
\special{pa 2016 2230}%
\special{pa 2048 2248}%
\special{pa 2080 2264}%
\special{pa 2114 2280}%
\special{pa 2148 2296}%
\special{pa 2182 2310}%
\special{pa 2218 2326}%
\special{pa 2252 2342}%
\special{pa 2288 2356}%
\special{pa 2324 2372}%
\special{pa 2360 2386}%
\special{pa 2396 2402}%
\special{pa 2406 2406}%
\special{sp}%
\put(27.1500,-5.2500){\makebox(0,0){$s$}}%
\put(26.9500,-14.9500){\makebox(0,0){$u_0$}}%
\put(26.9500,-18.9500){\makebox(0,0){$u_1$}}%
\put(23.0500,-18.7500){\makebox(0,0){$u_2$}}%
\put(23.0500,-14.7500){\makebox(0,0){$u_3$}}%
\put(26.7500,-10.9500){\makebox(0,0){$v_0$}}%
\put(32.8500,-14.8500){\makebox(0,0){$v_1$}}%
\put(30.8500,-18.9500){\makebox(0,0){$v_2$}}%
\put(29.6500,-24.8500){\makebox(0,0){$v_3$}}%
\put(22.8500,-25.1500){\makebox(0,0){$v_4$}}%
\put(18.4500,-19.5500){\makebox(0,0){$v_5$}}%
\put(19.2500,-14.7500){\makebox(0,0){$v_6$}}%
\put(25.4000,-13.0000){\makebox(0,0){$v_7$}}%
\end{picture}%
\caption{The graph $G_2$}
\label{exam_novic}
\end{figure}

We next show that Bob can win the game on $G_k$.
Without loss of generality, 
we may suppose that Alice first moves the token to $v_0$
and that Bob moves it from $v_0$ to $u_0$.
If Alice moves the token to $v_1$, then Bob wins the game.
Thus we may assume that Alice moves it to $u_1$,
and then Bob moves it to $u_2$.
After that,
Alice (resp., Bob) moves the token from $u_{2i}$ to $u_{2i+1}$
(resp., from $u_{2i+1}$ to $u_{2i+2}$),
where subscripts are modulo $2k$.
Therefore, Bob finally moves the token to $u_0$, that is, Alice has to move it to $v_1$.
Thus, Bob wins the game on $G_k$.

Finally, we claim that $G_k$ has no even kernel graph with respect to $s$.
Suppose to the contrary that $G_k$ has an even kernel graph $H_s$ with 
bipartite sets $B$ and $W$ where $s \in B$.
By the definition of an even kernel graph, 
$sv_i \in E(H_s)$ for all $i \in \{0,1,\dots,4k-1\}$, that is, $v_i \in W$.
Since $H_s$ is bipartite, $v_iv_{i+1} \notin E(H_s)$ where subscripts are modulo $4k$.
Thus all edges between two cycles $C_{4k}$ and $C_{2k}$ belong to $E(H_s)$,
and hence, $u_j \in B$ for any $j \in \{0,1,\dots,2k-1\}$.
However, $u_0$ and $u_1$ must be adjacent in $H_s$, 
which contradicts the bipartiteness of $H_s$.
\end{proof}

\section{Triangular grid graphs}\label{sec_trigrid}

At first, we give a recursive definition of triangular grid graphs.

\begin{df}[Triangular grid graph]
A {\em triangular grid graph} $T_n$ with $n\geq 0$ is recursively constructed as follows.
\begin{itemize}
\item $T_0 \, (= P^0)$ consists of an isolated vertex $v^0_0$ and no edge.
\item $T_n$ with $n\geq 1$ is obtained from $T_{n-1}$
by adding a path $P^{n} = v^n_0v^n_1 \dots v^n_{n}$
and edges $v^n_0v^{n-1}_0$, $v^n_nv^{n-1}_{n-1}$,
$v^n_iv^{n-1}_{i-1}$ and $v^n_iv^{n-1}_{i}$ for any $i \in \{1,\dots,n-1\}$.
\end{itemize}
\end{df}

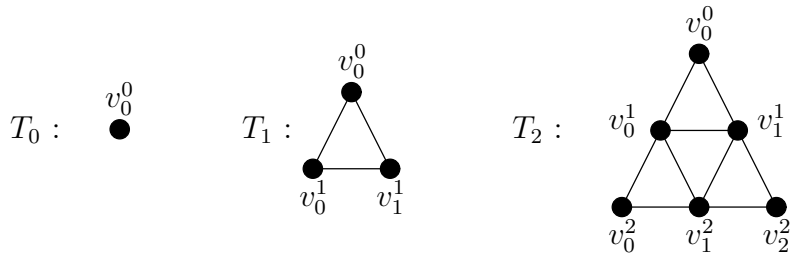
\begin{figure}[htb]
\centering
\unitlength 0.1in
\begin{picture}( 42.1000, 11.1000)( 10.5000,-22.7000)
%
\special{pn 8}%
\special{sh 1.000}%
\special{ar 1810 1800 50 50  0.0000000 6.2831853}%
%
\special{pn 8}%
\special{sh 1.000}%
\special{ar 3010 1606 50 50  0.0000000 6.2831853}%
%
\special{pn 8}%
\special{sh 1.000}%
\special{ar 2810 2006 50 50  0.0000000 6.2831853}%
%
\special{pn 8}%
\special{sh 1.000}%
\special{ar 3210 2006 50 50  0.0000000 6.2831853}%
%
\special{pn 8}%
\special{sh 1.000}%
\special{ar 4410 2206 50 50  0.0000000 6.2831853}%
%
\special{pn 8}%
\special{sh 1.000}%
\special{ar 4810 2206 50 50  0.0000000 6.2831853}%
%
\special{pn 8}%
\special{sh 1.000}%
\special{ar 5210 2206 50 50  0.0000000 6.2831853}%
%
\special{pn 8}%
\special{sh 1.000}%
\special{ar 5010 1806 50 50  0.0000000 6.2831853}%
%
\special{pn 8}%
\special{sh 1.000}%
\special{ar 4610 1806 50 50  0.0000000 6.2831853}%
%
\special{pn 8}%
\special{sh 1.000}%
\special{ar 4810 1406 50 50  0.0000000 6.2831853}%
%
\special{pn 8}%
\special{pa 4810 1406}%
\special{pa 4410 2206}%
\special{fp}%
%
\special{pn 8}%
\special{pa 4410 2206}%
\special{pa 5210 2206}%
\special{fp}%
%
\special{pn 8}%
\special{pa 5210 2206}%
\special{pa 4810 1406}%
\special{fp}%
%
\special{pn 8}%
\special{pa 5010 1806}%
\special{pa 4610 1806}%
\special{fp}%
%
\special{pn 8}%
\special{pa 4610 1806}%
\special{pa 4810 2206}%
\special{fp}%
%
\special{pn 8}%
\special{pa 4810 2206}%
\special{pa 5010 1806}%
\special{fp}%
%
\special{pn 8}%
\special{pa 3210 2006}%
\special{pa 2810 2006}%
\special{fp}%
%
\special{pn 8}%
\special{pa 2810 2006}%
\special{pa 3010 1606}%
\special{fp}%
%
\special{pn 8}%
\special{pa 3010 1606}%
\special{pa 3210 2006}%
\special{fp}%
\put(18.1000,-16.4000){\makebox(0,0){$v^0_0$}}%
\put(30.1500,-14.4500){\makebox(0,0){$v^0_0$}}%
\put(48.1500,-12.4500){\makebox(0,0){$v^0_0$}}%
\put(28.1500,-21.5500){\makebox(0,0){$v^1_0$}}%
\put(32.1500,-21.5500){\makebox(0,0){$v^1_1$}}%
\put(51.8500,-17.5500){\makebox(0,0){$v^1_1$}}%
\put(44.1500,-17.5500){\makebox(0,0){$v^1_0$}}%
\put(44.1500,-23.5500){\makebox(0,0){$v^2_0$}}%
\put(48.1500,-23.5500){\makebox(0,0){$v^2_1$}}%
\put(52.1500,-23.5500){\makebox(0,0){$v^2_2$}}%
\put(14.1000,-18.0000){\makebox(0,0){$T_0$ : }}%
\put(26.1000,-18.0000){\makebox(0,0){$T_1$ : }}%
\put(40.1000,-18.0000){\makebox(0,0){$T_2$ : }}%
\end{picture}%
\caption{Triangular grid graphs $T_0, T_1$ and $T_2$}
\label{tgrid}
\end{figure}

For example, see Figure~\ref{tgrid}.
It is easy to see that every triangular grid graph is connected and Eulerian
and that its maximum degree is at most~6.
Moreover, it has high symmetry as we know.
Thus the class of triangular grid graphs seems to be a reasonable subclass of 
connected Eulerian graphs
for considering the feedback game.

For triangular grid graphs, 
we have the following setting $v^0_0$ as a starting vertex 
(where note that the vertex $v^0_0$ can be regarded as $v^n_0$ and $v^n_n$ by symmetry).

\begin{thm}\label{thm_trigrid_bob_wins}
If $n \ne 2^m - 3$ with $m\geq 2$, 
then Bob wins the game on the triangular grid graph $T_n$
with a starting vertex $v^0_0$.
\end{thm}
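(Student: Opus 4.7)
The plan is to apply Lemma~\ref{lem_vic} by exhibiting, for each admissible $n$, an even kernel of $T_n$ that contains $v^0_0$. For bookkeeping I switch to oblique coordinates $(x,y)=(j,i-j)$, so that $V(T_n)=\{(x,y):x,y\geq 0,\ x+y\leq n\}$, adjacencies are the unit differences in $\{(\pm 1,0),(0,\pm 1),(\pm 1,\mp 1)\}$, and $v^0_0=(0,0)$.

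When $n$ is even I set $S_n=\{(x,y)\in V(T_n):x\text{ and }y\text{ are both even}\}$. Clearly $(0,0)\in S_n$, and any two of its elements differ by a non-unit even vector so $S_n$ is independent. A parity-based case split shows that every $(x,y)\notin S_n$ has exactly two neighbours in $S_n$: the pair $(x\pm 1,y)$ when $x$ is odd and $y$ even, the pair $(x,y\pm 1)$ when $x$ is even and $y$ odd, and the diagonal pair $(x\pm 1,y\mp 1)$ when both are odd. A short boundary check on the sides $x=0$, $y=0$, $x+y=n$ and the three corners preserves this count; the hypotenuse calculation uses crucially that $n$ is even, forcing $(x,y)$ on $x+y=n$ to have coordinates of the same parity, which is what keeps the diagonal pair inside $T_n$ and inside $S_n$.

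When $n$ is odd and $n\neq 2^m-3$, the ``both-even'' set fails along the hypotenuse (where $x+y=n$ forces opposite parities), so a different construction is needed. The plan is a recursive one exploiting the natural splitting of $T_n$ into three upward-pointing corner sub-triangles of side roughly $\lfloor n/2\rfloor$ together with a central downward sub-triangle: paste smaller even kernels into the three corner pieces and verify on the shared seams that the union remains independent and retains the mod-$2$ neighbour count. Base cases are handled by hand: $T_0$ trivially, $T_2$ and $T_4$ via the even case, and $T_3$ via the explicit kernel $\{v^0_0,v^3_0,v^3_3,v^2_1\}$, consisting of the three corners together with the centre of the inscribed downward triangle, for which Definition~\ref{def_ekernel} is checked directly.

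The principal obstacle is this odd case: both choosing the correct recursion and pinpointing why the exclusion is exactly $n=2^m-3$. I expect the recursive step to send an odd $n$ to a strictly smaller odd $n'$ via a map of the form $n\mapsto(n-c)/2$, under which the bad set $\{2^m-3\}$ is invariant (since $2^m-3\mapsto 2^{m-1}-3$), while any $n$ with $n+3$ possessing an odd factor eventually escapes to a good base case. Verifying that the seam parities collapse into a usable congruence exactly when $n+3$ is not a pure power of two is the technical heart of the proof.
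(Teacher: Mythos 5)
Your overall strategy coincides with the paper's: exhibit an even kernel and invoke Lemma~\ref{lem_vic}, handling even $n$ by the ``both indices even'' set and odd $n$ by a recursive triangle decomposition whose failure set is exactly $\{2^m-3\}$. The parts you actually carry out are correct: your $S_n$ for even $n$ is precisely the kernel $B=\{v^j_k : j\equiv k\equiv 0\pmod 2\}$ the paper uses, your boundary/parity check is sound, and your hand-built kernel $\{v^0_0,v^3_0,v^3_3,v^2_1\}$ for $T_3$ is valid. The genuine gap is the odd case, which you yourself label ``the technical heart of the proof'' and then leave as an expectation rather than an argument: you do not fix the recursion constant, do not construct the kernel on the seams, and do not verify the mod-2 neighbour counts there. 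Moreover, the plan as stated would not work: pasting sub-kernels only into the three \emph{corner} pieces leaves the central inverted triangle bare, and its interior vertices (and the seam vertices adjacent to it) would then see an odd number, or the wrong vertices, of $B$; the construction needs a fourth, flipped copy of the sub-kernel in the central downward triangle.

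For comparison, the paper's inductive step takes four even kernel graphs of $T_\alpha$ --- one in each corner and one, inverted, in the centre --- and assembles an even kernel graph of $T_{2\alpha+3}$, placing the two extra seam vertices per side in $W$ and checking their neighbour counts; this requires the compatibility condition, recorded explicitly in the paper, that in every kernel so produced all three degree-2 corner vertices lie in the same partite set $B$. Granting that step, your orbit analysis is exactly right: the inverse map $n\mapsto(n-3)/2$ sends $n+3$ to $(n+3)/2$, so an odd $n$ fails to reach an even (or small) base case precisely when $n+3$ is a power of two, i.e. $n=2^m-3$. But until the $T_\alpha\to T_{2\alpha+3}$ construction and its seam verification are actually supplied, the odd half of the theorem remains unproved.
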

\begin{proof}
We prove the theorem by induction on $n$.
For the base case, we can easily find that each of 
$T_2$ (the left of Figure~\ref{fourtris}), 
$T_3$ (the right of Figure~\ref{exam_vic}),
$T_4,T_6$ (Figure~\ref{fig:t4t6})
has at least one even kernel graph, 
i.e., Bob wins the game on these triangular grid graphs by Lemma~\ref{lem_vic}.

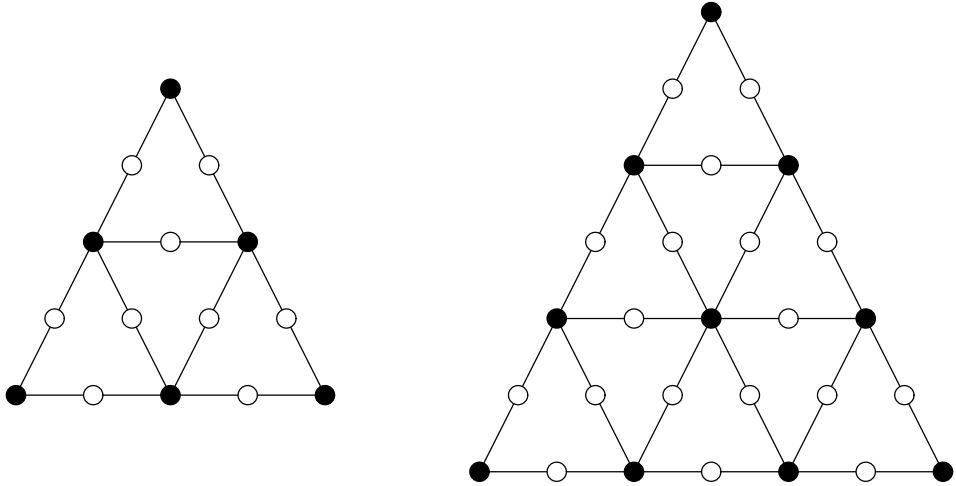
\begin{figure}[htb]
\centering
\unitlength 0.1in
\begin{picture}( 49.0000, 25.0000)( 17.5000,-32.5000)
%
\special{pn 8}%
\special{sh 1.000}%
\special{ar 2600 1200 50 50  0.0000000 6.2831853}%
%
\special{pn 8}%
\special{sh 1.000}%
\special{ar 2200 2000 50 50  0.0000000 6.2831853}%
%
\special{pn 8}%
\special{sh 1.000}%
\special{ar 3000 2000 50 50  0.0000000 6.2831853}%
%
\special{pn 8}%
\special{sh 1.000}%
\special{ar 1800 2800 50 50  0.0000000 6.2831853}%
%
\special{pn 8}%
\special{sh 1.000}%
\special{ar 2600 2800 50 50  0.0000000 6.2831853}%
%
\special{pn 8}%
\special{sh 1.000}%
\special{ar 3400 2800 50 50  0.0000000 6.2831853}%
%
\special{pn 8}%
\special{pa 2600 1200}%
\special{pa 2200 2000}%
\special{fp}%
%
\special{pn 8}%
\special{pa 2200 2000}%
\special{pa 1800 2800}%
\special{fp}%
%
\special{pn 8}%
\special{pa 1800 2800}%
\special{pa 3400 2800}%
\special{fp}%
%
\special{pn 8}%
\special{pa 3400 2800}%
\special{pa 2600 1200}%
\special{fp}%
%
\special{pn 8}%
\special{pa 2200 2000}%
\special{pa 3000 2000}%
\special{fp}%
%
\special{pn 8}%
\special{pa 3000 2000}%
\special{pa 2600 2800}%
\special{fp}%
%
\special{pn 8}%
\special{pa 2600 2800}%
\special{pa 2200 2000}%
\special{fp}%
%
\special{pn 8}%
\special{sh 0}%
\special{ar 2600 2000 50 50  0.0000000 6.2831853}%
%
\special{pn 8}%
\special{sh 0}%
\special{ar 2400 1600 50 50  0.0000000 6.2831853}%
%
\special{pn 8}%
\special{sh 0}%
\special{ar 2800 1600 50 50  0.0000000 6.2831853}%
%
\special{pn 8}%
\special{sh 0}%
\special{ar 2800 2400 50 50  0.0000000 6.2831853}%
%
\special{pn 8}%
\special{sh 0}%
\special{ar 2400 2400 50 50  0.0000000 6.2831853}%
%
\special{pn 8}%
\special{sh 0}%
\special{ar 2000 2400 50 50  0.0000000 6.2831853}%
%
\special{pn 8}%
\special{sh 0}%
\special{ar 3200 2400 50 50  0.0000000 6.2831853}%
%
\special{pn 8}%
\special{sh 0}%
\special{ar 3000 2800 50 50  0.0000000 6.2831853}%
%
\special{pn 8}%
\special{sh 0}%
\special{ar 2200 2800 50 50  0.0000000 6.2831853}%
%
\special{pn 8}%
\special{sh 1.000}%
\special{ar 5400 800 50 50  0.0000000 6.2831853}%
%
\special{pn 8}%
\special{sh 1.000}%
\special{ar 5000 1600 50 50  0.0000000 6.2831853}%
%
\special{pn 8}%
\special{sh 1.000}%
\special{ar 5800 1600 50 50  0.0000000 6.2831853}%
%
\special{pn 8}%
\special{sh 1.000}%
\special{ar 4600 2400 50 50  0.0000000 6.2831853}%
%
\special{pn 8}%
\special{sh 1.000}%
\special{ar 5400 2400 50 50  0.0000000 6.2831853}%
%
\special{pn 8}%
\special{sh 1.000}%
\special{ar 6200 2400 50 50  0.0000000 6.2831853}%
%
\special{pn 8}%
\special{pa 5400 800}%
\special{pa 5000 1600}%
\special{fp}%
%
\special{pn 8}%
\special{pa 5000 1600}%
\special{pa 4600 2400}%
\special{fp}%
%
\special{pn 8}%
\special{pa 4600 2400}%
\special{pa 6200 2400}%
\special{fp}%
%
\special{pn 8}%
\special{pa 6200 2400}%
\special{pa 5400 800}%
\special{fp}%
%
\special{pn 8}%
\special{pa 5000 1600}%
\special{pa 5800 1600}%
\special{fp}%
%
\special{pn 8}%
\special{pa 5800 1600}%
\special{pa 5400 2400}%
\special{fp}%
%
\special{pn 8}%
\special{pa 5400 2400}%
\special{pa 5000 1600}%
\special{fp}%
%
\special{pn 8}%
\special{sh 0}%
\special{ar 5400 1600 50 50  0.0000000 6.2831853}%
%
\special{pn 8}%
\special{sh 0}%
\special{ar 5200 1200 50 50  0.0000000 6.2831853}%
%
\special{pn 8}%
\special{sh 0}%
\special{ar 5600 1200 50 50  0.0000000 6.2831853}%
%
\special{pn 8}%
\special{sh 0}%
\special{ar 5600 2000 50 50  0.0000000 6.2831853}%
%
\special{pn 8}%
\special{sh 0}%
\special{ar 5200 2000 50 50  0.0000000 6.2831853}%
%
\special{pn 8}%
\special{sh 0}%
\special{ar 4800 2000 50 50  0.0000000 6.2831853}%
%
\special{pn 8}%
\special{sh 0}%
\special{ar 6000 2000 50 50  0.0000000 6.2831853}%
%
\special{pn 8}%
\special{sh 0}%
\special{ar 5800 2400 50 50  0.0000000 6.2831853}%
%
\special{pn 8}%
\special{sh 0}%
\special{ar 5000 2400 50 50  0.0000000 6.2831853}%
%
\special{pn 8}%
\special{sh 1.000}%
\special{ar 4200 3200 50 50  0.0000000 6.2831853}%
%
\special{pn 8}%
\special{sh 1.000}%
\special{ar 5000 3200 50 50  0.0000000 6.2831853}%
%
\special{pn 8}%
\special{sh 1.000}%
\special{ar 5800 3200 50 50  0.0000000 6.2831853}%
%
\special{pn 8}%
\special{sh 1.000}%
\special{ar 6600 3200 50 50  0.0000000 6.2831853}%
%
\special{pn 8}%
\special{pa 6600 3200}%
\special{pa 6200 2400}%
\special{fp}%
%
\special{pn 8}%
\special{pa 6200 2400}%
\special{pa 5800 3200}%
\special{fp}%
%
\special{pn 8}%
\special{pa 5800 3200}%
\special{pa 5400 2400}%
\special{fp}%
%
\special{pn 8}%
\special{pa 5400 2400}%
\special{pa 5000 3200}%
\special{fp}%
%
\special{pn 8}%
\special{pa 5000 3200}%
\special{pa 4600 2400}%
\special{fp}%
%
\special{pn 8}%
\special{pa 4600 2400}%
\special{pa 4200 3200}%
\special{fp}%
%
\special{pn 8}%
\special{pa 6600 3200}%
\special{pa 4200 3200}%
\special{fp}%
%
\special{pn 8}%
\special{sh 0}%
\special{ar 4400 2800 50 50  0.0000000 6.2831853}%
%
\special{pn 8}%
\special{sh 0}%
\special{ar 4600 3200 50 50  0.0000000 6.2831853}%
%
\special{pn 8}%
\special{sh 0}%
\special{ar 4800 2800 50 50  0.0000000 6.2831853}%
%
\special{pn 8}%
\special{sh 0}%
\special{ar 5200 2800 50 50  0.0000000 6.2831853}%
%
\special{pn 8}%
\special{sh 0}%
\special{ar 5600 2800 50 50  0.0000000 6.2831853}%
%
\special{pn 8}%
\special{sh 0}%
\special{ar 6000 2800 50 50  0.0000000 6.2831853}%
%
\special{pn 8}%
\special{sh 0}%
\special{ar 6400 2800 50 50  0.0000000 6.2831853}%
%
\special{pn 8}%
\special{sh 0}%
\special{ar 6200 3200 50 50  0.0000000 6.2831853}%
%
\special{pn 8}%
\special{sh 0}%
\special{ar 5400 3200 50 50  0.0000000 6.2831853}%
\end{picture}%
\caption{Even kernel graphs of $T_4$ and $T_6$}
\label{fig:t4t6}
\end{figure}

For an induction rule, 
we assume that each of $T_{2^i-2},T_{2^i-1},\dots,T_{2^{i+1}-4},T_{2^{i+1}-2}$ 
has at least one even kernel graph.
Here we construct even kernel graphs on triangular grid graphs using 
those even kernel graphs.
Using four even kernel graphs on $T_\alpha$, 
we can construct an even kernel graph on $T_{2\alpha+3}$;
for example, see Figure~\ref{fourtris}.

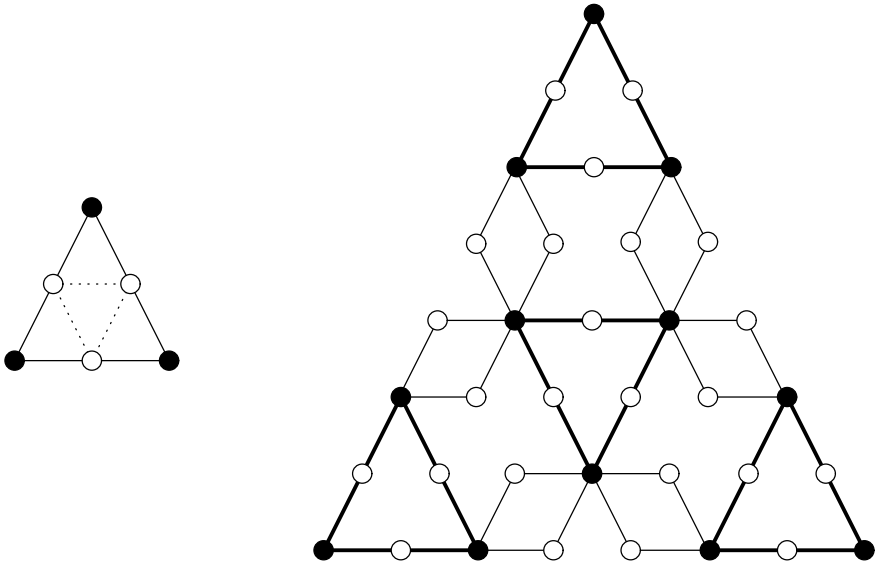
\begin{figure}[htb]
\centering
\unitlength 0.1in
\begin{picture}( 45.0000, 29.0000)(  7.6000,-30.5000)
%
\special{pn 8}%
\special{sh 1.000}%
\special{ar 810 2010 50 50  0.0000000 6.2831853}%
%
\special{pn 8}%
\special{sh 1.000}%
\special{ar 1210 1210 50 50  0.0000000 6.2831853}%
%
\special{pn 8}%
\special{sh 1.000}%
\special{ar 1610 2010 50 50  0.0000000 6.2831853}%
%
\special{pn 8}%
\special{pa 810 2010}%
\special{pa 1610 2010}%
\special{fp}%
%
\special{pn 8}%
\special{pa 1610 2010}%
\special{pa 1210 1210}%
\special{fp}%
%
\special{pn 8}%
\special{pa 810 2010}%
\special{pa 1210 1210}%
\special{fp}%
%
\special{pn 8}%
\special{pa 1210 2010}%
\special{pa 1010 1610}%
\special{dt 0.045}%
\special{pa 1410 1610}%
\special{pa 1410 1610}%
\special{dt 0.045}%
%
\special{pn 8}%
\special{pa 1410 1610}%
\special{pa 1210 2010}%
\special{dt 0.045}%
%
\special{pn 8}%
\special{pa 1010 1610}%
\special{pa 1410 1610}%
\special{dt 0.045}%
%
\special{pn 8}%
\special{sh 0}%
\special{ar 1210 2010 50 50  0.0000000 6.2831853}%
%
\special{pn 8}%
\special{sh 0}%
\special{ar 1010 1610 50 50  0.0000000 6.2831853}%
%
\special{pn 8}%
\special{sh 0}%
\special{ar 1410 1610 50 50  0.0000000 6.2831853}%
%
\special{pn 8}%
\special{sh 1.000}%
\special{ar 2410 3000 50 50  0.0000000 6.2831853}%
%
\special{pn 8}%
\special{sh 1.000}%
\special{ar 2810 2200 50 50  0.0000000 6.2831853}%
%
\special{pn 8}%
\special{sh 1.000}%
\special{ar 3210 3000 50 50  0.0000000 6.2831853}%
%
\special{pn 20}%
\special{pa 2410 3000}%
\special{pa 3210 3000}%
\special{fp}%
%
\special{pn 20}%
\special{pa 3210 3000}%
\special{pa 2810 2200}%
\special{fp}%
%
\special{pn 20}%
\special{pa 2410 3000}%
\special{pa 2810 2200}%
\special{fp}%
%
\special{pn 8}%
\special{sh 0}%
\special{ar 2810 3000 50 50  0.0000000 6.2831853}%
%
\special{pn 8}%
\special{sh 0}%
\special{ar 2610 2600 50 50  0.0000000 6.2831853}%
%
\special{pn 8}%
\special{sh 0}%
\special{ar 3010 2600 50 50  0.0000000 6.2831853}%
%
\special{pn 8}%
\special{sh 1.000}%
\special{ar 4410 3000 50 50  0.0000000 6.2831853}%
%
\special{pn 8}%
\special{sh 1.000}%
\special{ar 4810 2200 50 50  0.0000000 6.2831853}%
%
\special{pn 8}%
\special{sh 1.000}%
\special{ar 5210 3000 50 50  0.0000000 6.2831853}%
%
\special{pn 20}%
\special{pa 4410 3000}%
\special{pa 5210 3000}%
\special{fp}%
%
\special{pn 20}%
\special{pa 5210 3000}%
\special{pa 4810 2200}%
\special{fp}%
%
\special{pn 20}%
\special{pa 4410 3000}%
\special{pa 4810 2200}%
\special{fp}%
%
\special{pn 8}%
\special{sh 0}%
\special{ar 4810 3000 50 50  0.0000000 6.2831853}%
%
\special{pn 8}%
\special{sh 0}%
\special{ar 4610 2600 50 50  0.0000000 6.2831853}%
%
\special{pn 8}%
\special{sh 0}%
\special{ar 5010 2600 50 50  0.0000000 6.2831853}%
%
\special{pn 8}%
\special{sh 1.000}%
\special{ar 3410 1000 50 50  0.0000000 6.2831853}%
%
\special{pn 8}%
\special{sh 1.000}%
\special{ar 3810 200 50 50  0.0000000 6.2831853}%
%
\special{pn 8}%
\special{sh 1.000}%
\special{ar 4210 1000 50 50  0.0000000 6.2831853}%
%
\special{pn 20}%
\special{pa 3410 1000}%
\special{pa 4210 1000}%
\special{fp}%
%
\special{pn 20}%
\special{pa 4210 1000}%
\special{pa 3810 200}%
\special{fp}%
%
\special{pn 20}%
\special{pa 3410 1000}%
\special{pa 3810 200}%
\special{fp}%
%
\special{pn 8}%
\special{sh 0}%
\special{ar 3810 1000 50 50  0.0000000 6.2831853}%
%
\special{pn 8}%
\special{sh 0}%
\special{ar 3610 600 50 50  0.0000000 6.2831853}%
%
\special{pn 8}%
\special{sh 0}%
\special{ar 4010 600 50 50  0.0000000 6.2831853}%
%
\special{pn 8}%
\special{sh 1.000}%
\special{ar 4200 1800 50 50  0.0000000 6.2831853}%
%
\special{pn 8}%
\special{sh 1.000}%
\special{ar 3800 2600 50 50  0.0000000 6.2831853}%
%
\special{pn 8}%
\special{sh 1.000}%
\special{ar 3400 1800 50 50  0.0000000 6.2831853}%
%
\special{pn 20}%
\special{pa 4200 1800}%
\special{pa 3400 1800}%
\special{fp}%
%
\special{pn 20}%
\special{pa 3400 1800}%
\special{pa 3800 2600}%
\special{fp}%
%
\special{pn 20}%
\special{pa 4200 1800}%
\special{pa 3800 2600}%
\special{fp}%
%
\special{pn 8}%
\special{sh 0}%
\special{ar 3800 1800 50 50  0.0000000 6.2831853}%
%
\special{pn 8}%
\special{sh 0}%
\special{ar 4000 2200 50 50  0.0000000 6.2831853}%
%
\special{pn 8}%
\special{sh 0}%
\special{ar 3600 2200 50 50  0.0000000 6.2831853}%
%
\special{pn 8}%
\special{pa 3200 3000}%
\special{pa 3600 3000}%
\special{fp}%
%
\special{pn 8}%
\special{pa 3600 3000}%
\special{pa 3800 2600}%
\special{fp}%
%
\special{pn 8}%
\special{pa 3800 2600}%
\special{pa 4000 3000}%
\special{fp}%
%
\special{pn 8}%
\special{pa 4000 3000}%
\special{pa 4400 3000}%
\special{fp}%
%
\special{pn 8}%
\special{pa 4400 3000}%
\special{pa 4200 2600}%
\special{fp}%
%
\special{pn 8}%
\special{pa 4200 2600}%
\special{pa 3800 2600}%
\special{fp}%
%
\special{pn 8}%
\special{pa 3200 3000}%
\special{pa 3400 2600}%
\special{fp}%
%
\special{pn 8}%
\special{pa 3400 2600}%
\special{pa 3800 2600}%
\special{fp}%
%
\special{pn 8}%
\special{pa 2800 2200}%
\special{pa 3000 1800}%
\special{fp}%
%
\special{pn 8}%
\special{pa 3000 1800}%
\special{pa 3400 1800}%
\special{fp}%
%
\special{pn 8}%
\special{pa 3400 1800}%
\special{pa 3200 2200}%
\special{fp}%
%
\special{pn 8}%
\special{pa 3200 2200}%
\special{pa 2800 2200}%
\special{fp}%
%
\special{pn 8}%
\special{pa 3400 1800}%
\special{pa 3200 1400}%
\special{fp}%
%
\special{pn 8}%
\special{pa 3200 1400}%
\special{pa 3400 1000}%
\special{fp}%
%
\special{pn 8}%
\special{pa 3400 1000}%
\special{pa 3600 1400}%
\special{fp}%
%
\special{pn 8}%
\special{pa 3600 1400}%
\special{pa 3400 1800}%
\special{fp}%
%
\special{pn 8}%
\special{pa 4200 1800}%
\special{pa 4000 1400}%
\special{fp}%
%
\special{pn 8}%
\special{pa 4000 1400}%
\special{pa 4200 1000}%
\special{fp}%
%
\special{pn 8}%
\special{pa 4200 1000}%
\special{pa 4400 1400}%
\special{fp}%
%
\special{pn 8}%
\special{pa 4400 1400}%
\special{pa 4200 1800}%
\special{fp}%
%
\special{pn 8}%
\special{pa 4200 1800}%
\special{pa 4600 1800}%
\special{fp}%
%
\special{pn 8}%
\special{pa 4600 1800}%
\special{pa 4800 2200}%
\special{fp}%
%
\special{pn 8}%
\special{pa 4800 2200}%
\special{pa 4400 2200}%
\special{fp}%
%
\special{pn 8}%
\special{pa 4400 2200}%
\special{pa 4200 1800}%
\special{fp}%
%
\special{pn 8}%
\special{sh 0}%
\special{ar 3200 2200 50 50  0.0000000 6.2831853}%
%
\special{pn 8}%
\special{sh 0}%
\special{ar 3600 3000 50 50  0.0000000 6.2831853}%
%
\special{pn 8}%
\special{sh 0}%
\special{ar 3400 2600 50 50  0.0000000 6.2831853}%
%
\special{pn 8}%
\special{sh 0}%
\special{ar 4200 2600 50 50  0.0000000 6.2831853}%
%
\special{pn 8}%
\special{sh 0}%
\special{ar 4400 2200 50 50  0.0000000 6.2831853}%
%
\special{pn 8}%
\special{sh 0}%
\special{ar 4600 1800 50 50  0.0000000 6.2831853}%
%
\special{pn 8}%
\special{sh 0}%
\special{ar 4400 1390 50 50  0.0000000 6.2831853}%
%
\special{pn 8}%
\special{sh 0}%
\special{ar 4000 1390 50 50  0.0000000 6.2831853}%
%
\special{pn 8}%
\special{sh 0}%
\special{ar 3000 1800 50 50  0.0000000 6.2831853}%
%
\special{pn 8}%
\special{sh 0}%
\special{ar 3200 1400 50 50  0.0000000 6.2831853}%
%
\special{pn 8}%
\special{sh 0}%
\special{ar 3600 1400 50 50  0.0000000 6.2831853}%
%
\special{pn 8}%
\special{sh 0}%
\special{ar 4000 3000 50 50  0.0000000 6.2831853}%
\end{picture}%
\caption{An even kernel graph $H$ of $T_2$ and that of $T_7$ based on $H$}
\label{fourtris}
\end{figure}

From the assumption and this fact, 
each of $T_{2^{i+1}-1},T_{2^{i+1}+1},\dots,T_{2^{i+2}-5},T_{2^{i+2}-1}$
has at least one even kernel graph.
For triangular grid graphs $T_{2^{i+1}-2},T_{2^{i+1}},\dots,T_{2^{i+2}-4},T_{2^{i+2}-2}$, 
it is clear that
they have an even kernel graph with bipartite sets 
$B = \{v^j_k : j \equiv k \equiv 0 \pmod{2}\}$
and
$W = \{v^j_k : j \equiv 1 \pmod{2} \: {\rm or} \: k \equiv 1 \pmod{2}\}$
since their height is even (as shown in Figure~\ref{fig:t4t6});
note that in every even kernel graph constructed above,
all vertices of degree~2 are in the same partite set.
Then, all triangular grid graphs $T_{2^{i+1}-2},T_{2^{i+1}-1},\dots,T_{2^{i+2}-4},T_{2^{i+2}-2}$ 
have at least one even kernel graph.
By induction, 
all triangular grid graph $T_n$ has at least one even kernel graph when $n\neq 2^m-3$.
This together with Lemma~\ref{lem_vic} leads to 
that Bob wins the game on $T_n$ when $n\neq 2^m-3$.
\end{proof}

Theorem~\ref{thm_trigrid_bob_wins} shows that 
Bob can win the game when the starting vertex is $v_0^0$.
This is a common case, that is,
we can see that every even kernel graphs in $T_n$ must include $v_0^0$.

\begin{lem}\label{lem_must_include}
There is no even kernel graph on $T_n$ that does not include $v_0^0$ when $n>1$.
\end{lem}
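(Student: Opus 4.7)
My plan is to proceed by contradiction: assume $H_s$ is an even kernel graph of $T_n$ (for some $n > 1$ and some starting vertex $s$) with $v_0^0 \notin V(H_s) = B \cup W$, and force $B = \emptyset$, contradicting $s \in B$. The initial observation is that $v_0^0 \notin W$ together with the requirement $W \supseteq N_G(B)$ implies $v_0^0 \notin N_G(B)$, so neither of its neighbors $v_0^1, v_1^1$ lies in $B$.

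The main ingredient will be an induction on $m = 0, 1, \ldots, n$ asserting that row $m$ of $T_n$ (i.e., the vertices $v_0^m, \ldots, v_m^m$) is disjoint from $B$. The base case $m = 0$ is just $v_0^0 \notin B$. For the inductive step, I fix each $k \in \{0, 1, \ldots, m-1\}$ in turn and apply condition~3 of Definition~\ref{def_ekernel} to $v_k^{m-1}$: by the recursive construction of $T_n$, all its neighbors except $v_k^m$ and $v_{k+1}^m$ lie in row $m-1$ or row $m-2$, so by the inductive hypothesis none of them is in $B$. Hence $v_k^{m-1} \notin B$ forces $\{v_k^m, v_{k+1}^m\}$ to intersect $B$ in an even number of points; since $v_k^m v_{k+1}^m$ is an edge of $P^m$, condition~2 rules out both being in $B$, so neither is. Letting $k$ range over $\{0, 1, \ldots, m-1\}$ sweeps every vertex of row $m$.

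Concluding at $m = n$ gives $B = \emptyset$, contradicting condition~1 that $s \in B$. The argument is not difficult; the one thing to watch is that the inductive step applies uniformly to interior, side, and corner vertices of row $m-1$ alike: in every case the row-$m$ neighbors are exactly $v_k^m$ and $v_{k+1}^m$, while the remaining neighbors sit in rows $\leq m-1$, which is all the induction requires.
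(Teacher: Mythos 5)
Your proof is correct and follows essentially the same route as the paper: a row-by-row induction outward from $v_0^0$, using the parity condition of the even kernel together with the independence of $B$ to empty each successive row, ending in the contradiction $B=\emptyset\ni s$. The only mild difference is that the paper's induction hypothesis excludes each row from all of $V(H)=B\cup W$ while yours tracks only $B$; your variant is slightly cleaner, since it never needs the paper's implicit convention that every vertex of $W$ has positive degree in the even kernel graph.
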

\begin{proof}
We prove the lemma by contradiction and induction 
on the distance of $v^0_0$ and another vertex.
If an even kernel graph $H$ does not include $v_0^0$, 
then neither $v^1_0$ nor $v^1_1$ is contained in $H$ by the definition.
Therefore, all vertices whose distance from $v_0^0$ is $1$ must not be in $H$.

Assume that no vertex whose distance from $v_0^0$ is at most $k$ is 
in an even kernel graph $H$ with bipartite sets $B$ and $W$, 
where $B$ contains a starting vertex,
we can see that any $v_i^{k+1} (0 \leq i \leq k+1)$ cannot be in $B$ by definition; 
because any $v^k_j (0 \leq j \leq k)$ is not a member of $W$ from the assumption.
If $v^{k+1}_i$ is a member of $W$, 
by definition, $v^{k+1}_i$ must have two or four edges in $H$. 
This condition and local restrictions show that both $v^{k+2}_i$ and $v^{k+2}_{i+1}$ are a member of $B$.
This violates the definition for $B$. Therefore, $v^{k+1}_i$ cannot be a member of $W$.

By induction on $k$, 
if $H$ does not include $v_0^0$, any vertex is not a member of $H$, a contradiction.
Therefore, all even kernel graphs of $T_n$ must includes $v_0^0$.
\end{proof}

For the case when $n = 2^m - 3$ with $m\geq 2$, 
we have checked that Alice wins the game on $T_n$ with a starting vertex $v^0_0$ 
for small cases $n=1,5$. 
Furthermore, 
we confirm that there exists no even kernel graph of $T_n$ if $n = 2^m - 3$ with $m\geq 2$, as follows.

\begin{thm}\label{thm_trigrid_no_victory}
If $n = 2^m - 3$ with $m\geq 2$, 
then there exists no even kernel graph of the triangular grid graph $T_n$. 
\end{thm}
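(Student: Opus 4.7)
My plan is to proceed by induction on $m$. For the base case $m=2$ (so $n=1$), apply Lemma~\ref{lem_must_include} together with the threefold rotational symmetry of $T_1$: any even kernel graph of $T_1$ would have to contain all three vertices $v_0^0, v_0^1, v_1^1$ in $B$. But $T_1$ is a triangle, so these three vertices are pairwise adjacent, contradicting the independence of $B$. This establishes the base case.

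For the inductive step, assume $T_\alpha$ has no even kernel graph where $\alpha = 2^m-3$, and set $n = 2\alpha+3 = 2^{m+1}-3$. Suppose for contradiction that $T_n$ admits an even kernel graph $H$ with parts $B, W$ and $v_0^0 \in B$; by Lemma~\ref{lem_must_include} and the threefold symmetry of $T_n$, the other two corners $v_0^n$ and $v_n^n$ also lie in $B$. The plan is to ``invert'' the construction used in the proof of Theorem~\ref{thm_trigrid_bob_wins}: from the even kernel $H$ of $T_n$, extract an even kernel graph of the top $T_\alpha$-subgraph (formed by rows $0, 1, \ldots, \alpha$ of $T_n$), which then contradicts the inductive hypothesis. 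Conditions~(1) and~(2) of Definition~\ref{def_ekernel} are immediate for the restriction. The delicate condition is~(3): every non-$B$ vertex of the restriction must have an even number of $B$-neighbors \emph{within} the top $T_\alpha$-subgraph. This is automatic for interior vertices; for vertices in row $\alpha$, whose two row-$(\alpha+1)$ neighbors lie outside the subgraph, one must show that row $\alpha+1$ of $T_n$ contains no $B$-vertex.

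The main obstacle is therefore establishing this ``empty middle row'' claim. I would translate the even-kernel condition into the linear system $Ax\equiv 0 \pmod 2$ (where $A$ is the adjacency matrix of $T_n$ and $x$ is the indicator of $B$; note that independence of $B$ automatically forces the parity sum at $v\in B$ to vanish as well), and propagate the row indicators $s_k = (x_{v_0^k}, \ldots, x_{v_k^k}) \in \mathbb{F}_2^{k+1}$ both top-down from $s_0=(1),\, s_1=(0,0)$, and bottom-up from the forced values $s_n(0)=s_n(n)=1$. At each row, the constraints give a one-parameter family of solutions to the $k$ equations on $k+1$ unknowns, further pruned to at most one admissible branch by the independence condition; the specific value $n = 2\alpha+3$ is what makes the top-down and bottom-up propagations meet precisely at rows $\alpha$ and $\alpha+1$. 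A case-by-case analysis of the few surviving branches at the meeting rows then forces $s_{\alpha+1}=0$, which licenses the restriction step to yield a valid even kernel graph of $T_\alpha$ and completes the contradiction. The crux of the argument is carrying out this branching analysis and verifying self-similarity of the propagation so that the ``meeting in the middle'' is genuinely compatible only when row $\alpha+1$ is empty.
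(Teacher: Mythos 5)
Your outline takes a genuinely different route from the paper: you propose induction on $m$, restricting a hypothetical even kernel of $T_{2\alpha+3}$ to the top $T_\alpha$ to contradict the inductive hypothesis, whereas the paper argues directly that any even kernel graph of $T_n$ is forced to decompose into $r^2$ ``close-packed triangles,'' which can happen only when $n+3=r(2i+1)$ with $i\geq 1$, i.e.\ only when $n+3$ has an odd divisor at least $3$ --- impossible for $n+3=2^m$. However, your proposal has a genuine gap: the ``empty middle row'' claim (that row $\alpha+1$ of $T_{2\alpha+3}$ contains no $B$-vertex in \emph{every} even kernel graph) is exactly where the entire difficulty of the theorem lives, and you do not prove it --- you only describe a strategy (an $\mathbb{F}_2$ propagation with a ``branching analysis'' at the meeting rows) and defer the execution. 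Moreover, your assertion that ``the specific value $n=2\alpha+3$ is what makes the top-down and bottom-up propagations meet precisely at rows $\alpha$ and $\alpha+1$'' is not justified and is doubtful as stated: the forced structure is not determined by $n$ alone but depends on a free parameter (in the paper's language, the smallest $i$ with $v^{2i}_0\notin B$, which sets the size of the close-packed triangles), so the location where the propagations ``meet'' varies with that parameter and each case must be analyzed. Carrying out your branching analysis in full would amount to redoing the paper's forcing argument, so the restriction-and-induct framework does not actually save work; as written, the proof is a plan rather than a proof.

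Two smaller points. First, your base case invokes Lemma~\ref{lem_must_include}, which is stated only for $n>1$, so it does not apply to $T_1$; this is harmless, since one can check directly that in a triangle the neighbor $v^1_0$ of $s$ would have exactly one $B$-neighbor, violating condition~3 of Definition~\ref{def_ekernel}, but the citation should be replaced by that direct check. Second, the restriction step itself is sound \emph{conditional} on the empty-row claim (interior vertices keep all their neighbors, and a row-$\alpha$ vertex loses only non-$B$ neighbors, preserving parity), so the only missing piece is the one you flagged --- but it is the whole theorem.
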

\begin{proof}
Suppose to the contrary that $T_n$ has a even kernel graph $H_n$.
From Lemma~\ref{lem_must_include}, any even kernel graph $H_n$ of $T_n$ 
contains $v_0^0$, $v^n_0$ and $v_n^n$.
Furthermore, these three vertices are in $B \subset V(H_n)$,
which is a subset containing a starting vertex.

By symmetry, 
let $i$ be the smallest number such that $v^{2i}_0 \notin B$,
i.e., if $v^{2j}_{2j} \notin B$ with $j < i$,
then we relabel $v^{k}_{0},v^{k}_{1},\dots,v^{k}_{k}$ as $v^{k}_{k},v^{k}_{k-1},\dots,v^{k}_{0}$
for any $k \in \{1,2,\dots,n\}$.
Then $v^1_0, \dots , v^{2i-1}_0$ are in $W$.
By definition and local restrictions, 
it must be hold that $v_{j}^{k} \in B$ when $j,k$ is even, otherwise $v_{j}^{k} \in W$.
We say this pattern on $\bigtriangleup v_0^0 v^{2i-2}_0 v_{2i-2}^{2i-2}$ 
\textit{close-packed} or 
$\bigtriangleup v_0^0 v^{2i-2}_0 v_{2i-2}^{2i-2}$ is called 
\textit{a close-packed triangle},
where $\bigtriangleup abc$ $(a,b,c \in V(T_n))$ denotes 
a triangular grid graph $T_{p}$ for some $p \in \{0,1,\dots,n\}$
which is contained in $T_n$ as a subgraph.

Since $v^{2i-1}_0, v^{2i-1}_1$ and $v^{2i}_0$ are in $W$ and $v^{2i-2}_0 \in B$, 
we have $v^{2i}_1 \in B$, and this leads to $v^{2i+1}_1 \in W$ 
and $v^{2i+1}_0 \in B$.
Observe that $v^{2i}_{2i} \notin B$, since otherwise, 
$v^{2i}_{2}$ is also in $B$ by the observation in the second paragraph, 
which contradicts $v^{2i}_1 \in B$.
Moreover, with the similar observation, 
$v^{2i}_{2i-1}, v^{2i+1}_{2i+1} \in B$ and $v^{2i+1}_{2i} \in W$.

Two black vertices $v^{2i}_1$ and $v^{2i}_{2i-1}$ 
and white vertices $v^{2i-1}_j$, where $0 \leq j \leq 2i-1$, force that
all $v^{2i}_j \in B$ if $j$ is odd and all $v^{2i}_j \in W$ if $j$ is even. 
By local restrictions, 
$\bigtriangleup v^{2i}_{1} v^{2i}_{2i-1} v^{4i-2}_{2i-1}$ is close-packed;
note that $v^{2i+2}_{2},v^{2i+2}_{2i} \notin B$,
since if $v^{2i+2}_{2} \in B$ (resp., $v^{2i+2}_{2i} \in B$), 
then $v^{2i+1}_{1}$ (resp., $v^{2i+1}_{2i}$) in $W$ must be of degree~3,
a contradiction.
Furthermore, 
$\bigtriangleup v^{2i}_{1} v^{2i}_{2i-1} v^{4i-2}_{2i-1}$ forces 
$v^{2i+j}_j$, where $0 \leq j \leq 2i-1$ and $v^{2i+j}_{2i}$, 
where $0 \leq j \leq 2i-1$ are in $W$. 
Note that whether $v^{4i}_{2i}$ is in $B$ or $W$ 
is not revealed yet under above discussions.
(Since the degree of a vertex in $W$ may be zero, 
every vertex of $T_n$ can be a member in $V(H_n)$.)
Now we have:

\begin{enumerate}
\item White vertices $v^{2i+j}_j$, where $0 \leq j \leq 2i-1$, 
and black one $v^{2i+1}_{0}$ generate 
a close-packed triangle $\bigtriangleup v^{2i+1}_{0} v^{4i-1}_{0} v^{4i-1}_{2i-2}$.
\item White vertices $v^{2i+j}_{2i}$, where $0 \leq j \leq 2i-1$, 
and black one $v^{2i+1}_{2i+1}$ generate
a close-packed triangle $\bigtriangleup v^{2i+1}_{2i+1} v^{4i-1}_{2i+1} v^{4i-1}_{4i-1}$.
\end{enumerate}

These successive generation can stop if $n = 4i-1$.
If $n = 4i-1$, $H_n$ is constructed by four close-packed triangles.
If $n < 4i-1$, above generation are not satisfied. 
Therefore there does not exist such $i$ under that $n$.
Otherwise, $n > 4i-1$, above generation must continue, 
as follows:

Two close-packed triangles $\bigtriangleup v^{2i+1}_{0} v^{4i-1}_{0} v^{4i-1}_{2i-2}$ 
and $\bigtriangleup v^{2i+1}_{2i+1} v^{4i-1}_{2i+1} v^{4i-1}_{4i-1}$ 
force that $v^{4i}_{j} \in W$, where $0 \leq j \leq 4i$ and $j \neq 2i$.
We next focus on the fact that 
$v^{4i-1}_{2i-1}, v^{4i-1}_{2i}, v^{4i}_{2i-1}$ and $v^{4i}_{2i+1}$ must be in $W$.
This fact implies $v^{4i}_{2i}$ cannot be a member of $H_n$
since local constrains force $v^{4i+1}_{2i-1}, v^{4i+1}_{2i+2}, v^{4i+2}_{2i+1} \in B$.
These new black vertices generate new three close-packed triangles
$\bigtriangleup v^{4i+1}_{1} v^{4i+1}_{2i-1} v^{6i-1}_{2i-1},
 \bigtriangleup v^{4i+2}_{2i+1} v^{6i}_{2i+1} v^{6i}_{4i-1}$ and
$\bigtriangleup v^{4i+1}_{2i+2} v^{4i+1}_{4i} v^{6i-1}_{4i}$, 
and these new close-packed triangles force two extra close-packed triangles
$\bigtriangleup v^{4i+2}_{0} v^{6i}_{0} v^{6i}_{2i-2},
 \bigtriangleup v^{4i+2}_{4i+2} v^{6i}_{4i+2} v^{6i}_{6i}$.
In this case, 
these successive generation can stop if $n = 6i$,
and also this discussion can continue recursively if $n > 6i$.

Let $r$ be the number of recursion on the above discussion, 
i.e., $H_n$ contains $r^2$ close-packed triangles.
By the hypothesis,
there can exist such $i$ on $T_n$ if $n = r(2i+1)-3$, where $i,r \geq 1$,
which 
implies that only if $n$ can be represented as $n = r(2i+1)-3$, 
where $i,r \geq 1$, $H_n$ can exist.
Therefore, by the assumption that $n=2^m-3$,
there must not exist an even kernel graph for $T_n$ when $m>1$ 
since $2^{m-j}$ cannot be represented as $2i+1$ for any $i \geq 1$ and $j \leq m$,
a contradiction.
\end{proof}

Thus we propose the following conjecture which implies that
for every triangular grid graph $T_n$ with a starting vertex $v^0_0$,
Bob wins the game on $T_n$ if and only if $T_n$ contains an even kernel graph
with respect to $v^0_0$.

\begin{conj}\label{conj_trigrid}
If $n = 2^m - 3$ with $m\geq 2$, 
then Alice wins the game on the triangular grid graph $T_n$
with a starting vertex $v^0_0$.
\end{conj}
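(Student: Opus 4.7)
The plan is to attack the conjecture by induction on $m \geq 2$, with the verified cases $n=1$ (where $T_1$ is a single triangle and Alice trivially wins in three moves) and $n=5$ serving as base. The natural inductive setup exploits the recursive construction used in Figure~\ref{fourtris}: $T_{2\alpha+3}$ can be viewed as four copies of $T_\alpha$ glued along a central band of triangles. Setting $\alpha = 2^{m-1}-3$ gives $n = 2\alpha + 3 = 2^m - 3$, so $T_{2^m-3}$ decomposes into four copies of $T_{2^{m-1}-3}$.

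Since $v_0^0$ has degree~2, Alice's first move is forced up to symmetry; say she moves to $v_0^1$. The target strategy is for Alice to use her first few moves to steer the token into one of the three sub-copies of $T_{2^{m-1}-3}$ that do not contain $v_0^0$ as a corner, then relabel that sub-copy so that the current position plays the role of $v_0^0$, and invoke the inductive hypothesis. To make this rigorous Alice would need a short, pairing-like opening that forces Bob either to return to $v_0^0$ (losing immediately) or to enter a sub-triangle in a controlled way, with the parity of elapsed moves matching the inductive statement.

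The main obstacle is exactly in formulating the inductive hypothesis strongly enough. The four sub-copies of $T_{2^{m-1}-3}$ inside $T_{2^m-3}$ do not have $v_0^0$ itself as a corner, so a direct invocation of the hypothesis fails; a cleaner statement should quantify over all Eulerian graphs isomorphic to $T_{2^{m-1}-3}$ with a distinguished corner as starting vertex, and must additionally account for the extra edges of the central band that leak out of the sub-triangle. A further difficulty is that Bob may deliberately avoid entering any sub-triangle and instead meander along the central band; ruling this out requires an analysis of odd-cycles inside the band, showing that every Bob response either loses outright or is eventually pushed into the desired sub-position.

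A complementary route worth pursuing is to render Theorem~\ref{thm_trigrid_no_victory} constructive. Its proof shows that any candidate bipartite partition $B \cup W$ of $T_{2^m-3}$ must propagate a rigid pattern of close-packed triangles that ultimately fails to tile $T_{2^m-3}$; one would try to translate this structural obstruction into a dynamic local rule telling Alice, after each Bob move, which edge breaks the attempted parity invariant and hence provides an odd-length return route to $v_0^0$. I expect the bulk of the work to lie here: upgrading the static impossibility argument for an even kernel graph into an online strategy that names a concrete winning move of Alice against every Bob response is the crux of the conjecture, and is presumably why the authors have left the case $n = 2^m - 3$ open.
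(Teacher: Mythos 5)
There is a genuine gap here --- and in fact the statement you are trying to prove is left \emph{open} in the paper: it is stated as Conjecture~\ref{conj_trigrid}, and the authors only verify the cases $n=1$ and $n=5$ (the latter by an explicit case analysis in the Appendix). Your text is a research plan rather than a proof, and you say as much yourself (``the main obstacle is exactly in formulating the inductive hypothesis strongly enough,'' ``I expect the bulk of the work to lie here''). So there is no complete argument to check; what can be checked is whether the two routes you sketch are viable, and both run into concrete obstructions.

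First, the four-copy decomposition of $T_{2\alpha+3}$ into copies of $T_\alpha$ is used in the paper (Figure~\ref{fourtris}) only to paste together \emph{even kernel graphs}, i.e.\ static parity certificates for the \emph{second} player. Transferring a \emph{first}-player winning strategy across that decomposition is a different matter entirely: the feedback game on $T_n$ is not a disjoint sum of games on the sub-copies, the token can repeatedly enter and leave a sub-copy through the shared band, and the vertex at which the token enters a sub-copy need not be its corner. Since the winner of the game (and even the existence of an even kernel graph, per the Remark after Definition~\ref{def_vic}) depends on the starting vertex, the inductive hypothesis ``Alice wins $T_{2^{m-1}-3}$ from $v^0_0$'' gives no information about these entered positions, and you have not supplied the stronger hypothesis you acknowledge is needed. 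Second, your complementary route --- making Theorem~\ref{thm_trigrid_no_victory} ``constructive'' --- cannot work as stated: Lemma~\ref{lem_vic} is a one-way implication, and Proposition~\ref{prop_novic} exhibits infinitely many connected Eulerian graphs with \emph{no} even kernel graph on which Bob nevertheless wins. Hence the nonexistence of an even kernel graph for $T_{2^m-3}$ is not, by itself, evidence that Alice wins, and no amount of constructivizing that impossibility proof will yield her a strategy without a genuinely new idea. If you want to make progress, the paper's own Appendix treatment of $T_5$ (dead vertices plus ``good bipartite subgraphs'' for Alice after her first move) is the more promising template to try to generalize.
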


\section{Toroidal grid graphs}\label{sec_toro_grid}

In this section,
we investigate the feedback game on toroidal grid graphs.
The undirected edge geography on a grid graph 
(which is the Cartesian product of two paths)
is completely solved~\cite{edgegeo},
and the directed edge geography on a directed toroidal grid graph
is also investigated in~\cite{HH2003}.

\begin{df}[Toroidal grid graph]
A {\em toroidal grid graph} $Q(m,n)$ is the Cartesian product of 
two cycles $C_m = u_0u_1 \dots u_{m-1}$ and $C_n = v_0v_1 \dots v_{n-1}$
with $m\geq 2$ and $n\geq 2$,
that is, 
\begin{itemize}
\item $V(Q(m,n)) = \{(u_i,v_j) : i \in \{0,1,\dots,m-1\},\, j \in \{0,1,\dots,n-1\} \}$. 
\item $(u_i,v_j)(u_{i'},v_{j'}) \in E(Q(m,n))$ if and only if 
\begin{itemize}
\item $i=i'$ and $v_jv_{j'} \in E(C_n)$ or
\item $j=j'$ and $u_iu_{i'} \in E(C_m)$.
\end{itemize}
\end{itemize}
\end{df}

In other words, $Q(m,n)$ is a 4-regular {\em quadrangulation} embedded on the torus,
which is a graph on a surface with each face quadrangular.
For example, see Figure~\ref{torogrid};
by identifying the top and bottom (resp., right and left) sides 
along the direction of arrows, we have the toroidal grid graph $Q(3,4)$.
Note that $Q(m,n)$ is vertex-transitive, that is, 
there exists an automorphism of the graph mapping a vertex into any other vertex.
Thus the feedback game on $Q(m,n)$ does not depend on the choice of a starting vertex,
and hence,
toroidal grid graphs seem to be a reasonable subclass of 
connected Eulerian graphs with maximum degree at most~4
for considering the feedback game.

\begin{figure}[htb]
\centering
\unitlength 0.1in
\begin{picture}( 30.5000, 18.8000)(  6.6000,-27.8000)
%
\special{pn 8}%
\special{sh 1.000}%
\special{ar 2006 1406 50 50  0.0000000 6.2831853}%
%
\special{pn 8}%
\special{sh 1.000}%
\special{ar 2006 1806 50 50  0.0000000 6.2831853}%
%
\special{pn 8}%
\special{sh 1.000}%
\special{ar 2006 2206 50 50  0.0000000 6.2831853}%
%
\special{pn 8}%
\special{sh 1.000}%
\special{ar 2406 1406 50 50  0.0000000 6.2831853}%
%
\special{pn 8}%
\special{sh 1.000}%
\special{ar 2406 1806 50 50  0.0000000 6.2831853}%
%
\special{pn 8}%
\special{sh 1.000}%
\special{ar 2406 2206 50 50  0.0000000 6.2831853}%
%
\special{pn 8}%
\special{sh 1.000}%
\special{ar 2806 2206 50 50  0.0000000 6.2831853}%
%
\special{pn 8}%
\special{sh 1.000}%
\special{ar 2806 1806 50 50  0.0000000 6.2831853}%
%
\special{pn 8}%
\special{sh 1.000}%
\special{ar 2806 1406 50 50  0.0000000 6.2831853}%
%
\special{pn 8}%
\special{sh 1.000}%
\special{ar 3206 1406 50 50  0.0000000 6.2831853}%
%
\special{pn 8}%
\special{sh 1.000}%
\special{ar 3206 1806 50 50  0.0000000 6.2831853}%
%
\special{pn 8}%
\special{sh 1.000}%
\special{ar 3206 2206 50 50  0.0000000 6.2831853}%
%
\special{pn 8}%
\special{pa 3606 2206}%
\special{pa 1606 2206}%
\special{fp}%
%
\special{pn 8}%
\special{pa 1606 1806}%
\special{pa 3606 1806}%
\special{fp}%
%
\special{pn 8}%
\special{pa 3606 1406}%
\special{pa 1606 1406}%
\special{fp}%
%
\special{pn 8}%
\special{pa 2006 1006}%
\special{pa 2006 2606}%
\special{fp}%
%
\special{pn 8}%
\special{pa 2406 2606}%
\special{pa 2406 1006}%
\special{fp}%
%
\special{pn 8}%
\special{pa 2806 1006}%
\special{pa 2806 2606}%
\special{fp}%
%
\special{pn 8}%
\special{pa 3206 2606}%
\special{pa 3206 1006}%
\special{fp}%
%
\special{pn 8}%
\special{pa 3606 1006}%
\special{pa 1606 1006}%
\special{da 0.070}%
%
\special{pn 8}%
\special{pa 1606 1006}%
\special{pa 1606 2606}%
\special{da 0.070}%
%
\special{pn 8}%
\special{pa 1606 2606}%
\special{pa 3606 2606}%
\special{da 0.070}%
%
\special{pn 8}%
\special{pa 3606 2606}%
\special{pa 3606 1006}%
\special{da 0.070}%
%
\special{pn 13}%
\special{pa 1506 2606}%
\special{pa 1506 1006}%
\special{fp}%
\special{sh 1}%
\special{pa 1506 1006}%
\special{pa 1486 1072}%
\special{pa 1506 1058}%
\special{pa 1526 1072}%
\special{pa 1506 1006}%
\special{fp}%
\special{pa 1506 1006}%
\special{pa 1506 1006}%
\special{fp}%
%
\special{pn 13}%
\special{pa 3706 2606}%
\special{pa 3706 1006}%
\special{fp}%
\special{sh 1}%
\special{pa 3706 1006}%
\special{pa 3686 1072}%
\special{pa 3706 1058}%
\special{pa 3726 1072}%
\special{pa 3706 1006}%
\special{fp}%
%
\special{pn 13}%
\special{pa 1606 906}%
\special{pa 3606 906}%
\special{da 0.070}%
\special{sh 1}%
\special{pa 3606 906}%
\special{pa 3538 886}%
\special{pa 3552 906}%
\special{pa 3538 926}%
\special{pa 3606 906}%
\special{fp}%
\put(11.5500,-16.6500){\makebox(0,0){$(u_1,v_0)$}}%
\put(11.5500,-20.6500){\makebox(0,0){$(u_2,v_0)$}}%
\put(22.2500,-28.6500){\makebox(0,0){$(u_2,v_1)$}}%
%
\special{pn 13}%
\special{pa 1606 2706}%
\special{pa 3606 2706}%
\special{da 0.070}%
\special{sh 1}%
\special{pa 3606 2706}%
\special{pa 3538 2686}%
\special{pa 3552 2706}%
\special{pa 3538 2726}%
\special{pa 3606 2706}%
\special{fp}%
\put(11.6000,-12.7000){\makebox(0,0){$(u_0,v_0)$}}%
%
\special{pn 8}%
\special{pa 1440 1300}%
\special{pa 1910 1360}%
\special{fp}%
\special{sh 1}%
\special{pa 1910 1360}%
\special{pa 1846 1332}%
\special{pa 1858 1354}%
\special{pa 1842 1372}%
\special{pa 1910 1360}%
\special{fp}%
%
\special{pn 8}%
\special{pa 1440 1700}%
\special{pa 1910 1760}%
\special{fp}%
\special{sh 1}%
\special{pa 1910 1760}%
\special{pa 1846 1732}%
\special{pa 1858 1754}%
\special{pa 1842 1772}%
\special{pa 1910 1760}%
\special{fp}%
%
\special{pn 8}%
\special{pa 1440 2100}%
\special{pa 1910 2160}%
\special{fp}%
\special{sh 1}%
\special{pa 1910 2160}%
\special{pa 1846 2132}%
\special{pa 1858 2154}%
\special{pa 1842 2172}%
\special{pa 1910 2160}%
\special{fp}%
\end{picture}%
\caption{The toroidal grid graph $Q(3,4)$}
\label{torogrid}
\end{figure}
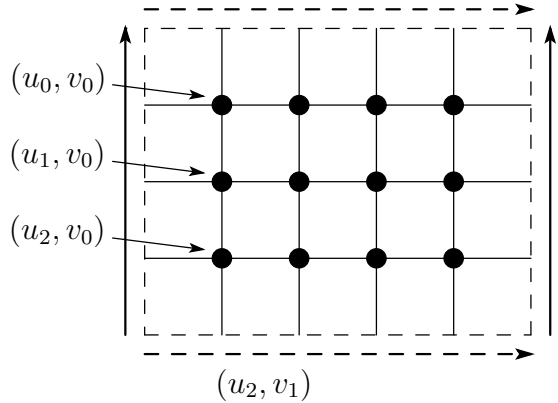

For several combinations of $m$ and $n$, 
we have determined a winner of the game, as follows.
In particular, if the greatest common divisor of $m$ and $n$,
denoted by $\gcd(m,n)$, is bigger than one, then Bob can win the game on $Q(m,n)$,
and otherwise it seems to be that Alice can win the game.

\begin{thm}\label{thm_torogrid_bob}
If $\gcd(m,n)=c > 1$, then Bob can win the game on $Q(m,n)$.
\end{thm}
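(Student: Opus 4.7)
The plan is to exhibit an explicit even kernel of $Q(m,n)$ containing the starting vertex and then invoke Lemma~\ref{lem_vic}. Since $Q(m,n)$ is vertex-transitive, I may assume without loss of generality that the starting vertex is $s = (u_0,v_0)$.

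Write $c = \gcd(m,n)$, so $c \mid m$ and $c \mid n$. Define
\[
B = \{\,(u_i,v_j) \in V(Q(m,n)) : i+j \equiv 0 \pmod{c}\,\}.
\]
The divisibility $c \mid m,n$ guarantees that the condition $i+j \equiv 0 \pmod c$ is well-defined on the torus (it is invariant under $i \mapsto i+m$ and $j \mapsto j+n$). Clearly $s \in B$, so $B$ is nonempty.

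Next I would verify the two remaining conditions for $B$ to be an even kernel. For independence: any two adjacent vertices of $Q(m,n)$ differ by $\pm 1$ in exactly one coordinate, so the residues $i+j \pmod c$ of adjacent vertices differ by exactly $\pm 1$; since $c \geq 2$ they cannot both be $\equiv 0 \pmod c$. For the even-neighbour condition: take any $(u_i,v_j) \notin B$, whose four neighbours $(u_{i\pm 1},v_j),(u_i,v_{j\pm 1})$ have residues $i+j\pm 1 \pmod c$. A neighbour lies in $B$ iff its residue is $0 \pmod c$; the two neighbours $(u_{i+1},v_j),(u_i,v_{j+1})$ share the residue $i+j+1$, while $(u_{i-1},v_j),(u_i,v_{j-1})$ share the residue $i+j-1$. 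Hence the number of neighbours in $B$ is $0$ (if $i+j \not\equiv \pm 1 \pmod c$) or $2$ (if $i+j \equiv 1$ or $-1 \pmod c$)---always even.

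Having checked Definition~\ref{def_ekernel}, Lemma~\ref{lem_vic} immediately yields that Bob wins the feedback game on $Q(m,n)$. I do not anticipate a genuine obstacle: the only delicate point is that residues modulo $c$ must be consistent on the torus, which is exactly why the hypothesis $c = \gcd(m,n) > 1$ (ensuring $c \mid m$ and $c \mid n$) is needed. The case $c = 1$ is not covered by this construction, which is consistent with the authors' remark that Alice appears to win when $\gcd(m,n) = 1$.
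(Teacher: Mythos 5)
Your proof is correct and follows essentially the same route as the paper: both exhibit an even kernel of $Q(m,n)$ consisting of the diagonals spaced $c$ apart (the paper builds it by tiling $Q(m,n)$ with copies of an even kernel graph of $Q(c,c)$, which yields the same set up to a reflection of one coordinate) and then invoke Lemma~\ref{lem_vic}. Your direct congruence verification of Definition~\ref{def_ekernel}, including the observation that well-definedness of $i+j \bmod c$ on the torus is exactly where $c \mid m$ and $c \mid n$ are used, is if anything a cleaner justification than the paper's figure-based covering argument.
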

\begin{proof}
By the assumption, let $m=ck$ and $n=ck'$ for some positive integers $k$ and $k'$.
The toroidal grid graph $Q(c,c)$ with a starting vertex $s = (u_0,v_0)$
has an even kernel graph $H^c$ with bipartite sets $B$ and $W$ such that
$(u_i,v_i) \in B$, $(u_i,v_{i+1}), (u_{i+1},v_i) \in W$ 
and edges $(u_i,v_i)(u_i,v_{i+1}),(u_i,v_i)(u_{i+1},v_i), 
(u_i,v_{i+1})(u_{i+1},v_{i+1})$ and $(u_{i+1},v_i)(u_{i+1},v_{i+1})$ are in $E(H^c)$
for any $i \in \{0,1,\dots,c-1\}$, where subscripts are modulo $c$ 
(see Figure~\ref{vics}).

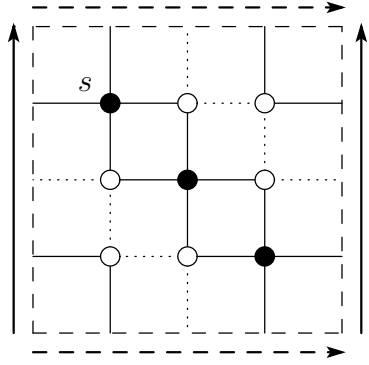
\begin{figure}[htb]
\centering
\unitlength 0.1in
\begin{picture}( 18.1000, 18.1000)(  6.9500,-21.0500)
%
\special{pn 8}%
\special{sh 1.000}%
\special{ar 1200 800 50 50  0.0000000 6.2831853}%
%
\special{pn 8}%
\special{sh 1.000}%
\special{ar 1600 1200 50 50  0.0000000 6.2831853}%
%
\special{pn 8}%
\special{sh 1.000}%
\special{ar 2000 1600 50 50  0.0000000 6.2831853}%
%
\special{pn 8}%
\special{pa 2400 400}%
\special{pa 800 400}%
\special{da 0.070}%
%
\special{pn 8}%
\special{pa 800 400}%
\special{pa 800 2000}%
\special{da 0.070}%
%
\special{pn 8}%
\special{pa 800 2000}%
\special{pa 2400 2000}%
\special{da 0.070}%
%
\special{pn 8}%
\special{pa 2400 2000}%
\special{pa 2400 400}%
\special{da 0.070}%
%
\special{pn 13}%
\special{pa 700 2000}%
\special{pa 700 400}%
\special{fp}%
\special{sh 1}%
\special{pa 700 400}%
\special{pa 680 468}%
\special{pa 700 454}%
\special{pa 720 468}%
\special{pa 700 400}%
\special{fp}%
\special{pa 700 400}%
\special{pa 700 400}%
\special{fp}%
%
\special{pn 13}%
\special{pa 2500 2000}%
\special{pa 2500 400}%
\special{fp}%
\special{sh 1}%
\special{pa 2500 400}%
\special{pa 2480 468}%
\special{pa 2500 454}%
\special{pa 2520 468}%
\special{pa 2500 400}%
\special{fp}%
%
\special{pn 13}%
\special{pa 800 300}%
\special{pa 2400 300}%
\special{da 0.070}%
\special{sh 1}%
\special{pa 2400 300}%
\special{pa 2334 280}%
\special{pa 2348 300}%
\special{pa 2334 320}%
\special{pa 2400 300}%
\special{fp}%
%
\special{pn 13}%
\special{pa 800 2100}%
\special{pa 2400 2100}%
\special{da 0.070}%
\special{sh 1}%
\special{pa 2400 2100}%
\special{pa 2334 2080}%
\special{pa 2348 2100}%
\special{pa 2334 2120}%
\special{pa 2400 2100}%
\special{fp}%
%
\special{pn 8}%
\special{pa 1200 800}%
\special{pa 1200 1200}%
\special{fp}%
%
\special{pn 8}%
\special{pa 1200 1200}%
\special{pa 1600 1200}%
\special{fp}%
%
\special{pn 8}%
\special{pa 1600 1200}%
\special{pa 1600 1600}%
\special{fp}%
%
\special{pn 8}%
\special{pa 1600 1600}%
\special{pa 2000 1600}%
\special{fp}%
%
\special{pn 8}%
\special{pa 2000 1600}%
\special{pa 2000 1200}%
\special{fp}%
%
\special{pn 8}%
\special{pa 2000 1200}%
\special{pa 1600 1200}%
\special{fp}%
%
\special{pn 8}%
\special{pa 1600 1200}%
\special{pa 1600 800}%
\special{fp}%
%
\special{pn 8}%
\special{pa 1600 800}%
\special{pa 1200 800}%
\special{fp}%
%
\special{pn 8}%
\special{pa 1200 800}%
\special{pa 1200 400}%
\special{fp}%
%
\special{pn 8}%
\special{pa 1200 800}%
\special{pa 800 800}%
\special{fp}%
%
\special{pn 8}%
\special{pa 2000 1600}%
\special{pa 2000 2000}%
\special{fp}%
%
\special{pn 8}%
\special{pa 2000 1600}%
\special{pa 2400 1600}%
\special{fp}%
%
\special{pn 8}%
\special{pa 1200 1600}%
\special{pa 800 1600}%
\special{fp}%
%
\special{pn 8}%
\special{pa 1200 1600}%
\special{pa 1200 2000}%
\special{fp}%
%
\special{pn 8}%
\special{pa 2400 800}%
\special{pa 2000 800}%
\special{fp}%
%
\special{pn 8}%
\special{pa 2000 800}%
\special{pa 2000 400}%
\special{fp}%
%
\special{pn 8}%
\special{pa 1600 1600}%
\special{pa 1200 1600}%
\special{dt 0.045}%
%
\special{pn 8}%
\special{pa 1200 1600}%
\special{pa 1200 1200}%
\special{dt 0.045}%
%
\special{pn 8}%
\special{pa 1200 1200}%
\special{pa 800 1200}%
\special{dt 0.045}%
%
\special{pn 8}%
\special{pa 1600 800}%
\special{pa 2000 800}%
\special{dt 0.045}%
%
\special{pn 8}%
\special{pa 1600 800}%
\special{pa 1600 400}%
\special{dt 0.045}%
%
\special{pn 8}%
\special{pa 1600 1600}%
\special{pa 1600 2000}%
\special{dt 0.045}%
%
\special{pn 8}%
\special{pa 2000 1200}%
\special{pa 2000 800}%
\special{dt 0.045}%
%
\special{pn 8}%
\special{pa 2000 1200}%
\special{pa 2400 1200}%
\special{dt 0.045}%
%
\special{pn 8}%
\special{sh 0}%
\special{ar 1600 1600 50 50  0.0000000 6.2831853}%
%
\special{pn 8}%
\special{sh 0}%
\special{ar 1200 1600 50 50  0.0000000 6.2831853}%
%
\special{pn 8}%
\special{sh 0}%
\special{ar 1200 1200 50 50  0.0000000 6.2831853}%
%
\special{pn 8}%
\special{sh 0}%
\special{ar 1600 800 50 50  0.0000000 6.2831853}%
%
\special{pn 8}%
\special{sh 0}%
\special{ar 2000 800 50 50  0.0000000 6.2831853}%
%
\special{pn 8}%
\special{sh 0}%
\special{ar 2000 1200 50 50  0.0000000 6.2831853}%
\put(10.7000,-7.0000){\makebox(0,0){$s$}}%
\end{picture}%
\caption{An even kernel graph of $Q(3,3)$}
\label{vics}
\end{figure}

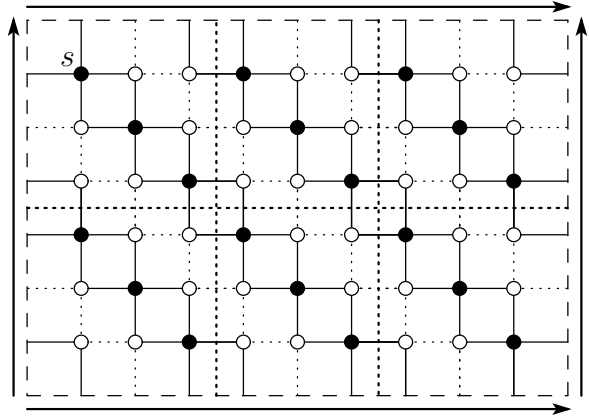
\begin{figure}[htb]
\centering
\unitlength 0.1in
\begin{picture}( 29.5000, 21.1000)(  7.2500,-24.3500)
%
\special{pn 8}%
\special{sh 1.000}%
\special{ar 1080 680 36 36  0.0000000 6.2831853}%
%
\special{pn 8}%
\special{sh 1.000}%
\special{ar 1360 960 36 36  0.0000000 6.2831853}%
%
\special{pn 8}%
\special{sh 1.000}%
\special{ar 1640 1240 36 36  0.0000000 6.2831853}%
%
\special{pn 8}%
\special{pa 1080 680}%
\special{pa 1080 960}%
\special{fp}%
%
\special{pn 8}%
\special{pa 1080 960}%
\special{pa 1360 960}%
\special{fp}%
%
\special{pn 8}%
\special{pa 1360 960}%
\special{pa 1360 1240}%
\special{fp}%
%
\special{pn 8}%
\special{pa 1360 1240}%
\special{pa 1640 1240}%
\special{fp}%
%
\special{pn 8}%
\special{pa 1640 1240}%
\special{pa 1640 960}%
\special{fp}%
%
\special{pn 8}%
\special{pa 1640 960}%
\special{pa 1360 960}%
\special{fp}%
%
\special{pn 8}%
\special{pa 1360 960}%
\special{pa 1360 680}%
\special{fp}%
%
\special{pn 8}%
\special{pa 1360 680}%
\special{pa 1080 680}%
\special{fp}%
%
\special{pn 8}%
\special{pa 1080 680}%
\special{pa 1080 400}%
\special{fp}%
%
\special{pn 8}%
\special{pa 1080 680}%
\special{pa 800 680}%
\special{fp}%
%
\special{pn 8}%
\special{pa 1640 1240}%
\special{pa 1640 1520}%
\special{fp}%
%
\special{pn 8}%
\special{pa 1640 1240}%
\special{pa 1920 1240}%
\special{fp}%
%
\special{pn 8}%
\special{pa 1080 1240}%
\special{pa 800 1240}%
\special{fp}%
%
\special{pn 8}%
\special{pa 1080 1240}%
\special{pa 1080 1520}%
\special{fp}%
%
\special{pn 8}%
\special{pa 1920 680}%
\special{pa 1640 680}%
\special{fp}%
%
\special{pn 8}%
\special{pa 1640 680}%
\special{pa 1640 400}%
\special{fp}%
%
\special{pn 8}%
\special{pa 1360 1240}%
\special{pa 1080 1240}%
\special{dt 0.045}%
%
\special{pn 8}%
\special{pa 1080 1240}%
\special{pa 1080 960}%
\special{dt 0.045}%
%
\special{pn 8}%
\special{pa 1080 960}%
\special{pa 800 960}%
\special{dt 0.045}%
%
\special{pn 8}%
\special{pa 1360 680}%
\special{pa 1640 680}%
\special{dt 0.045}%
%
\special{pn 8}%
\special{pa 1360 680}%
\special{pa 1360 400}%
\special{dt 0.045}%
%
\special{pn 8}%
\special{pa 1360 1240}%
\special{pa 1360 1520}%
\special{dt 0.045}%
%
\special{pn 8}%
\special{pa 1640 960}%
\special{pa 1640 680}%
\special{dt 0.045}%
%
\special{pn 8}%
\special{pa 1640 960}%
\special{pa 1920 960}%
\special{dt 0.045}%
%
\special{pn 8}%
\special{sh 0}%
\special{ar 1080 960 36 36  0.0000000 6.2831853}%
%
\special{pn 8}%
\special{sh 0}%
\special{ar 1360 680 36 36  0.0000000 6.2831853}%
%
\special{pn 8}%
\special{sh 1.000}%
\special{ar 1080 1520 36 36  0.0000000 6.2831853}%
%
\special{pn 8}%
\special{sh 1.000}%
\special{ar 1360 1800 36 36  0.0000000 6.2831853}%
%
\special{pn 8}%
\special{sh 1.000}%
\special{ar 1640 2080 36 36  0.0000000 6.2831853}%
%
\special{pn 8}%
\special{pa 1080 1520}%
\special{pa 1080 1800}%
\special{fp}%
%
\special{pn 8}%
\special{pa 1080 1800}%
\special{pa 1360 1800}%
\special{fp}%
%
\special{pn 8}%
\special{pa 1360 1800}%
\special{pa 1360 2080}%
\special{fp}%
%
\special{pn 8}%
\special{pa 1360 2080}%
\special{pa 1640 2080}%
\special{fp}%
%
\special{pn 8}%
\special{pa 1640 2080}%
\special{pa 1640 1800}%
\special{fp}%
%
\special{pn 8}%
\special{pa 1640 1800}%
\special{pa 1360 1800}%
\special{fp}%
%
\special{pn 8}%
\special{pa 1360 1800}%
\special{pa 1360 1520}%
\special{fp}%
%
\special{pn 8}%
\special{pa 1360 1520}%
\special{pa 1080 1520}%
\special{fp}%
%
\special{pn 8}%
\special{pa 1080 1520}%
\special{pa 1080 1240}%
\special{fp}%
%
\special{pn 8}%
\special{pa 1080 1520}%
\special{pa 800 1520}%
\special{fp}%
%
\special{pn 8}%
\special{pa 1640 2080}%
\special{pa 1640 2360}%
\special{fp}%
%
\special{pn 8}%
\special{pa 1640 2080}%
\special{pa 1920 2080}%
\special{fp}%
%
\special{pn 8}%
\special{pa 1080 2080}%
\special{pa 800 2080}%
\special{fp}%
%
\special{pn 8}%
\special{pa 1080 2080}%
\special{pa 1080 2360}%
\special{fp}%
%
\special{pn 8}%
\special{pa 1920 1520}%
\special{pa 1640 1520}%
\special{fp}%
%
\special{pn 8}%
\special{pa 1640 1520}%
\special{pa 1640 1240}%
\special{fp}%
%
\special{pn 8}%
\special{pa 1360 2080}%
\special{pa 1080 2080}%
\special{dt 0.045}%
%
\special{pn 8}%
\special{pa 1080 2080}%
\special{pa 1080 1800}%
\special{dt 0.045}%
%
\special{pn 8}%
\special{pa 1080 1800}%
\special{pa 800 1800}%
\special{dt 0.045}%
%
\special{pn 8}%
\special{pa 1360 1520}%
\special{pa 1640 1520}%
\special{dt 0.045}%
%
\special{pn 8}%
\special{pa 1360 1520}%
\special{pa 1360 1240}%
\special{dt 0.045}%
%
\special{pn 8}%
\special{pa 1360 2080}%
\special{pa 1360 2360}%
\special{dt 0.045}%
%
\special{pn 8}%
\special{pa 1640 1800}%
\special{pa 1640 1520}%
\special{dt 0.045}%
%
\special{pn 8}%
\special{pa 1640 1800}%
\special{pa 1920 1800}%
\special{dt 0.045}%
%
\special{pn 8}%
\special{sh 0}%
\special{ar 1360 2080 36 36  0.0000000 6.2831853}%
%
\special{pn 8}%
\special{sh 0}%
\special{ar 1080 2080 36 36  0.0000000 6.2831853}%
%
\special{pn 8}%
\special{sh 0}%
\special{ar 1080 1800 36 36  0.0000000 6.2831853}%
%
\special{pn 8}%
\special{sh 0}%
\special{ar 1360 1520 36 36  0.0000000 6.2831853}%
%
\special{pn 8}%
\special{sh 1.000}%
\special{ar 1920 1520 36 36  0.0000000 6.2831853}%
%
\special{pn 8}%
\special{sh 1.000}%
\special{ar 2200 1800 36 36  0.0000000 6.2831853}%
%
\special{pn 8}%
\special{sh 1.000}%
\special{ar 2480 2080 36 36  0.0000000 6.2831853}%
%
\special{pn 8}%
\special{pa 1920 1520}%
\special{pa 1920 1800}%
\special{fp}%
%
\special{pn 8}%
\special{pa 1920 1800}%
\special{pa 2200 1800}%
\special{fp}%
%
\special{pn 8}%
\special{pa 2200 1800}%
\special{pa 2200 2080}%
\special{fp}%
%
\special{pn 8}%
\special{pa 2200 2080}%
\special{pa 2480 2080}%
\special{fp}%
%
\special{pn 8}%
\special{pa 2480 2080}%
\special{pa 2480 1800}%
\special{fp}%
%
\special{pn 8}%
\special{pa 2480 1800}%
\special{pa 2200 1800}%
\special{fp}%
%
\special{pn 8}%
\special{pa 2200 1800}%
\special{pa 2200 1520}%
\special{fp}%
%
\special{pn 8}%
\special{pa 2200 1520}%
\special{pa 1920 1520}%
\special{fp}%
%
\special{pn 8}%
\special{pa 1920 1520}%
\special{pa 1920 1240}%
\special{fp}%
%
\special{pn 8}%
\special{pa 1920 1520}%
\special{pa 1640 1520}%
\special{fp}%
%
\special{pn 8}%
\special{pa 2480 2080}%
\special{pa 2480 2360}%
\special{fp}%
%
\special{pn 8}%
\special{pa 2480 2080}%
\special{pa 2760 2080}%
\special{fp}%
%
\special{pn 8}%
\special{pa 1920 2080}%
\special{pa 1640 2080}%
\special{fp}%
%
\special{pn 8}%
\special{pa 1920 2080}%
\special{pa 1920 2360}%
\special{fp}%
%
\special{pn 8}%
\special{pa 2760 1520}%
\special{pa 2480 1520}%
\special{fp}%
%
\special{pn 8}%
\special{pa 2480 1520}%
\special{pa 2480 1240}%
\special{fp}%
%
\special{pn 8}%
\special{pa 2200 2080}%
\special{pa 1920 2080}%
\special{dt 0.045}%
%
\special{pn 8}%
\special{pa 1920 2080}%
\special{pa 1920 1800}%
\special{dt 0.045}%
%
\special{pn 8}%
\special{pa 1920 1800}%
\special{pa 1640 1800}%
\special{dt 0.045}%
%
\special{pn 8}%
\special{pa 2200 1520}%
\special{pa 2480 1520}%
\special{dt 0.045}%
%
\special{pn 8}%
\special{pa 2200 1520}%
\special{pa 2200 1240}%
\special{dt 0.045}%
%
\special{pn 8}%
\special{pa 2200 2080}%
\special{pa 2200 2360}%
\special{dt 0.045}%
%
\special{pn 8}%
\special{pa 2480 1800}%
\special{pa 2480 1520}%
\special{dt 0.045}%
%
\special{pn 8}%
\special{pa 2480 1800}%
\special{pa 2760 1800}%
\special{dt 0.045}%
%
\special{pn 8}%
\special{sh 0}%
\special{ar 2200 2080 36 36  0.0000000 6.2831853}%
%
\special{pn 8}%
\special{sh 0}%
\special{ar 1920 2080 36 36  0.0000000 6.2831853}%
%
\special{pn 8}%
\special{sh 0}%
\special{ar 1920 1800 36 36  0.0000000 6.2831853}%
%
\special{pn 8}%
\special{sh 1.000}%
\special{ar 1920 680 36 36  0.0000000 6.2831853}%
%
\special{pn 8}%
\special{sh 1.000}%
\special{ar 2200 960 36 36  0.0000000 6.2831853}%
%
\special{pn 8}%
\special{sh 1.000}%
\special{ar 2480 1240 36 36  0.0000000 6.2831853}%
%
\special{pn 8}%
\special{pa 1920 680}%
\special{pa 1920 960}%
\special{fp}%
%
\special{pn 8}%
\special{pa 1920 960}%
\special{pa 2200 960}%
\special{fp}%
%
\special{pn 8}%
\special{pa 2200 960}%
\special{pa 2200 1240}%
\special{fp}%
%
\special{pn 8}%
\special{pa 2200 1240}%
\special{pa 2480 1240}%
\special{fp}%
%
\special{pn 8}%
\special{pa 2480 1240}%
\special{pa 2480 960}%
\special{fp}%
%
\special{pn 8}%
\special{pa 2480 960}%
\special{pa 2200 960}%
\special{fp}%
%
\special{pn 8}%
\special{pa 2200 960}%
\special{pa 2200 680}%
\special{fp}%
%
\special{pn 8}%
\special{pa 2200 680}%
\special{pa 1920 680}%
\special{fp}%
%
\special{pn 8}%
\special{pa 1920 680}%
\special{pa 1920 400}%
\special{fp}%
%
\special{pn 8}%
\special{pa 1920 680}%
\special{pa 1640 680}%
\special{fp}%
%
\special{pn 8}%
\special{pa 2480 1240}%
\special{pa 2480 1520}%
\special{fp}%
%
\special{pn 8}%
\special{pa 2480 1240}%
\special{pa 2760 1240}%
\special{fp}%
%
\special{pn 8}%
\special{pa 1920 1240}%
\special{pa 1640 1240}%
\special{fp}%
%
\special{pn 8}%
\special{pa 1920 1240}%
\special{pa 1920 1520}%
\special{fp}%
%
\special{pn 8}%
\special{pa 2760 680}%
\special{pa 2480 680}%
\special{fp}%
%
\special{pn 8}%
\special{pa 2480 680}%
\special{pa 2480 400}%
\special{fp}%
%
\special{pn 8}%
\special{pa 2200 1240}%
\special{pa 1920 1240}%
\special{dt 0.045}%
%
\special{pn 8}%
\special{pa 1920 1240}%
\special{pa 1920 960}%
\special{dt 0.045}%
%
\special{pn 8}%
\special{pa 1920 960}%
\special{pa 1640 960}%
\special{dt 0.045}%
%
\special{pn 8}%
\special{pa 2200 680}%
\special{pa 2480 680}%
\special{dt 0.045}%
%
\special{pn 8}%
\special{pa 2200 680}%
\special{pa 2200 400}%
\special{dt 0.045}%
%
\special{pn 8}%
\special{pa 2200 1240}%
\special{pa 2200 1520}%
\special{dt 0.045}%
%
\special{pn 8}%
\special{pa 2480 960}%
\special{pa 2480 680}%
\special{dt 0.045}%
%
\special{pn 8}%
\special{pa 2480 960}%
\special{pa 2760 960}%
\special{dt 0.045}%
%
\special{pn 8}%
\special{sh 0}%
\special{ar 2200 1240 36 36  0.0000000 6.2831853}%
%
\special{pn 8}%
\special{sh 0}%
\special{ar 1920 1240 36 36  0.0000000 6.2831853}%
%
\special{pn 8}%
\special{sh 0}%
\special{ar 1920 960 36 36  0.0000000 6.2831853}%
%
\special{pn 8}%
\special{sh 0}%
\special{ar 2200 680 36 36  0.0000000 6.2831853}%
%
\special{pn 8}%
\special{pa 2480 1240}%
\special{pa 2760 1240}%
\special{fp}%
%
\special{pn 8}%
\special{pa 2760 680}%
\special{pa 2480 680}%
\special{fp}%
%
\special{pn 8}%
\special{pa 2480 960}%
\special{pa 2760 960}%
\special{dt 0.045}%
%
\special{pn 8}%
\special{pa 2760 1520}%
\special{pa 2760 1240}%
\special{fp}%
%
\special{pn 8}%
\special{pa 3320 1520}%
\special{pa 3320 1240}%
\special{fp}%
%
\special{pn 8}%
\special{pa 3040 1520}%
\special{pa 3040 1240}%
\special{dt 0.045}%
%
\special{pn 8}%
\special{sh 1.000}%
\special{ar 2760 680 36 36  0.0000000 6.2831853}%
%
\special{pn 8}%
\special{sh 1.000}%
\special{ar 3040 960 36 36  0.0000000 6.2831853}%
%
\special{pn 8}%
\special{sh 1.000}%
\special{ar 3320 1240 36 36  0.0000000 6.2831853}%
%
\special{pn 8}%
\special{pa 2760 680}%
\special{pa 2760 960}%
\special{fp}%
%
\special{pn 8}%
\special{pa 2760 960}%
\special{pa 3040 960}%
\special{fp}%
%
\special{pn 8}%
\special{pa 3040 960}%
\special{pa 3040 1240}%
\special{fp}%
%
\special{pn 8}%
\special{pa 3040 1240}%
\special{pa 3320 1240}%
\special{fp}%
%
\special{pn 8}%
\special{pa 3320 1240}%
\special{pa 3320 960}%
\special{fp}%
%
\special{pn 8}%
\special{pa 3320 960}%
\special{pa 3040 960}%
\special{fp}%
%
\special{pn 8}%
\special{pa 3040 960}%
\special{pa 3040 680}%
\special{fp}%
%
\special{pn 8}%
\special{pa 3040 680}%
\special{pa 2760 680}%
\special{fp}%
%
\special{pn 8}%
\special{pa 2760 680}%
\special{pa 2760 400}%
\special{fp}%
%
\special{pn 8}%
\special{pa 2760 680}%
\special{pa 2480 680}%
\special{fp}%
%
\special{pn 8}%
\special{pa 3320 1240}%
\special{pa 3320 1520}%
\special{fp}%
%
\special{pn 8}%
\special{pa 3320 1240}%
\special{pa 3600 1240}%
\special{fp}%
%
\special{pn 8}%
\special{pa 2760 1240}%
\special{pa 2480 1240}%
\special{fp}%
%
\special{pn 8}%
\special{pa 2760 1240}%
\special{pa 2760 1520}%
\special{fp}%
%
\special{pn 8}%
\special{pa 3600 680}%
\special{pa 3320 680}%
\special{fp}%
%
\special{pn 8}%
\special{pa 3320 680}%
\special{pa 3320 400}%
\special{fp}%
%
\special{pn 8}%
\special{pa 3040 1240}%
\special{pa 2760 1240}%
\special{dt 0.045}%
%
\special{pn 8}%
\special{pa 2760 1240}%
\special{pa 2760 960}%
\special{dt 0.045}%
%
\special{pn 8}%
\special{pa 2760 960}%
\special{pa 2480 960}%
\special{dt 0.045}%
%
\special{pn 8}%
\special{pa 3040 680}%
\special{pa 3320 680}%
\special{dt 0.045}%
%
\special{pn 8}%
\special{pa 3040 680}%
\special{pa 3040 400}%
\special{dt 0.045}%
%
\special{pn 8}%
\special{pa 3040 1240}%
\special{pa 3040 1520}%
\special{dt 0.045}%
%
\special{pn 8}%
\special{pa 3320 960}%
\special{pa 3320 680}%
\special{dt 0.045}%
%
\special{pn 8}%
\special{pa 3320 960}%
\special{pa 3600 960}%
\special{dt 0.045}%
%
\special{pn 8}%
\special{sh 0}%
\special{ar 2760 960 36 36  0.0000000 6.2831853}%
%
\special{pn 8}%
\special{sh 0}%
\special{ar 3040 680 36 36  0.0000000 6.2831853}%
%
\special{pn 8}%
\special{sh 0}%
\special{ar 3320 680 36 36  0.0000000 6.2831853}%
%
\special{pn 8}%
\special{sh 0}%
\special{ar 3320 960 36 36  0.0000000 6.2831853}%
%
\special{pn 8}%
\special{pa 2480 2080}%
\special{pa 2760 2080}%
\special{fp}%
%
\special{pn 8}%
\special{pa 2760 1520}%
\special{pa 2480 1520}%
\special{fp}%
%
\special{pn 8}%
\special{pa 2480 1800}%
\special{pa 2760 1800}%
\special{dt 0.045}%
%
\special{pn 8}%
\special{pa 2760 2360}%
\special{pa 2760 2080}%
\special{fp}%
%
\special{pn 8}%
\special{pa 3320 2360}%
\special{pa 3320 2080}%
\special{fp}%
%
\special{pn 8}%
\special{pa 3040 2360}%
\special{pa 3040 2080}%
\special{dt 0.045}%
%
\special{pn 8}%
\special{sh 1.000}%
\special{ar 2760 1520 36 36  0.0000000 6.2831853}%
%
\special{pn 8}%
\special{sh 1.000}%
\special{ar 3040 1800 36 36  0.0000000 6.2831853}%
%
\special{pn 8}%
\special{sh 1.000}%
\special{ar 3320 2080 36 36  0.0000000 6.2831853}%
%
\special{pn 8}%
\special{pa 2760 1520}%
\special{pa 2760 1800}%
\special{fp}%
%
\special{pn 8}%
\special{pa 2760 1800}%
\special{pa 3040 1800}%
\special{fp}%
%
\special{pn 8}%
\special{pa 3040 1800}%
\special{pa 3040 2080}%
\special{fp}%
%
\special{pn 8}%
\special{pa 3040 2080}%
\special{pa 3320 2080}%
\special{fp}%
%
\special{pn 8}%
\special{pa 3320 2080}%
\special{pa 3320 1800}%
\special{fp}%
%
\special{pn 8}%
\special{pa 3320 1800}%
\special{pa 3040 1800}%
\special{fp}%
%
\special{pn 8}%
\special{pa 3040 1800}%
\special{pa 3040 1520}%
\special{fp}%
%
\special{pn 8}%
\special{pa 3040 1520}%
\special{pa 2760 1520}%
\special{fp}%
%
\special{pn 8}%
\special{pa 2760 1520}%
\special{pa 2760 1240}%
\special{fp}%
%
\special{pn 8}%
\special{pa 2760 1520}%
\special{pa 2480 1520}%
\special{fp}%
%
\special{pn 8}%
\special{pa 3320 2080}%
\special{pa 3320 2360}%
\special{fp}%
%
\special{pn 8}%
\special{pa 3320 2080}%
\special{pa 3600 2080}%
\special{fp}%
%
\special{pn 8}%
\special{pa 2760 2080}%
\special{pa 2480 2080}%
\special{fp}%
%
\special{pn 8}%
\special{pa 2760 2080}%
\special{pa 2760 2360}%
\special{fp}%
%
\special{pn 8}%
\special{pa 3600 1520}%
\special{pa 3320 1520}%
\special{fp}%
%
\special{pn 8}%
\special{pa 3320 1520}%
\special{pa 3320 1240}%
\special{fp}%
%
\special{pn 8}%
\special{pa 3040 2080}%
\special{pa 2760 2080}%
\special{dt 0.045}%
%
\special{pn 8}%
\special{pa 2760 2080}%
\special{pa 2760 1800}%
\special{dt 0.045}%
%
\special{pn 8}%
\special{pa 2760 1800}%
\special{pa 2480 1800}%
\special{dt 0.045}%
%
\special{pn 8}%
\special{pa 3040 1520}%
\special{pa 3320 1520}%
\special{dt 0.045}%
%
\special{pn 8}%
\special{pa 3040 1520}%
\special{pa 3040 1240}%
\special{dt 0.045}%
%
\special{pn 8}%
\special{pa 3040 2080}%
\special{pa 3040 2360}%
\special{dt 0.045}%
%
\special{pn 8}%
\special{pa 3320 1800}%
\special{pa 3320 1520}%
\special{dt 0.045}%
%
\special{pn 8}%
\special{pa 3320 1800}%
\special{pa 3600 1800}%
\special{dt 0.045}%
%
\special{pn 8}%
\special{sh 0}%
\special{ar 3040 2080 36 36  0.0000000 6.2831853}%
%
\special{pn 8}%
\special{sh 0}%
\special{ar 2760 2080 36 36  0.0000000 6.2831853}%
%
\special{pn 8}%
\special{sh 0}%
\special{ar 2760 1800 36 36  0.0000000 6.2831853}%
%
\special{pn 8}%
\special{sh 0}%
\special{ar 3040 1520 36 36  0.0000000 6.2831853}%
%
\special{pn 8}%
\special{sh 0}%
\special{ar 3320 1520 36 36  0.0000000 6.2831853}%
%
\special{pn 8}%
\special{sh 0}%
\special{ar 3320 1800 36 36  0.0000000 6.2831853}%
%
\special{pn 8}%
\special{sh 0}%
\special{ar 1640 1800 36 36  0.0000000 6.2831853}%
%
\special{pn 8}%
\special{sh 0}%
\special{ar 1640 1520 36 36  0.0000000 6.2831853}%
%
\special{pn 8}%
\special{sh 0}%
\special{ar 1360 1240 36 36  0.0000000 6.2831853}%
%
\special{pn 8}%
\special{sh 0}%
\special{ar 1080 1240 36 36  0.0000000 6.2831853}%
%
\special{pn 8}%
\special{sh 0}%
\special{ar 1640 960 36 36  0.0000000 6.2831853}%
%
\special{pn 8}%
\special{sh 0}%
\special{ar 1640 680 36 36  0.0000000 6.2831853}%
%
\special{pn 8}%
\special{sh 0}%
\special{ar 2200 1520 36 36  0.0000000 6.2831853}%
%
\special{pn 8}%
\special{sh 0}%
\special{ar 2480 1800 36 36  0.0000000 6.2831853}%
%
\special{pn 8}%
\special{sh 0}%
\special{ar 2480 1520 36 36  0.0000000 6.2831853}%
%
\special{pn 8}%
\special{sh 0}%
\special{ar 2760 1240 36 36  0.0000000 6.2831853}%
%
\special{pn 8}%
\special{sh 0}%
\special{ar 3040 1240 36 36  0.0000000 6.2831853}%
%
\special{pn 8}%
\special{sh 0}%
\special{ar 2480 680 36 36  0.0000000 6.2831853}%
%
\special{pn 8}%
\special{sh 0}%
\special{ar 2480 960 36 36  0.0000000 6.2831853}%
%
\special{pn 8}%
\special{pa 3600 400}%
\special{pa 800 400}%
\special{da 0.070}%
%
\special{pn 8}%
\special{pa 800 400}%
\special{pa 800 2360}%
\special{da 0.070}%
%
\special{pn 8}%
\special{pa 800 2360}%
\special{pa 3600 2360}%
\special{da 0.070}%
%
\special{pn 8}%
\special{pa 3600 2360}%
\special{pa 3600 400}%
\special{da 0.070}%
%
\special{pn 13}%
\special{pa 3670 2360}%
\special{pa 3670 400}%
\special{fp}%
\special{sh 1}%
\special{pa 3670 400}%
\special{pa 3650 468}%
\special{pa 3670 454}%
\special{pa 3690 468}%
\special{pa 3670 400}%
\special{fp}%
%
\special{pn 13}%
\special{pa 730 2360}%
\special{pa 730 400}%
\special{fp}%
\special{sh 1}%
\special{pa 730 400}%
\special{pa 710 468}%
\special{pa 730 454}%
\special{pa 750 468}%
\special{pa 730 400}%
\special{fp}%
%
\special{pn 13}%
\special{pa 800 330}%
\special{pa 3600 330}%
\special{fp}%
\special{sh 1}%
\special{pa 3600 330}%
\special{pa 3534 310}%
\special{pa 3548 330}%
\special{pa 3534 350}%
\special{pa 3600 330}%
\special{fp}%
%
\special{pn 13}%
\special{pa 800 2430}%
\special{pa 3600 2430}%
\special{fp}%
\special{sh 1}%
\special{pa 3600 2430}%
\special{pa 3534 2410}%
\special{pa 3548 2430}%
\special{pa 3534 2450}%
\special{pa 3600 2430}%
\special{fp}%
\put(10.1000,-6.0300){\makebox(0,0){$s$}}%
%
\special{pn 13}%
\special{pa 800 1380}%
\special{pa 3600 1380}%
\special{dt 0.045}%
%
\special{pn 13}%
\special{pa 2620 400}%
\special{pa 2620 2360}%
\special{dt 0.045}%
%
\special{pn 13}%
\special{pa 1780 2360}%
\special{pa 1780 400}%
\special{dt 0.045}%
\end{picture}%
\caption{The toroidal grid graph $Q(6,9)$ covered by $Q(3,3)$'s with even kernel graphs}
\label{vicscover}
\end{figure}

Note that $Q(m,n)$ can be ``covered" by $Q(c,c)$'s,
and hence, we can obtain an even kernel graph of $Q(m,n)$ by combining that of $Q(c,c)$, 
as shown in Figure~\ref{vicscover}.
(Figure~\ref{vicscover} represents $Q(6,9)$ covered by six $Q(3,3)$'s 
with an even kernel graph shown in Figure~\ref{vics}.)
Therefore, since $Q(m,n)$ has an even kernel graph if $\gcd(m,n)=c > 1$,
the theorem holds by Lemma~\ref{lem_vic}.
\end{proof}

\begin{thm}\label{thm_torogrid_al2}
If $\gcd(2,n)=1$, then Alice can win the game on $Q(2,n)$.
\end{thm}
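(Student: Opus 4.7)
The plan is to exhibit an explicit winning strategy for Alice. Relabel the vertices of $Q(2,n)$ by writing $a_j := (u_0,v_j)$ and $b_j := (u_1,v_j)$, so that the starting vertex is $s=a_0$, the top and bottom cycles are $a_0 a_1\cdots a_{n-1} a_0$ and $b_0 b_1\cdots b_{n-1} b_0$, and at each column $j$ two parallel rung edges $e_j^1, e_j^2$ join $a_j$ to $b_j$ (this is what makes $Q(2,n)$ $4$-regular). Alice opens with $a_0 \to a_1$. Thereafter her response is completely determined by Bob's immediately preceding move: if Bob moved $a_k \to b_k$ along a rung $e_k^i$ (a \emph{deviation}), Alice replies $b_k \to a_k$ along the other rung $e_k^{3-i}$ (a \emph{cancellation}); and if Bob moved $a_k \to a_{k+1}$ (a \emph{continuation}, indices mod $n$), Alice replies $a_{k+1} \to a_{k+2}$.

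The first task is to verify that Alice's moves are always legal. After a deviation, exactly one rung at column $k$ has been used, so Alice can always cancel; after a cancellation, both rungs at column $k$ together with the incoming edge $a_{k-1}a_k$ have been consumed, so $a_k a_{k+1}$ is Bob's unique remaining edge and he is forced to continue. Since the trajectory visits each column at most once (possibly with a brief deviation-cancellation detour), no edge that the strategy calls for has been used before. The second task is a parity analysis: an easy induction on the move number shows that after every Alice move the token lies at some $a_k$ with $k$ odd, while after every Bob move it lies at $a_k$ with $k$ even (continuation) or at $b_k$ with $k$ odd (deviation). A deviation-cancellation-forced-continuation block inserts exactly two moves compared with a single continuation but leaves the reached column unchanged, so if Bob deviates $d$ times in total the token first returns to $a_0$ on move $n+2d$; since $n$ is odd this move is Alice's, and she plays $a_{n-1}\to a_0$ to win.

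The remaining obligation is to check that Bob cannot win earlier. The neighbors of $a_0$ are $a_1$, $a_{n-1}$, and $b_0$ (via two rungs). On Bob's turn the token always sits at some $a_k$ with $k$ odd, so his only candidate moves to $a_0$ would be $a_1 \to a_0$ (blocked, because the edge $a_0 a_1$ was consumed on move~$1$) or $a_{n-1}\to a_0$ (impossible, because $n-1$ is even while $k$ is odd), and $b_0$ is never visited under Alice's strategy. The main technical obstacle is precisely this case analysis ruling out any way for Bob to escape from the strategy or pre-empt Alice; everything reduces to maintaining the parity invariant, which is made possible by the hypothesis that $n$ is odd.
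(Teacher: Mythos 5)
Your proposal is correct and follows essentially the same strategy as the paper's own proof: open with $(u_0,v_0)\to(u_0,v_1)$, answer any rung move by Bob with the parallel rung back, and answer any continuation $a_k\to a_{k+1}$ with $a_{k+1}\to a_{k+2}$, concluding by the parity of $n$. Your additional checks (legality of every prescribed move, and the verification that Bob can never reach $a_0$ first) are correct and merely make explicit what the paper leaves as an observation.
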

\begin{proof}
Without loss of generality, we set $(u_0,v_0)$ be a starting vertex.
Since $\gcd(2,n)=1$, $n$ is odd.
Alice first moves the token to $(u_0,v_1)$. 
After that, Alice plays the game according to Bob's move as follows:
\begin{itemize}
\item[(i)] If Bob moves the token to $(u_1,v_i)$ through an edge $(u_0,v_i)(u_1,v_i)$,
Alice moves it to $(u_0,v_i)$ using $(u_1,v_i)(u_0,v_i)$.
\item[(ii)] If Bob moves the token to $(u_0,v_{i+1})$,
then Alice moves it to $(u_0,v_{i+2})$,
where subscripts modulo $n$.
\end{itemize}
Observe that 
the strategy~(i) can be always applied
and that after the strategy (i) is applied,
Bob must move the token from $(u_0,v_{i})$ to $(u_0,v_{i+1})$.
Note that the index $i+1$ is always even when Alice uses the strategy~(ii).
Therefore, since $n$ is odd,
Alice finally moves the token from $(u_0,v_{n-1})$ to $(u_0,v_0)$,
that is, she wins the game.
\end{proof}

\begin{thm}\label{thm_torogrid_al3}
If $\gcd(3,n)=1$, then Alice can win the game on $Q(3,n)$.
\end{thm}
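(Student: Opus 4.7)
By the vertex-transitivity of $Q(3,n)$, we may assume $s = (u_0, v_0)$. The plan is to describe a winning strategy for Alice by case analysis, in the same spirit as Theorem~\ref{thm_torogrid_al2} but with a richer set of sub-cases because each column of $Q(3,n)$ is a triangle rather than a multi-edge. Alice opens with $(u_0, v_0) \to (u_0, v_1)$, beginning a traversal of the $u_0$-row.

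Her strategy in the remaining rounds is dictated by Bob's previous move. The two main response rules are: (a) if Bob advances the token along the $u_0$-row to $(u_0, v_{j+1})$, Alice does \emph{not} continue in the row (unlike in the $Q(2,n)$ case, where a parallel edge was available), but instead diverts into the triangle at the new column by moving to $(u_1, v_{j+1})$ (or, symmetrically, $(u_2, v_{j+1})$); (b) if Bob crosses the triangle at the current column $v_j$ from $(u_0, v_j)$ to $(u_k, v_j)$, then Alice moves to the third triangle vertex $(u_{3-k}, v_j)$. After either response the token sits off the $u_0$-row, and Alice's subsequent moves are specified so that, after one more pair of moves, the token is returned to the $u_0$-row at a column one step further. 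Thus each combined exchange costs four moves and advances the column index by exactly one.

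Combining the opening with $n-1$ such exchanges, Alice brings the token back to $(u_0, v_n) = (u_0, v_0) = s$ on her own turn, using a total of $1 + 4(n-1) = 4n - 3$ moves, which is odd; hence Alice wins. The hypothesis $\gcd(3, n) = 1$ is essential in this count: when $3 \mid n$ the column progression combines with the wraparound of the $u_0$-row to yield an even kernel graph for Bob (consistent with Theorem~\ref{thm_torogrid_bob}), allowing Bob to force the token to $s$ prematurely. When $\gcd(3, n) = 1$ this obstruction vanishes, and Alice's strategy is uninterrupted.

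The main obstacles in completing the proof are threefold. First, one must verify that Alice's prescribed response always uses an unused edge, which requires round-by-round bookkeeping of the used edges at each vertex of $Q(3,n)$. Second, one must handle Bob's \emph{escape} moves out of a partially executed triangle exchange: rather than closing the triangle back to $(u_0, v_j)$, Bob may move along the $u_{3-k}$-row to a neighbouring column, and Alice must have a tailored sub-response that still returns the token to the $u_0$-row one step further without losing the parity argument. Third, one must rule out every configuration in which Bob could preempt Alice by moving the token directly to $s$; this requires a careful case analysis at each of the four neighbours of $s$, and it is here that the hypothesis $\gcd(3, n) = 1$ is used to exclude the short wraparound paths through which Bob could otherwise reach $s$ on his turn.
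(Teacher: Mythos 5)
This is a plan rather than a proof, and you say so yourself: your final paragraph lists three ``main obstacles'' (edge bookkeeping, Bob's escape moves, and Bob preempting by returning to $s$) that are precisely the content of the theorem, not peripheral details. Until those are resolved there is no argument. The parity computation $1+4(n-1)=4n-3$ carries no force on its own: in any play of this game the total number of moves is odd exactly when Alice makes the last move, so ``the count is odd, hence Alice wins'' is circular unless you have already shown that Bob cannot deviate from the four-move exchanges and cannot end the game earlier. Those are the claims that need proof.

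More seriously, your central invariant --- that after each exchange the token returns to the $u_0$-row, one column further along --- is structurally suspect, because if Alice could enforce it against arbitrary play by Bob, the hypothesis $\gcd(3,n)=1$ would never be used and the same strategy would win on $Q(3,3k)$, contradicting Theorem~\ref{thm_torogrid_bob}. Your attempt to locate the use of coprimality (``when $3\mid n$ the wraparound yields an even kernel graph for Bob'') explains why the conclusion fails for $3\mid n$ but not where \emph{your strategy} breaks; a correct proof must exhibit the step that requires $3\nmid n$. Compare the paper's proof: Alice opens into the column triangle with $(u_0,v_0)\to(u_1,v_0)$ rather than along the row, and under the main response rule the \emph{row} index drifts by one for each column advanced, so the token reaches column $v_{n-2}$ at row $u_{n-2 \bmod 3}$. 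Coprimality is used exactly there, to rule out the one landing row from which Alice cannot finish; the other two rows are then closed out by short forced sequences. The paper also verifies, for its rules (i) and (ii), that Bob's replies are forced because the previously deleted edges leave the relevant vertex with a unique remaining edge --- the kind of bookkeeping you defer. To salvage your approach you would need to replace the ``return to the $u_0$-row'' invariant with one in which the row index genuinely depends on the column index modulo $3$, and then carry out the forced-move analysis for each of Bob's options at every stage.
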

\begin{proof}
Without loss of generality, a starting vertex $s$ is $(u_0,v_0)$.
Moreover, by Theorem~\ref{thm_torogrid_al2}, we may assume that $n\geq 4$.

Alice first moves the token to $(u_1,v_0)$.
If Bob moves it to $(u_2,v_0)$, then Alice wins the game.
Thus Bob moves the token to $(u_1,v_1)$ by symmetry
and then Alice moves it to $(u_2,v_1)$.
Next Bob has to move the token to $(u_2,v_2)$
(otherwise Alice can move it back to $s$)
and then Alice moves it to $(u_0,v_2)$.
After that, Alice plays the game according to Bob's move 
until the token is moved to $(u_i,v_{n-2})$ for some $i \in \{0,1,2\}$ by herself, 
as follows
(where $i$ and $j$ in the following are modulo $3$ and $n$, respectively):
\begin{itemize}
\item[(i)] If Bob moves the token on $(u_i,v_j)$ to $(u_i,v_{j-1})$,
then Alice moves it to $(u_{i},v_{j-2})$.
\item[(ii)] If Bob moves the token on $(u_i,v_j)$ to $(u_{i+1},v_{j})$,
then Alice moves it to $(u_{i+1},v_{j-1})$.
\item[(iii)] If Bob moves the token on $(u_i,v_j)$ to $(u_i,v_{j+1})$
then Alice moves it to $(u_{i+1},v_{j+1})$.
\end{itemize}

Observe that 
in the above beginning moves from the starting vertex to $(u_0,v_2)$,
Alice applies only the strategy~(iii) twice except her first move.

In the strategy~(i), after Alice's move,
$(u_{i},v_{j-2})$ is incident to the unique edge $(u_{i-1},v_{j-2})(u_{i},v_{j-2})$
unless $(u_{i},v_{j-2}) = (u_0,v_0)$,
since two edges $(u_i,v_{j-3})(u_i,v_{j-2})$ and $(u_i,v_{j-2})(u_{i+1},v_{j-2})$ 
are used by the moves in~(iii).
Similarly, for the strategy~(ii),
$(u_{i+1},v_{j-1})$ is incident to the unique edge $(u_{i},v_{j-1})(u_{i+1},v_{j-1})$.
Thus, after Alice's move by the strategy~(i) (resp.,~(ii)),
Bob must move the token to $(u_{i-1},v_{j-2})$ (resp., $(u_{i},v_{j-1})$).
Hereafter,
Alice moves the token to $(u_{i-1},v_{j-3})$ (resp., $(u_{i},v_{j-2})$)
and then the same situation occurs of the current vertex.
Hence by applying the above move repeatedly,
the token is finally carried to $s$ from $(u_0,v_1)$ by Alice.

Therefore, we may suppose that 
until Alice moves the token to $(u_i,v_{n-2})$ for some $i \in \{0,1,2\}$ by herself, 
she always applies the strategy~(iii), that is,
two indices $i$ and $j$ of a current vertex $(u_i,v_j)$
are alternately increased one by one by Alice and Bob, respectively.
Therefore, we may assume that Alice finally moves the token 
to $(u_0,v_{n-2})$ (resp., $(u_1,v_{n-2})$) from $(u_2,v_{n-2})$ (resp., $(u_0,v_{n-2})$)
depending on $n$;
otherwise, 
i.e., if Alice finally moves the token $(u_2,v_{n-2})$ from $(u_1,v_{n-2})$,
then $n-2 \equiv 1 \pmod{3}$ and hence $n \equiv 0 \pmod{3}$, 
contrary to $\gcd(3,n) = 1$.

Thus the token is now put on $(u_0,v_{n-2})$ or $(u_1,v_{n-2})$.
In the former case, Bob moves to $(u_0,v_{n-1})$ and then Alice wins the game 
by moving it back to $s$.
In the latter case, 
Bob moves to $(u_1,v_{n-1})$ and then Alice moves it to $(u_1,v_0)$.
After that, since Bob must move the token to $(u_2,v_0)$,
Alice wins the game by moving it from $(u_2,v_0)$ to $s$.
Therefore, the theorem holds.
\end{proof}

\begin{thm}\label{thm_torogrid_no_victory}
If $\gcd(m,n)=1$, then there exists no even kernel graph of $Q(m,n)$.
\end{thm}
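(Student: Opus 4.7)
The plan is to translate the existence of an even kernel graph into a linear-algebraic condition over $\mathbb{F}_2$ and then exploit coprimality to rule it out. Suppose $H_s$ is an even kernel graph of $Q(m,n)$ with $s=(u_0,v_0)\in B$, and set $f=\mathbf{1}_B\in \mathbb{F}_2^{V(Q(m,n))}$. Since $B$ is independent, each $v\in B$ has $|N(v)\cap B|=0$; combined with the defining even-kernel condition on the other vertices, every vertex has an even number of neighbors in $B$. Equivalently $Af=0$ over $\mathbb{F}_2$, where $A$ is the adjacency matrix.

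I then identify $\mathbb{F}_2^{V(Q(m,n))}$ with the group algebra $R=\mathbb{F}_2[x,y]/(x^m+1,\,y^n+1)$ via $f\mapsto \sum_{i,j}f(i,j)\,x^iy^j$, under which $A$ acts as multiplication by $\beta=x+x^{-1}+y+y^{-1}$. A direct computation gives $xy\cdot\beta=x^2y+y+xy^2+x=(x+y)(xy+1)$, and since $xy$ is a unit, $\ker A$ corresponds to $\mathrm{Ann}_R\bigl((x+y)(xy+1)\bigr)$.

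Now I use $\gcd(m,n)=1$ to pin down this annihilator. First, $\mathrm{Ann}_R(x+y)=\{0,\mathbf{1}\}$: a polynomial $g$ is annihilated by $x+y$ iff $g(i-1,j)=g(i,j-1)$, i.e. $g$ is constant along orbits of the shift $(+1,-1)$ on $\mathbb{Z}_m\times\mathbb{Z}_n$; coprimality forces a single orbit of size $mn$, so $g$ is constant. Therefore $(x+y)(xy+1)f=0$ implies $(xy+1)f\in\{0,\mathbf{1}\}$. If $(xy+1)f=0$, the analogous $(+1,+1)$-shift argument makes $f$ constant, and $f(s)=1$ then forces $f=\mathbf{1}$ and $B=V(Q(m,n))$, which is not independent, a contradiction.

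The remaining case is $(xy+1)f=\mathbf{1}$, i.e. $f$ alternates along the $(+1,+1)$-orbit of length $mn$, which requires $mn$ to be even. By $\gcd(m,n)=1$, exactly one of $m,n$ is even; WLOG $m$ is even (the other case is symmetric). Then $f$ is determined up to adding $\mathbf{1}$ by the CRT parameterization $k\in\mathbb{Z}_{mn}\leftrightarrow(k\bmod m,\,k\bmod n)$, giving $f(i,j)\equiv k\pmod 2$ (or $k+1\pmod 2$). Comparing $(i,j)$ with its neighbor $(i,j+1)$, their CRT preimages $k$ and $k''$ satisfy $k''-k\equiv0\pmod m$ and $k''-k\equiv1\pmod n$, so $k''-k$ is a multiple of the even number $m$, hence even. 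Thus $f(i,j)=f(i,j+1)$; since $f(s)=1$ gives a vertex in $B$, we obtain two adjacent vertices in $B$, contradicting independence. The main obstacle is the clean factorization $xy\beta=(x+y)(xy+1)$ and the coprimality-based reduction of $\ker A$ to essentially one candidate; once these are in place, the independence failure is automatic.
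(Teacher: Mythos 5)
Your proof is correct, and it takes a genuinely different route from the paper's. The paper argues geometrically: it decomposes a hypothetical even kernel graph $\mathrm{Ev}(m,n)$ into ``clusters'' separated by degree-two white vertices (``stopgaps''), shows that each cluster and each component of the complement is a diagonally oriented rectangle, and concludes that coprimality forces $\mathrm{Ev}(m,n)=Q(m,n)$, which contradicts the non-bipartiteness of $Q(m,n)$ when $\gcd(m,n)=1$. You instead observe that an even kernel is precisely a nonzero element of $\ker A$ over $\mathbb{F}_2$ whose support is independent, pass to the group algebra $\mathbb{F}_2[x,y]/(x^m+1,y^n+1)$, and exploit the factorization $xy(x+x^{-1}+y+y^{-1})=(x+y)(xy+1)$ together with the fact that coprimality makes the $(+1,\pm 1)$-shifts act with a single orbit of length $mn$; this pins $\ker A$ down to the all-ones vector and (when $mn$ is even) the two CRT-alternating vectors, each of which visibly contains two adjacent vertices. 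I checked the factorization, the computation of $\mathrm{Ann}_R(x+y)$, the parity obstruction along the length-$mn$ orbit, and the final comparison of $f(i,j)$ with $f(i,j+1)$: all are sound. As for what each approach buys: the paper's cluster picture stays close to the structures used for the $\gcd(m,n)>1$ case (Theorem~\ref{thm_torogrid_bob}), but the decisive step ``the claims force $\mathrm{Ev}(m,n)=Q(m,n)$'' is only sketched there; your argument is self-contained and fully rigorous, and since it actually computes $\ker A$ it would also classify even kernels for general $m,n$. One caveat, shared with the paper's own proof: both arguments implicitly treat $Q(m,n)$ as the simple $4$-regular graph. When $m=2$ or $n=2$ the paper's $Q(m,n)$ has parallel edges, and your operator $x+x^{-1}+y+y^{-1}$ then double-counts (and hence cancels mod $2$) a neighbour that the even-kernel condition counts once, so those degenerate cases need a separate, elementary verification; for $m,n\geq 3$ your proof is complete.
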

\begin{proof}
Let $\text{Ev}(m,n) \subseteq Q(m,n)$ be an even kernel graph of $Q(m,n)$.
From the definition, any vertex in the white part of $\text{Ev}(m,n)$, 
denoted by $W(m,n)$, 
has two or four neighbours\footnote{A vertex in $W(m,n)$ can have no neighbour, 
but in this case we can remove it from $\text{Ev}(m,n)$.}
and they are in the black part of $\text{Ev}(m,n)$, denoted by $B(m,n)$.
A {\em stopgap} of $\text{Ev}(m,n)$ is 
a vertex in $W(m,n)$ of degree~2 such that its neighbours lie on the same row or column.
When we ignore all stopgaps, 
$\text{Ev}(m,n)$ has several components surrounded by vertices in $W(m,n)$.
Note that any vertex in $B(m,n)$ cannot be adjacent to vertices not in $W(m,n)$.
We denote a component and stopgaps which are its neighbours (if exist) 
as a \textit{cluster} (see Figure~\ref{fig:clusters}).
In Figure~\ref{fig:clusters}, black vertices are in $B(m,n)$, 
gray vertices with bold circle are in $W(m,n)$, 
and gray vertices without edges are not in $\text{Ev}(m,n)$.

\begin{figure}[htb]
\begin{center}
\includegraphics[scale=0.6]{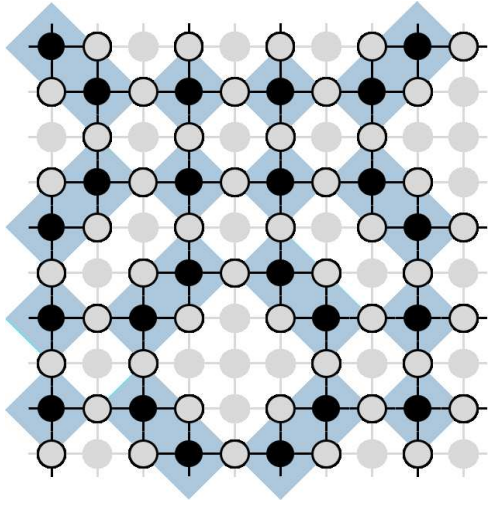}
\caption{An even kernel graph $\text{Ev}(10,10)$ of $Q(10,10)$ 
and its clusters denoted by shaded regions}
\label{fig:clusters}
\end{center}
\end{figure}

Every cluster looks a rectangle rotated 45 degrees.
This means that a cluster has four sides consisting of diagonally consecutive vertices in $W(m,n)$. 
For clusters, we have following claims.

\begin{cl}\label{prop_cluster_rectangle}
Every clusters are rectangles unless ${\rm Ev}(m,n)=Q(m,n)$.
\end{cl}
\begin{proof}
Assume that a cluster $C$ is not a rectangle. 
Then there must exist a vertex in $W(m,n) \subset C$ which is not a stopgap, 
and is adjacent to a vertex not in $\text{Ev}(m,n)$ 
and odd number of vertices in $B(m,n)$
(since all vertices in $B(m,n)$ are of degree~4).
This contradicts the definition of $\text{Ev}(m,n)$.
\end{proof}


\begin{cl}\label{fact_cluster_dual}
Any component in $Q(m,n) \setminus {\rm Ev}(m,n)$ 
induces a rectangular region. 
\end{cl}
\begin{proof}
This follows from the similar discussion 
in the proof of Claim~\ref{prop_cluster_rectangle}.
(For an example of such regions,
see a white region shown in Figure~\ref{fig:clusters}.)
\end{proof}

Using these claims, 
$\text{Ev}(m,n)$ must be $Q(m,n)$ when $\text{gcd}(m,n) = 1$ if exists.
On the other hand, $Q(m,n)$ is not bipartite when $\text{gcd}(m,n) = 1$.
This contradicts the definition of $\text{Ev}(m,n)$.
Therefore there exists no even kernel graph of $Q(m,n)$ if $\text{gcd}(m,n) = 1$
\end{proof}

Under results obtained above,
we conclude the paper with proposing the following conjecture
which implies that
Alice can win the feedback game on $Q(m,n)$ if and only if $\gcd(m,n)=1$.

\begin{conj}\label{conj_torogrid}
Alice can win the game on $Q(m,n)$ if $\gcd(m,n)=1$.
\end{conj}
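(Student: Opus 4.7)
\textit{The plan is to construct an explicit winning strategy for Alice that generalises the ``diagonal chase'' of Theorem~\ref{thm_torogrid_al3} to all coprime pairs $(m,n)$. Since Theorem~\ref{thm_torogrid_al2} already settles the case $\min(m,n)=2$, I may assume $m,n\geq 3$. Set $s=(u_0,v_0)$ and consider the diagonal walk $s=w_0,w_1,w_2,\dots$ in $Q(m,n)$ obtained by alternately taking a row-edge $+u$ and a column-edge $+v$; an easy induction gives $w_k=(u_{\lceil k/2\rceil\bmod m},v_{\lfloor k/2\rfloor\bmod n})$. Because $\gcd(m,n)=1$, the smallest positive integer $a$ with $am\equiv 1\pmod n$ lies in $\{1,\dots,n-1\}$, and one checks that this walk first returns to $s$ at step $k^\star=2am-1$, having used $am$ distinct row-edges and $am-1$ distinct column-edges along the way. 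The crucial point is that $k^\star$ is \emph{odd}, so if Alice can force the game to follow this closed trail she will be the one who plays the final edge back to $s$ and thus wins.}

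\textit{Alice's strategy is then the natural one: her first move is the row-step $s\to(u_1,v_0)$, and thereafter she answers each of Bob's moves by a rule drawn from a short list generalising rules (i)--(iii) in the proof of Theorem~\ref{thm_torogrid_al3}. A ``type-(iii)'' reply extends the diagonal whenever Bob extends it; the remaining rules absorb a row- or column-reverse move of Bob's and merge the token back onto the diagonal a constant number of moves later. The plan is to maintain an invariant that records, for the current vertex, exactly which of its four incident edges are still unused; this invariant should simultaneously guarantee that every Bob move other than one which would land on $s$ (which the invariant forbids before step~$k^\star$) is covered by exactly one rule, and that Alice's prescribed reply is always legal and keeps the trail on track to close at step~$k^\star$.}

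\textit{The main obstacle I anticipate is the combinatorial explosion of cases once $m\geq 4$. In Theorem~\ref{thm_torogrid_al3} only three rules are needed because at $m=3$ each vertex lies in only three rows, so Bob's local options are severely constrained and several potential deviations let Alice immediately play the short $C_3$-cycle back to $s$; for $m\geq 4$ no such short cycle is available and Bob has genuinely more ways to step sideways off the diagonal, so both the catalogue of Alice's responses and the invariant must grow accordingly. The endgame---the last $O(m+n)$ moves, when the trail approaches $s$---will also need a separate, careful argument, and it is here that the minimality of $a$ in $am\equiv 1\pmod n$ is used, to pin down from which neighbour of $s$ the trail closes and so to force the closing step to be an odd (i.e.\ Alice's) move. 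Overall, I expect the proof to follow the same philosophy as Theorem~\ref{thm_torogrid_al3} but with a substantially more intricate invariant and case analysis.}
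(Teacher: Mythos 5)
This statement is Conjecture~\ref{conj_torogrid}: the paper does not prove it, but leaves it open after establishing only the special cases $m=2$ (Theorem~\ref{thm_torogrid_al2}), $m=3$ (Theorem~\ref{thm_torogrid_al3}), and the non-existence of an even kernel graph when $\gcd(m,n)=1$ (Theorem~\ref{thm_torogrid_no_victory}). So there is no proof in the paper to compare against, and the question is whether your proposal itself constitutes one. It does not. What you have verified is only the parity bookkeeping of the intended main line: the diagonal closed trail $w_k=(u_{\lceil k/2\rceil\bmod m},v_{\lfloor k/2\rfloor\bmod n})$ does first return to $s$ at the odd step $k^\star=2am-1$ with $am\equiv 1\pmod n$, and it is edge-disjoint along the way since $am-1<mn$. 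That part is correct and is the natural generalisation of the $m=3$ argument. But a winning strategy is not a trail; it is a response to every legal move of the opponent, and you explicitly defer the entire content of that --- the catalogue of Alice's replies to Bob's off-diagonal moves, the invariant on used edges at the current vertex, and the endgame analysis --- to future work (``the catalogue \dots must grow accordingly,'' ``will also need a separate, careful argument'').

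The deferred part is exactly where the difficulty lives, and there is a concrete reason to doubt that the $m=3$ rules generalise routinely. In Theorem~\ref{thm_torogrid_al3}, rules (i) and (ii) work because after Alice's reply the token sits on a vertex with a \emph{unique} unused incident edge, which forces Bob's next move; this uniqueness is a consequence of $m=3$, where the three row-edges at a vertex form a triangle $C_3$ and two of them have already been consumed by the diagonal's recent passage. For $m\ge 4$ a sideways step by Bob lands on a vertex all four of whose other incident edges are typically still unused, so Bob is not forced anywhere, and the claim that Alice can ``merge the token back onto the diagonal a constant number of moves later'' has no justification --- Bob can wander into a large region of fresh edges whose parity relative to $s$ Alice does not control. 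Your trail also uses only $2am-1$ of the $2mn$ edges, so most of the graph is untouched territory available to Bob. Until you exhibit the actual response rules and prove the invariant closes under every Bob move, the proposal is a plausible plan for attacking the conjecture, not a proof of it.
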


\subsection*{Acknowledgement}
We express our appreciation to Professor Shozo Okada with taking this opportunity,
since the results in Section~\ref{sec_trigrid} are based on
his document for a special class.
We also appreciate Kazuki Kurimoto who is a graduate student in Kyoto Sangyo University,
giving us a nice proof of the feedback game on the triangular grid $T_5$.

\section*{Appendix}

Since it is clear that 
Alice wins the feedback game on $T_1$,
we shall prove that Alice wins the game on $T_5$
with starting vertex $s = v^0_0$.

Without loss of generality, Alice first moves the token to $v^1_0$,
and then Bob moves it to either (i) $v^2_0$ or (ii) $v^2_1$.
In the case (i) (resp., (ii)), 
Alice next moves the token to $v^3_0$ (resp., $v^2_2$).
For the case (i),
as shown the left of Figure~\ref{TGG5ev},
we can construct a ``good" bipartite subgraph for Alice;
note that Alice can move the token to a black vertex in the remaining game
as in the argument of the even kernel.
Therefore, Alice can finally move the token back to the starting vertex $s$.

\begin{figure}[htb]
\centering
\unitlength 0.1in
\begin{picture}( 47.0000, 22.3000)( 13.5000,-32.5000)
%
\special{pn 8}%
\special{sh 1.000}%
\special{ar 2400 1196 50 50  0.0000000 6.2831853}%
%
\special{pn 8}%
\special{sh 1.000}%
\special{ar 2000 1996 50 50  0.0000000 6.2831853}%
%
\special{pn 8}%
\special{sh 1.000}%
\special{ar 1800 2396 50 50  0.0000000 6.2831853}%
%
\special{pn 8}%
\special{sh 1.000}%
\special{ar 1400 3196 50 50  0.0000000 6.2831853}%
%
\special{pn 8}%
\special{sh 1.000}%
\special{ar 2200 3196 50 50  0.0000000 6.2831853}%
%
\special{pn 8}%
\special{sh 1.000}%
\special{ar 3400 3196 50 50  0.0000000 6.2831853}%
%
\special{pn 8}%
\special{sh 1.000}%
\special{ar 2800 1996 50 50  0.0000000 6.2831853}%
%
\special{pn 8}%
\special{sh 1.000}%
\special{ar 2800 2796 50 50  0.0000000 6.2831853}%
\put(25.4000,-11.0500){\makebox(0,0){$s$}}%
%
\special{pn 8}%
\special{sh 1.000}%
\special{ar 5000 1196 50 50  0.0000000 6.2831853}%
%
\special{pn 8}%
\special{sh 1.000}%
\special{ar 4800 3196 50 50  0.0000000 6.2831853}%
%
\special{pn 8}%
\special{sh 1.000}%
\special{ar 5000 1996 50 50  0.0000000 6.2831853}%
%
\special{pn 8}%
\special{sh 1.000}%
\special{ar 5400 2796 50 50  0.0000000 6.2831853}%
\put(51.4000,-11.0500){\makebox(0,0){$s$}}%
%
\special{pn 8}%
\special{pa 1800 2400}%
\special{pa 1400 3200}%
\special{fp}%
%
\special{pn 8}%
\special{pa 1400 3200}%
\special{pa 2200 3200}%
\special{fp}%
%
\special{pn 8}%
\special{pa 2200 3200}%
\special{pa 1800 2400}%
\special{fp}%
%
\special{pn 8}%
\special{pa 2200 3200}%
\special{pa 2600 3200}%
\special{fp}%
%
\special{pn 8}%
\special{pa 2600 3200}%
\special{pa 2800 2800}%
\special{fp}%
%
\special{pn 8}%
\special{pa 2800 2800}%
\special{pa 2400 2800}%
\special{fp}%
%
\special{pn 8}%
\special{pa 2400 2800}%
\special{pa 2200 3200}%
\special{fp}%
%
\special{pn 8}%
\special{pa 3000 3200}%
\special{pa 2800 2800}%
\special{fp}%
%
\special{pn 8}%
\special{pa 2800 2800}%
\special{pa 3200 2800}%
\special{fp}%
%
\special{pn 8}%
\special{pa 3200 2800}%
\special{pa 3400 3200}%
\special{fp}%
%
\special{pn 8}%
\special{pa 3400 3200}%
\special{pa 3000 3200}%
\special{fp}%
%
\special{pn 8}%
\special{pa 2800 2800}%
\special{pa 2600 2400}%
\special{fp}%
%
\special{pn 8}%
\special{pa 2600 2400}%
\special{pa 2800 2000}%
\special{fp}%
%
\special{pn 8}%
\special{pa 2800 2000}%
\special{pa 3000 2400}%
\special{fp}%
%
\special{pn 8}%
\special{pa 3000 2400}%
\special{pa 2800 2800}%
\special{fp}%
%
\special{pn 8}%
\special{pa 2800 2000}%
\special{pa 2000 2000}%
\special{fp}%
%
\special{pn 8}%
\special{pa 2800 2000}%
\special{pa 2400 1200}%
\special{fp}%
%
\special{pn 8}%
\special{pa 1800 2400}%
\special{pa 2200 2400}%
\special{fp}%
%
\special{pn 8}%
\special{pa 2200 2400}%
\special{pa 2000 2000}%
\special{fp}%
%
\special{pn 8}%
\special{pa 2400 1200}%
\special{pa 2200 1600}%
\special{dt 0.045}%
\special{sh 1}%
\special{pa 2200 1600}%
\special{pa 2248 1550}%
\special{pa 2224 1552}%
\special{pa 2212 1532}%
\special{pa 2200 1600}%
\special{fp}%
%
\special{pn 8}%
\special{pa 2200 1600}%
\special{pa 2020 1950}%
\special{dt 0.045}%
\special{sh 1}%
\special{pa 2020 1950}%
\special{pa 2068 1900}%
\special{pa 2044 1904}%
\special{pa 2034 1882}%
\special{pa 2020 1950}%
\special{fp}%
%
\special{pn 8}%
\special{pa 1990 2010}%
\special{pa 1820 2350}%
\special{dt 0.045}%
\special{sh 1}%
\special{pa 1820 2350}%
\special{pa 1868 2300}%
\special{pa 1844 2302}%
\special{pa 1832 2282}%
\special{pa 1820 2350}%
\special{fp}%
%
\special{pn 8}%
\special{sh 0}%
\special{ar 2200 2400 50 50  0.0000000 6.2831853}%
%
\special{pn 8}%
\special{sh 0}%
\special{ar 2600 2400 50 50  0.0000000 6.2831853}%
%
\special{pn 8}%
\special{sh 0}%
\special{ar 3000 2400 50 50  0.0000000 6.2831853}%
%
\special{pn 8}%
\special{sh 0}%
\special{ar 3200 2800 50 50  0.0000000 6.2831853}%
%
\special{pn 8}%
\special{sh 0}%
\special{ar 3000 3200 50 50  0.0000000 6.2831853}%
%
\special{pn 8}%
\special{sh 0}%
\special{ar 2600 3200 50 50  0.0000000 6.2831853}%
%
\special{pn 8}%
\special{sh 0}%
\special{ar 2400 2800 50 50  0.0000000 6.2831853}%
%
\special{pn 8}%
\special{sh 0}%
\special{ar 2000 2800 50 50  0.0000000 6.2831853}%
%
\special{pn 8}%
\special{sh 0}%
\special{ar 1600 2800 50 50  0.0000000 6.2831853}%
%
\special{pn 8}%
\special{sh 0}%
\special{ar 1800 3200 50 50  0.0000000 6.2831853}%
\put(21.4000,-14.0000){\makebox(0,0){$A$}}%
%
\special{pn 8}%
\special{sh 0}%
\special{ar 2600 1600 50 50  0.0000000 6.2831853}%
\put(19.9000,-17.2000){\makebox(0,0){$B$}}%
\put(17.8000,-21.5000){\makebox(0,0){$A$}}%
%
\special{pn 8}%
\special{pa 5000 1200}%
\special{pa 5200 1600}%
\special{fp}%
%
\special{pn 8}%
\special{pa 5200 1600}%
\special{pa 5000 2000}%
\special{fp}%
%
\special{pn 8}%
\special{pa 5000 2000}%
\special{pa 4800 2400}%
\special{fp}%
\special{pa 4800 2400}%
\special{pa 4800 2400}%
\special{fp}%
%
\special{pn 8}%
\special{pa 4800 2400}%
\special{pa 4400 2400}%
\special{fp}%
%
\special{pn 8}%
\special{pa 4400 2400}%
\special{pa 4600 2000}%
\special{fp}%
%
\special{pn 8}%
\special{pa 4600 2000}%
\special{pa 5000 2000}%
\special{fp}%
%
\special{pn 8}%
\special{pa 4400 2400}%
\special{pa 4000 3200}%
\special{fp}%
%
\special{pn 8}%
\special{pa 4000 3200}%
\special{pa 4800 3200}%
\special{fp}%
%
\special{pn 8}%
\special{pa 4800 3200}%
\special{pa 4400 2400}%
\special{fp}%
%
\special{pn 8}%
\special{pa 4800 3200}%
\special{pa 5200 3200}%
\special{fp}%
%
\special{pn 8}%
\special{pa 5200 3200}%
\special{pa 5400 2800}%
\special{fp}%
%
\special{pn 8}%
\special{pa 5400 2800}%
\special{pa 5000 2800}%
\special{fp}%
%
\special{pn 8}%
\special{pa 5000 2800}%
\special{pa 4800 3200}%
\special{fp}%
%
\special{pn 8}%
\special{pa 5400 2800}%
\special{pa 5600 3200}%
\special{fp}%
%
\special{pn 8}%
\special{pa 5600 3200}%
\special{pa 6000 3200}%
\special{fp}%
%
\special{pn 8}%
\special{pa 6000 3200}%
\special{pa 5800 2800}%
\special{fp}%
%
\special{pn 8}%
\special{pa 5800 2800}%
\special{pa 5400 2800}%
\special{fp}%
%
\special{pn 8}%
\special{pa 5400 2800}%
\special{pa 5200 2400}%
\special{fp}%
%
\special{pn 8}%
\special{sh 1.000}%
\special{ar 4000 3200 50 50  0.0000000 6.2831853}%
%
\special{pn 8}%
\special{sh 1.000}%
\special{ar 4400 2400 50 50  0.0000000 6.2831853}%
%
\special{pn 8}%
\special{sh 1.000}%
\special{ar 6000 3200 50 50  0.0000000 6.2831853}%
%
\special{pn 8}%
\special{sh 0}%
\special{ar 4800 2400 50 50  0.0000000 6.2831853}%
%
\special{pn 8}%
\special{sh 0}%
\special{ar 4600 2000 50 50  0.0000000 6.2831853}%
%
\special{pn 8}%
\special{sh 0}%
\special{ar 4600 2800 50 50  0.0000000 6.2831853}%
%
\special{pn 8}%
\special{sh 0}%
\special{ar 4200 2800 50 50  0.0000000 6.2831853}%
%
\special{pn 8}%
\special{sh 0}%
\special{ar 5000 2800 50 50  0.0000000 6.2831853}%
%
\special{pn 8}%
\special{sh 0}%
\special{ar 5800 2800 50 50  0.0000000 6.2831853}%
%
\special{pn 8}%
\special{sh 0}%
\special{ar 5600 3200 50 50  0.0000000 6.2831853}%
%
\special{pn 8}%
\special{sh 0}%
\special{ar 5200 3200 50 50  0.0000000 6.2831853}%
%
\special{pn 8}%
\special{sh 0}%
\special{ar 5200 1600 50 50  0.0000000 6.2831853}%
%
\special{pn 8}%
\special{pa 5006 1200}%
\special{pa 4806 1600}%
\special{dt 0.045}%
\special{sh 1}%
\special{pa 4806 1600}%
\special{pa 4854 1550}%
\special{pa 4830 1552}%
\special{pa 4818 1532}%
\special{pa 4806 1600}%
\special{fp}%
%
\special{pn 8}%
\special{pa 4800 1600}%
\special{pa 4980 1950}%
\special{dt 0.045}%
\special{sh 1}%
\special{pa 4980 1950}%
\special{pa 4968 1882}%
\special{pa 4956 1904}%
\special{pa 4932 1900}%
\special{pa 4980 1950}%
\special{fp}%
\put(47.4500,-14.0000){\makebox(0,0){$A$}}%
\put(48.0000,-18.3000){\makebox(0,0){$B$}}%
%
\special{pn 8}%
\special{pa 5020 1996}%
\special{pa 5370 1996}%
\special{dt 0.045}%
\special{sh 1}%
\special{pa 5370 1996}%
\special{pa 5304 1976}%
\special{pa 5318 1996}%
\special{pa 5304 2016}%
\special{pa 5370 1996}%
\special{fp}%
\put(52.2000,-19.0500){\makebox(0,0){$A$}}%
\put(15.4000,-13.1000){\makebox(0,0){(i)}}%
\put(41.4000,-13.1000){\makebox(0,0){(ii)-(a)}}%
%
\special{pn 8}%
\special{pa 5200 2400}%
\special{pa 5000 2000}%
\special{fp}%
%
\special{pn 8}%
\special{sh 0}%
\special{ar 5200 2400 50 50  0.0000000 6.2831853}%
%
\special{pn 8}%
\special{sh 0}%
\special{ar 2400 2000 50 50  0.0000000 6.2831853}%
%
\special{pn 8}%
\special{pa 5400 2000}%
\special{pa 5600 2400}%
\special{dt 0.045}%
\special{sh 1}%
\special{pa 5600 2400}%
\special{pa 5588 2332}%
\special{pa 5576 2352}%
\special{pa 5552 2350}%
\special{pa 5600 2400}%
\special{fp}%
%
\special{pn 8}%
\special{pa 5600 2410}%
\special{pa 5430 2760}%
\special{dt 0.045}%
\special{sh 1}%
\special{pa 5430 2760}%
\special{pa 5478 2710}%
\special{pa 5454 2712}%
\special{pa 5442 2692}%
\special{pa 5430 2760}%
\special{fp}%
\put(56.4000,-26.0000){\makebox(0,0){$A$}}%
\put(56.1000,-21.7000){\makebox(0,0){$B$}}%
\end{picture}%

\caption{Good bipartite subgraphs for the cases (i) and (ii)-(a)}
\label{TGG5ev}
\end{figure}
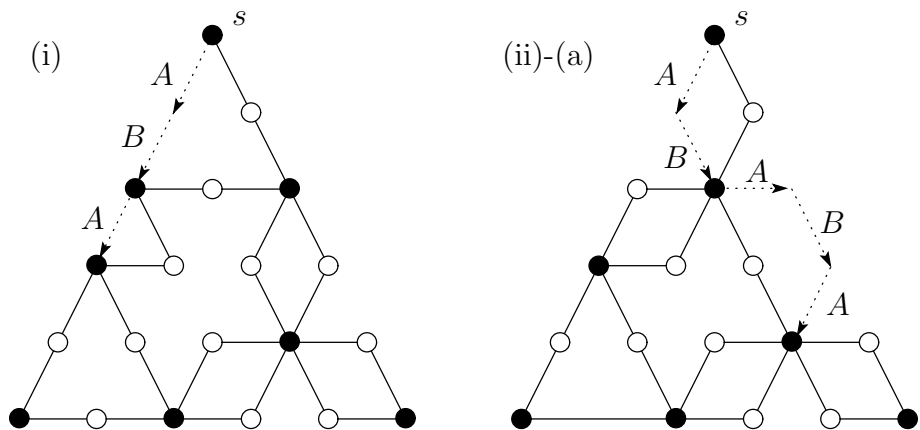

We divide the case (ii) to two subcases;
(a) Bob moves the token to $v^3_3$, or
(b) he moves the token to $v^3_2$.
In the former case, 
as shown in the right of Figure~\ref{TGG5ev},
Alice wins the game similarly to the case (i).
In the latter case, Alice moves the token to $v^4_3$.
If Bob moves the token to $v^4_4$ or $v^5_4$,
then Alice can move it back to $v^4_3$ along a 4-cycle $v^4_3v^4_4v^5_5v^5_4$.
Moreover, if Bob moves the token to $v^3_3$,
then Alice can win the game by moving it to $v^2_2$ 
(since Bob must move it to $v^1_1$ in his next move).
Such a vertex $u$ (to which a player loses the game by moving the token) is called a {\em dead vertex}
(see Figure~\ref{TGG5ev2}; a dead vertex is marked by `d' and colored by gray).
Thus Bob moves the token from $v^4_3$ to either (1) $v^5_3$ or (2) $v^4_2$.

\begin{figure}[htb]
\centering
\unitlength 0.1in
\begin{picture}( 21.0000, 22.3000)( 21.5000,-32.5000)
%
\special{pn 8}%
\special{sh 1.000}%
\special{ar 3200 1196 50 50  0.0000000 6.2831853}%
%
\special{pn 8}%
\special{sh 1.000}%
\special{ar 3000 3196 50 50  0.0000000 6.2831853}%
%
\special{pn 8}%
\special{sh 1.000}%
\special{ar 3200 1996 50 50  0.0000000 6.2831853}%
%
\special{pn 8}%
\special{sh 1.000}%
\special{ar 3600 2800 50 50  0.0000000 6.2831853}%
\put(33.4000,-11.0500){\makebox(0,0){$s$}}%
%
\special{pn 8}%
\special{pa 3600 2800}%
\special{pa 3800 3200}%
\special{fp}%
%
\special{pn 8}%
\special{pa 3800 3200}%
\special{pa 4200 3200}%
\special{fp}%
%
\special{pn 8}%
\special{pa 4200 3200}%
\special{pa 4000 2800}%
\special{fp}%
%
\special{pn 8}%
\special{pa 4000 2800}%
\special{pa 3600 2800}%
\special{fp}%
%
\special{pn 8}%
\special{sh 1.000}%
\special{ar 2200 3200 50 50  0.0000000 6.2831853}%
%
\special{pn 8}%
\special{sh 1.000}%
\special{ar 2600 2400 50 50  0.0000000 6.2831853}%
%
\special{pn 8}%
\special{sh 1.000}%
\special{ar 4200 3200 50 50  0.0000000 6.2831853}%
%
\special{pn 8}%
\special{sh 1.000}%
\special{ar 3000 2400 50 50  0.0000000 6.2831853}%
%
\special{pn 8}%
\special{sh 1.000}%
\special{ar 2800 2800 50 50  0.0000000 6.2831853}%
%
\special{pn 8}%
\special{sh 1.000}%
\special{ar 2400 2800 50 50  0.0000000 6.2831853}%
%
\special{pn 8}%
\special{sh 0}%
\special{ar 4000 2800 50 50  0.0000000 6.2831853}%
%
\special{pn 8}%
\special{sh 0}%
\special{ar 3800 3200 50 50  0.0000000 6.2831853}%
%
\special{pn 8}%
\special{pa 3206 1200}%
\special{pa 3040 1540}%
\special{dt 0.045}%
\special{sh 1}%
\special{pa 3040 1540}%
\special{pa 3088 1490}%
\special{pa 3064 1492}%
\special{pa 3052 1472}%
\special{pa 3040 1540}%
\special{fp}%
%
\special{pn 8}%
\special{pa 3000 1600}%
\special{pa 3180 1950}%
\special{dt 0.045}%
\special{sh 1}%
\special{pa 3180 1950}%
\special{pa 3168 1882}%
\special{pa 3156 1904}%
\special{pa 3132 1900}%
\special{pa 3180 1950}%
\special{fp}%
\put(29.4500,-14.0000){\makebox(0,0){$A$}}%
\put(30.0000,-18.3000){\makebox(0,0){$B$}}%
%
\special{pn 8}%
\special{pa 3200 2000}%
\special{pa 3550 2000}%
\special{dt 0.045}%
\special{sh 1}%
\special{pa 3550 2000}%
\special{pa 3484 1980}%
\special{pa 3498 2000}%
\special{pa 3484 2020}%
\special{pa 3550 2000}%
\special{fp}%
\put(34.2000,-19.0500){\makebox(0,0){$A$}}%
%
\special{pn 8}%
\special{pa 3600 2000}%
\special{pa 3400 2400}%
\special{dt 0.045}%
\special{sh 1}%
\special{pa 3400 2400}%
\special{pa 3448 2350}%
\special{pa 3424 2352}%
\special{pa 3412 2332}%
\special{pa 3400 2400}%
\special{fp}%
%
\special{pn 8}%
\special{pa 3410 2410}%
\special{pa 3580 2760}%
\special{dt 0.045}%
\special{sh 1}%
\special{pa 3580 2760}%
\special{pa 3570 2692}%
\special{pa 3558 2712}%
\special{pa 3534 2710}%
\special{pa 3580 2760}%
\special{fp}%
\put(34.1000,-21.7000){\makebox(0,0){$B$}}%
%
\special{pn 8}%
\special{sh 1.000}%
\special{ar 3600 2000 50 50  0.0000000 6.2831853}%
%
\special{pn 8}%
\special{sh 1.000}%
\special{ar 3000 1600 50 50  0.0000000 6.2831853}%
%
\special{pn 8}%
\special{sh 1.000}%
\special{ar 2600 3200 50 50  0.0000000 6.2831853}%
%
\special{pn 8}%
\special{pa 3600 2800}%
\special{pa 3270 2800}%
\special{fp}%
\special{sh 1}%
\special{pa 3270 2800}%
\special{pa 3338 2820}%
\special{pa 3324 2800}%
\special{pa 3338 2780}%
\special{pa 3270 2800}%
\special{fp}%
%
\special{pn 8}%
\special{sh 0.400}%
\special{ar 3800 2400 50 50  0.0000000 6.2831853}%
%
\special{pn 8}%
\special{sh 0.400}%
\special{ar 3400 1600 50 50  0.0000000 6.2831853}%
%
\special{pn 8}%
\special{sh 0.400}%
\special{ar 2800 2000 50 50  0.0000000 6.2831853}%
\put(36.0000,-25.6000){\makebox(0,0){$A$}}%
\put(39.0000,-23.1000){\makebox(0,0){d}}%
\put(35.1000,-15.2000){\makebox(0,0){d}}%
%
\special{pn 8}%
\special{sh 1.000}%
\special{ar 3400 3200 50 50  0.0000000 6.2831853}%
%
\special{pn 8}%
\special{sh 1.000}%
\special{ar 3200 2800 50 50  0.0000000 6.2831853}%
\put(26.8000,-19.3000){\makebox(0,0){d}}%
%
\special{pn 8}%
\special{pa 3590 2830}%
\special{pa 3440 3130}%
\special{fp}%
\special{sh 1}%
\special{pa 3440 3130}%
\special{pa 3488 3080}%
\special{pa 3464 3082}%
\special{pa 3452 3062}%
\special{pa 3440 3130}%
\special{fp}%
\put(34.1000,-27.0000){\makebox(0,0){(2)}}%
\put(33.7000,-29.8000){\makebox(0,0){(1)}}%
\end{picture}%

\caption{The case (ii)-(b)}
\label{TGG5ev2}
\end{figure}
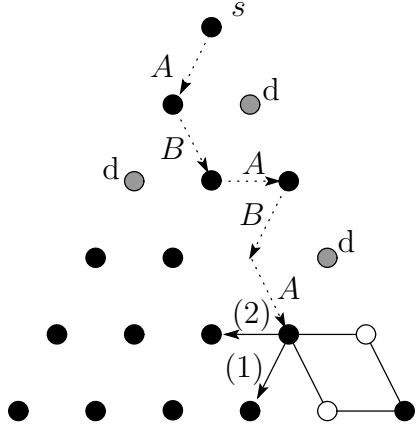

\medskip
\noindent
The proof of the case (1):
Alice moves the token to $v^5_2$.
By observing that $v^4_2$ and $v^5_1$ are dead,
Bob must move the token to $v^4_1$,
and then Alice moves it to $v^3_0$.
Since $v^4_0$ is also dead now, Bob moves the token to $v^3_1$.
Therefore, Alice can force Bob to move the token to a dead vertex, by moving it from $v^3_1$ to $v^4_1$.

\medskip
\noindent
The proof of the case (2):
Alice moves the token to $v^5_2$.
Similarly to the previous case, 
Bob must move the token to $v^4_1$ since $v^5_1$ and $v^5_3$ are dead.
After that, Alice can force Bob to move the token to a dead vertex
by using one of the following two patterns
($\to_A$ (resp., $\to_B$) means a move of the token by Alice (resp., Bob)):
\begin{enumerate}
\item $v^4_1 \to_A v^4_2 \to_B v^3_1 \to_A v^2_1 \to_B v^3_2 \to_A v^4_2$
\item $v^4_1 \to_A v^4_2 \to_B v^3_2 \to_A v^2_1 \to_B v^3_1 \to_A v^4_2$
\end{enumerate}

Therefore, Alice wins the feedback game on the triangular grid graph $T_5$.

\end{document}